\newtheorem{theorem}{Theorem}
\newtheorem{corollary}[theorem]{Corollary}
\newtheorem{lemma}[theorem]{Lemma}
\newtheorem{definition}[theorem]{Definition}
\newtheorem{proposition}[theorem]{Proposition}
\newcommand{\NP}{\textsf{NP}}
\newcommand{\PLS}{\textsf{PLS}}
\newcommand{\maxweightpartition}{\textsc{MaximumWeightedDigraphPartition}\xspace}
\newcommand{\mwdp}{\textsc{MWDP}\xspace}
\newcommand{\one}{\textsf{one}\xspace}
\newcommand{\two}{\textsf{two}\xspace}
\newcommand{\sbf}{\ensuremath{\mathbf{s}}\xspace}
\newcommand{\pot}{\Phi}
\newcommand{\Real}{\mathbb{R}}
\newcommand{\wcal}{\ensuremath{\mathcal{W}}\xspace}
\newcommand{\fcal}{\ensuremath{\mathcal{F}}\xspace}
\newcommand{\acal}{\ensuremath{\mathcal{A}}\xspace}
\newcommand{\pf}{\textbf{Proof:} }
\newcommand{\2}{\vspace{0.2cm}}
\title{Complexity of Efficient Outcomes in Binary-Action Polymatrix Games with Implications for Coordination Problems}
\author{
Argyrios Deligkas$^1$
\and
Eduard Eiben$^1$\and
Gregory Gutin$^1$\and
Philip R.\ Neary$^1$ \And
Anders Yeo$^{2,3}$
\affiliations
$^1$Royal Holloway, University of London\\
$^2$University of Southern Denmark\\
$^3$University of Johannesburg
\emails
\{argyrios.deligkas, eduard.eiben, g.gutin, philip.neary\}@rhul.ac.uk,
andersyeo@gmail.com
}
\begin{document}

\maketitle

\begin{abstract}
We investigate the difficulty of finding economically efficient solutions to coordination problems on graphs.
Our work focuses on two forms of coordination problem:\ pure-coordination games and anti-coordination games.
We consider three objectives in the context of simple binary-action polymatrix games:\ (i) maximizing welfare, (ii) maximizing potential, and (iii) finding a welfare-maximizing Nash equilibrium.
We introduce an intermediate, new graph-partition problem, termed \maxweightpartition, which is of independent interest, and we provide a  complexity dichotomy for it.
This dichotomy, among other results, provides as a corollary a dichotomy for Objective (i) for {\em general} binary-action polymatrix games.
In addition, it reveals that the complexity of achieving these objectives varies depending on the form of the coordination problem.
Specifically, Objectives (i) and (ii) can be efficiently solved in pure-coordination games, but are \NP-hard in anti-coordination games.
Finally, we show that objective (iii) is \NP-hard even for simple non-trivial pure-coordination games.
\end{abstract}

\section{Introduction}
\label{sec:intro}

A coordination problem is one wherein all parties can realize mutual gains, but only by making mutually consistent decisions.
Such problems can range from small-scale issues, such as arranging where to meet with a friend \cite{Schelling:1978:}, to larger institutional-level issues, such as ensuring the efficient functioning of organizations \cite{Young:2001:}.

Broadly speaking there are two kinds of coordination problems:\ ``{\em pure-coordination games}'', in which it is beneficial to act the same as others, and ``{\em anti-coordination games}'', where it pays to differentiate your behavior from that of others.
Examples of the former include using euros as currency because others use euros and driving on the same side of the road as everyone else.
Examples of the latter include not supplying a public good if a neighbor is already doing so, and choosing a product that is different from the mainstream.

We zoom in on the payoff-structure of coordination problems and focus on the following question.
\begin{quote}
    {\em Are there characteristics inherent to certain coordination problems that render them easier to resolve?}
\end{quote}


In this paper, we adopt the most fundamental framework of multi-player coordination, and model the two above-mentioned environments as binary-action {\em polymatrix games}~\cite{Janovskaya:1968:}.
The expressive power of polymatrix games has made them the ``go-to'' method to model problems ranging from coordination games on graphs~\cite{apt2017coordination,apt2022coordination,rahn2015efficient} and additively separable hedonic games~\cite{bogomolnaia2002stability}, to building-blocks for hardness reductions~\cite{CDT,DGP,DFHM22-focs,Rubinstein18-polymatrix} and applications in protein-function prediction and semi-supervised learning~\cite{elezi2018transductive-gtg,vascon2020protein-gtg}.

Formally, a polymatrix game is represented by a graph, where every vertex corresponds to a player, and every edge corresponds to a two-player game that is played between the adjacent vertices.
A player's payoff is the sum of payoffs earned from interacting with every player in their neighborhood, where the same action must be used with each. 
The graph structure captures the dependencies of the players, while the payoff structure of the two-player games models the {\em nature} of each pairwise interaction that can, in theory, vary arbitrarily:\ from settings of pure competition to those with perfectly aligned interests.
In pure-coordination polymatrix games, every two-player game has as pure Nash equilibria the two strategy profiles where the players choose the {\em same} action.
On the other hand, in anti-coordination polymatrix games, every two-player game has as pure Nash equilibria the two strategy profiles where the players choose {\em different} actions. See Fig.~\ref{fig:example} for $2\times2$ example of each.


\begin{figure}
\hspace*{\fill}\begin{game}{2}{2}[${i}$][${j}$]
& $\one$ & $\two$\\
$\one$ & $2,1$ & $0,0$\\
$\two$ &$0,0$ & $1, 2$
\end{game}\hspace*{\fill}
\begin{game}{2}{2}[${i}$][${j}$]
& $\one$ & $\two$\\
$\one$ & $9,9$ & $2,10$\\
$\two$ & $10,2$ & $0,0$
\end{game}\hspace*{\fill}
\caption[]{Examples of coordination games. Left: Pure-coordination game; Right: anti-coordination game.}
\label{fig:example}
\end{figure}

\subsection{Our Contribution}
We provide a comprehensive study for the complexity-landscape of economically-efficient outcomes for binary-action pure-coordination and anti-coordination polymatrix games.
We focus on the computational complexity of the following objectives for these classes of games.
\begin{description}
    \item[Objective (i):] maximize social welfare.
    \item[Objective (ii):] maximize a potential function; when the game is pairwise-potential.
    \item[Objective (iii):] find a welfare-optimal Nash equilibrium. 
\end{description}


We will show that Objectives (i) and (ii) are special cases of a novel graph-partition problem, that we call \maxweightpartition (\mwdp for short).
This problem is of independent interest, since it includes many well-known graph-theoretic problems like maximum and minimum cut~\cite{korte2011combinatorial} as special cases.
Our first technical result is a complexity dichotomy for \mwdp (Theorem~\ref{thm:dichotomy}).
This result yields as an immediate corollary a dichotomy for social-welfare outcomes for {\em general} binary-action polymatrix games, a result which, to the best of our knowledge, was not known before.

The dichotomy for \mwdp reveals, again as immediate corollary, that anti-coordination games are significantly harder than pure-coordination games. 
More specifically, it shows that Objective (i) can be solved in polynomial time for pure-coordination games, while it is \NP-hard in the worst case for anti-coordination games. 
Although the \NP-hardness for anti-coordination games was implied by \citeauthor{CD11-Cai-Daskalakis}~\shortcite{CD11-Cai-Daskalakis}, our result provides a more fine grained resolution for this problem, since it identifies a much larger class of games for which the problem is intractable.

Another corollary of our dichotomy is the tractability of Objective (ii) for pairwise-potential pure-coordination games, i.e., pure-coordination polymatrix games where every two-player game admits a potential. 
This result is in stark contrast to pairwise-potential anti-coordination games where it is known that not only is it \NP-hard to find a potential-maximizing outcome, but it is \PLS-complete, i.e., intractable, to find {\em any} local maximum of the potential function~\cite{CD11-Cai-Daskalakis}.

Given the positive results for Objectives (i) and (ii) for pure-coordination games, one might wonder if Objective (iii) is tractable as well for this type of game.
Observe that for these games it is easy to find an arbitrary Nash equilibrium: every such game possesses at least two ``trivial'' Nash equilibria wherein all players choose the same action.
In addition, every potential-maximizing outcome corresponds to a Nash equilibrium too.
Unfortunately, as our second technical result shows, Objective (iii) becomes immediately intractable for almost every potential pure-coordination game where one of the two trivial Nash equilibria is not an obviously-optimal solution.
In fact, we provide a dichotomy for Objective (iii) for the arguably most fundamental subclass of pure-coordination games, known as {\em threshold games}~\cite{NearyNewton:2017:JMID}.

\subsection{Further Related Work}

Our work relates to several areas of study, including the significance of coordination problems and the role of maximizing potential in economic contexts, and issues of computational complexity in algorithmic game theory.

\paragraph{Coordination problems in economics.}
Since they were introduced in philosopher David Lewis' study on convention and language \cite{Lewis:1969:}, coordination games have been one of the modeling tool of choice for economists, being applied to the adoption of technological standards \cite{KatzShapiro:1985:AER,FarrellSaloner:1985:RJE,Arthur:1989:EJ}, the setting of macroeconomic policy \cite{CooperJohn:1988:QJE}, the study of bank runs \cite{DiamondDybvig:1983:JPE}, etc.


Pure coordination polymatrix games in particular have received lots of attention from game theorists.
A vast literature considers equilibrium selection in the case where every two-player game is a ``stag-hunt'' and finds that, for almost all network structures, uniform adoption of the ``safe'' risk-dominant action is the long run prediction \cite{KandoriMailath:1993:E,FosterYoung:1990:TPB,Young:1993:E,Ellison:1993:E,Morris:2000:RES,Peski:2010:JET}.
The conclusion of this literature is that even if there is a universally agreed upon optimal equilibrium, successfully achieving that outcome is far from assured.


In the above papers, all players are in a sense ``the same'', since a common $2\times2$ game occurs along every edge of the graph.
Pure coordination polymatrix games with heterogeneous preferences, in particular the {\em language game}~\cite{Neary:2012:GEB} and the {\em threshold model}~\cite{NearyNewton:2017:JMID}, will play an important role in our analysis.
We defer a detailed description of these games to Section~\ref{sec:best-NE}.

Though of similar importance, anti-coordination polymatrix games, see \citeauthor{Bramoulle:2007:GEB}~\shortcite{Bramoulle:2007:GEB}, have certainly received less attention.
Other network games wherein action choice is a strategic substitute include those of public good provision \cite{BramoulleKranton:2007:JET,GaleottiGoyal:2010:RES}.

Coordination problems are of such economic importance that countless experiments have been performed to try and ascertain how people attempt to coordinate on optimal outcomes and when they will be successful.
\citeauthor{Van-HuyckBattalio:1990:AER}~\shortcite{Van-HuyckBattalio:1990:AER} find that smaller groups successfully coordinate far more frequently than larger groups.
\citeauthor{KearnsSuri:2006:S}~\shortcite{KearnsSuri:2006:S} and \citeauthor{McCubbinsPaturi:2009:}~\shortcite{McCubbinsPaturi:2009:} consider coloring problems (i.e., anti-coordination games) on a variety of different network structures.
Both conclude that certain network structures, in particular ``small worlds'' networks, are easier for subjects to color successfully.

\paragraph{Potential-maximizing equilibria.}
Potential games were first introduced by \citeauthor{ShapleyMonderer:1996:GEB}~\shortcite{ShapleyMonderer:1996:GEB} and have received significant attention.
Potential-maximizing Nash equilibria are desirable since they are stochastically-stable \cite{Blume:1993:GEB}, can be uniquely absorbing \cite{HofbauerSorger:1999:JET}, select risk-dominant outcomes in games played on random networks \cite{Peski:arxiv}, and are robust in games with incomplete information \cite{Ui:2001:E}.
Potential games appear frequently in applied work, for example to study the effects of price discrimination policies in oligopolies \cite{ArmstrongVickers:2001:RAND}, the impact of uncertainty on technology adoption \cite{OstrovskySchwarz:2005:RAND}, and issues of collective action \cite{MyattWallace:2009:EJ}.
Many of the classic models in applied game theory are potential games including the Cournot model and congestion games~\cite{Rosenthal:1973:IJGT}.

%


\paragraph{Nash equilibria in polymatrix games.}
Computational aspects of (approximate) Nash equilibria in polymatrix games received a lot of attention over the years, starting from classical results from fifty years ago~\cite{eaves1973polymatrix,howson1972equilibria,howson1974bayesian,miller1991copositive}, to more recent results~\cite{CDT,DGP,Rubinstein18-polymatrix}, and the very recent dichotomy for the complexity of finding {\em any} approximate well-supported Nash equilibrium in general binary-action polymatrix games~\cite{DFHM22-focs}. 
\citeauthor{DBLP:conf/innovations/BoodaghiansKM20}~\shortcite{DBLP:conf/innovations/BoodaghiansKM20} studied the smoothed complexity on coordination polymatrix games, while~\citeauthor{aloisio2021distance}~\shortcite{aloisio2021distance} studied an extension of polymatrix games. The complexity of constrained Nash equilibria for general polymatrix games were studied in~\cite{DeligkasFS17-constrained}, while~\cite{barman2015approximating,DeligkasFS20-tree-polymatrix,EGG-elkind2006nash,ortiz2017tractable} study games with tree-underlying structure. Finally,~\cite{DeligkasFIS16-experimental} provides an experimental comparison of various algorithms for polymatrix games.

\section{Preliminaries}
\label{sec:prelims}

Given a $2\times 2$ matrix $M$, we denote by $m_{11}$, $m_{12}$, $m_{21}$ and 
$m_{22}$ the entries of $M$, such that $M = \left[\begin{array}{cc}
m_{11} & m_{12} \\
m_{21} & m_{22} \\
\end{array}\right]$.
For any graph $G$, we denote $V(G)$ and $E(G)$ the sets of vertices and edges respectively; if the graph is directed, we use $A(G)$ instead to denote the set of arcs of the graph. 
We assume knowledge of standard notions of directed graphs~\cite{BJG}. 

An $n$-player binary-action \emph{polymatrix game} is defined by a graph $G$, where each vertex represents a player.
Each player $i \in V(G)$ has two actions called \one and \two. For each edge $ij \in E(G)$, there is a $2 \times 2$ two-player game $(\Pi^{ij}, \Pi^{ji})$, where matrix $\Pi^{ij}$ gives the {\em payoffs} that player $i$ obtains from their interaction with player $j$, and likewise matrix $\Pi^{ji}$ gives the payoffs  player $j$ gets with the interaction with player $i$. 

A pure \emph{strategy profile} $\sbf = (s_1, s_2, \dots, s_n)$ specifies an action for each of the players; we will use $S$ to denote the set of all strategy profiles.
It is convenient to think of $s_i=(1,0)^{T}$ when player $i$ chooses action \one and $s_i=(0,1)^{T}$ when they choose action \two.
For each strategy profile $\sbf \in S$, the {\em payoff} of player $i$ is 
$p_i(\sbf) := s_i^{T} \cdot \sum_{j \; :\; ij \in E(G)} \Pi^{ij} \cdot s_j$.
In other words, the payoff obtained by a player is the sum of the payoffs obtained from the interaction with every neighboring player, where the same action must be used with each.

We are interested in computing welfare-optimal strategy profiles and in finding pure strategy Nash equilibrium profiles (and in the combination of both concepts).

\begin{definition}
The {\em social welfare} of strategy profile $\sbf$ is $\wcal{(\sbf{})} := \sum_{i \in V} p_i(\sbf{})$.
\end{definition}

We denote by $\sbf_{-i}$ the \emph{partial strategy profile} consisting of the strategies of the $n-1$ players other than $i$.
That is, from the perspective of player $i$ using strategy $s_i$, the strategy profile $\sbf{}$ can be viewed as $(s_i, \sbf_{-i})$.

A strategy profile $\sbf{}^*$ is a pure {\em Nash equilibrium} if no player can strictly increase their payoff by unilaterally changing their strategy choice. 
Formally, $\sbf^*$ is a Nash equilibrium, if for every player $i$ and every $s_i \neq s^*_i$ it holds that $p_i(s^*_i,\sbf_{-i}^*) \geq p_i(s_i,\sbf_{-i}^*)$. 




\subsection{Classes of polymatrix games}

We are focused on two classes of binary-action polymatrix games that capture coordination problems.

\paragraph{\bf Pure coordination games.}
In a {\em pure-coordination polymatrix game}, the payoff of a player increases with the number of neighbors who choose the {\em same} action as them. 
Formally, a binary-action polymatrix game is pure-coordination if for every edge $ij \in E(G)$ the strategy profiles $(\one, \one)$ and $(\two, \two)$ are Nash equilibria for the two-player game $(\Pi^{ij}, \Pi^{ji})$.
Observe that every game in this class possesses at least two ``trivial'' Nash equilibria wherein all players choose the same action.

\paragraph{\bf Anti-coordination games.} 
In an {\em anti-coordination polymatrix game}, each player's payoff increases with the number of neighbors who choose a {\em different} action.
Formally, a binary-action polymatrix game is anti-coordination if for every edge $ij \in E$ the strategy profiles $(\one, \two)$ and $(\two, \one)$ are Nash equilibria for the two-player game $(\Pi^{ij}, \Pi^{ji})$.


\paragraph{\bf Potential games.}
A strategic game is a {\em potential} game \cite{ShapleyMonderer:1996:GEB} if the incentive of all players to change their strategy can be expressed using a single function called the potential function.
Potential games possess many desirable properties:\ pure strategy equilibria correspond to local optima of the potential function so the existence of a pure strategy equilibrium is assured.
Formally, a game is a potential game if there exists a function $\pot: S \to \Real$ such that for every player $i$, for all $\sbf_{-i}$, and all pairs of actions $s_i', s_i'' \in S$
\[
\pot(s_i', \sbf_{-i}) - \pot(s_i'', \sbf_{-i}) = p_i(s_i', \sbf_{-i}) - p_i(s_i'', \sbf_{-i}).
\]
We emphasize that the same function $\pot$ captures the change in payoff associated with a deviation for {\em every} player.

\paragraph{\bf Pairwise-potential polymatrix games.}
When every pairwise interaction, i.e., two-player game, of a polymatrix game is a potential game, then the polymatrix game inherits this property. In fact, the potential at any strategy profile is equal to the sum of the potentials along every edge of the graph $G$. In other words, if $\Phi^{uv}$ denotes the pairwise-potential function for the two-player game played between $u$ and $v$, then the function $\Phi(\sbf) := \sum_{uv}\Phi^{uv}(\sbf)$ is a potential function for the polymatrix game.



\section{\maxweightpartition}
\label{sec:maxweightpartition-prob}
 In this section we introduce the \maxweightpartition problem (\mwdp), and we provide a dichotomy for its complexity.
 We then show how Objectives (i) and (ii) in binary-action polymatrix games are special cases of \mwdp, and as such complexity dichotomies can be given for each of these issues in relation to polymatrix games.

\paragraph{\bf $\maxweightpartition(\fcal)$.}
Given a family \fcal of $2 \times 2$ matrices, an instance of $\mwdp(\fcal)$ is given by a tuple $(D, c, f)$, where:
\begin{itemize}
    \item $D$ is an oriented graph on $n$ vertices (that is, a directed graph without any $2$-cycle);
    \item $c: A(D) \rightarrow \mathbb{R}^+$ assigns positive weights to arcs; 
    \item $f: A(D) \rightarrow \fcal$, is an assignment of a matrix from the family of matrices $\fcal$ to each arc in $D$.
\end{itemize}
Given a partition, $P=(X_1,X_2)$ of $V(D)$, the weight of an arc $uv \in A(D)$ is defined as follows, where $M=f(uv)$ is the matrix in $\fcal$
assigned to the arc $uv$.
\begin{equation*}
w^P(uv) = \left\{
\begin{array}{rcl}
c(uv) \cdot m_{11} & \mbox{if} & u,v \in X_1 \\
c(uv) \cdot m_{22} & \mbox{if} & u,v \in X_2 \\
c(uv) \cdot m_{12} & \mbox{if} & u \in X_1 \mbox { and } v \in X_2 \\
c(uv) \cdot m_{21} & \mbox{if} & u \in X_2 \mbox { and } v \in X_1 \\
\end{array}
\right.
\end{equation*}
Given a partition $P$, the weight of $D$, denoted by $w^P(D)$, is defined as the sum of the weights on every arc $uv$.
That is, $w^P(D) = \sum_{a \in A(D)} w^P(a)$.  The goal is to find a partition $P$ that maximizes $w^P(D)$.


We would like to highlight that the orientation of an arc in an instance of $\mwdp(\fcal)$ is required to determine which vertex defines the row and which vertex defines the column of the assigned matrix.

\subsection{A dichotomy for the complexity of $\mwdp(\fcal)$}
Our main result shows that the tractability of solving \maxweightpartition depends on properties of the matrices in the family of matrices $\fcal$.
We now introduce three properties that a matrix $M \in \fcal$ may satisfy. 
\begin{itemize}
    \item {\bf Property (a):} $m_{11} + m_{22} \geq m_{12} + m_{21}$.
    \item {\bf Property (b):} $m_{11} = \max\{ m_{11}, m_{22}, m_{12}, m_{21} \}$.
    \item {\bf Property (c):} $m_{22} = \max\{ m_{11}, m_{22}, m_{12}, m_{21} \}$.
\end{itemize}

Using the three properties above, we present a dichotomy for the complexity of $\mwdp(\fcal)$ with respect to \fcal.

\begin{theorem}
\label{thm:dichotomy}
An instance $(D,c, f)$ of $\mwdp(\fcal)$ can be solved in polynomial time if one of the following holds.
\begin{enumerate}
    \item \label{main-case1} All matrices in \fcal satisfy Property (a).
    \item \label{main-case2} All matrices in \fcal satisfy Property (b).
    \item \label{main-case3} All matrices in \fcal satisfy Property (c).
\end{enumerate}
In every other case, $\mwdp(\fcal)$ is \NP-hard.
\end{theorem}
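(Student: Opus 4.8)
The plan is to treat the three tractable regimes separately and then establish hardness by a gadget reduction from \textsc{MaxCut}. For the tractable side, Cases~\ref{main-case2} and~\ref{main-case3} are immediate: if every matrix satisfies Property~(b) then placing all of $V(D)$ in $X_1$ makes each arc $uv$ contribute its largest possible weight $c(uv)\,m_{11}$, so this single partition is optimal; symmetrically, Property~(c) for all matrices makes $X_2=V(D)$ optimal. Case~\ref{main-case1} is the substantive tractable case. Encoding the partition by $x_v\in\{0,1\}$ (say $x_v=0$ for $v\in X_1$), the contribution of an arc $uv$ with matrix $M$ is the affine-plus-bilinear form $m_{11}+(m_{21}-m_{11})x_u+(m_{12}-m_{11})x_v+\alpha_M\,x_ux_v$, where $\alpha_M=m_{11}+m_{22}-m_{12}-m_{21}$. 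Property~(a) is exactly $\alpha_M\ge 0$, i.e.\ each pairwise term is supermodular; hence $w^P(D)$ is a supermodular function of $(x_v)_{v\in V(D)}$ and $-w^P(D)$ is submodular. I would therefore maximize $w^P(D)$ by submodular-function minimization, or, more efficiently, by the classical reduction of quadratic (``regular'') submodular pseudo-Boolean minimization to a minimum $s$--$t$ cut, which runs in polynomial time.

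For the hardness side I would reduce from (positively) weighted \textsc{MaxCut}. The core observation is that a single arc carrying a matrix $M_a$ that violates Property~(a) already supplies a negative bilinear coefficient $\alpha_{M_a}<0$, which is the ``reward-difference'' interaction needed for a cut. Concretely, writing $\lambda=-\alpha_{M_a}/2>0$, the combination $\alpha_{M_a}x_ux_v+\lambda x_u+\lambda x_v$ equals $\lambda\,[x_u\neq x_v]$ up to an additive constant; so if I can add to any vertex an arbitrary \emph{vertex potential} $t\,x_v$ of either sign, I can both cancel the stray linear terms $(m_{21}-m_{11})x_u$ and $(m_{12}-m_{11})x_v$ contributed by $M_a$ and install the two $+\lambda$ terms, thereby realizing a \textsc{MaxCut} edge of weight $\lambda$ between $u$ and $v$. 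Summing such gadgets over the edges of a \textsc{MaxCut} instance (after rescaling weights by $1/\lambda$) makes the optimum of the \mwdp instance equal to the maximum cut plus a fixed constant, so a polynomial-time solver for \mwdp would solve \textsc{MaxCut}. Hence everything reduces to a single \emph{implementation lemma}: under the hardness hypotheses, $\fcal$ can realize vertex potentials of both signs.

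To prove the implementation lemma I would first build \emph{anchor} vertices that are forced to a prescribed side in every optimal partition, and then attach an anchor to a target vertex $v$ by one arc: with an $X_1$-anchor $z$ (value $0$) and matrix $M$, the arc $vz$ adds $(m_{21}-m_{11})x_v$ and the arc $zv$ adds $(m_{12}-m_{11})x_v$, while an $X_2$-anchor analogously yields the atoms $m_{22}-m_{12}$ and $m_{22}-m_{21}$. A short case check on where the maximum is attained shows that a matrix $M_b$ violating~(b) always contributes a strictly \emph{positive} atom to this set of achievable coefficients, whereas a matrix $M_c$ violating~(c) always contributes a strictly \emph{negative} one; since arc weights are arbitrary positive reals and the potentials of several arcs add, these both-sign atoms yield every real target potential after scaling. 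The anchors themselves I would pin by heavy gadgets: a matrix with a strict diagonal maximum pins a pair of vertices to one common side via a single heavy arc, while a strict off-diagonal maximum pins a pair to opposite sides (producing an $X_1$- and an $X_2$-anchor simultaneously). Since $M_c$ forbids $m_{22}$ from being the maximum and $M_b$ forbids $m_{11}$, in the strict cases this already furnishes one anchor on each side.

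The main obstacle is precisely this implementation lemma, and within it the degenerate regime in which the maxima are attained with \emph{ties} (e.g.\ every matrix has $m_{11}=m_{12}=m_{21}>m_{22}$), where no single gadget rigidly pins a vertex. There I would pin whole \emph{directed cycles} of $M_c$- (resp.\ $M_b$-) arcs: depending on which off-diagonal entry ties the diagonal maximum, each arc strictly prefers its tail (or its head) on the desired side, so along a cycle every vertex is forced and all-$X_1$ (resp.\ all-$X_2$) becomes the unique optimum, any residual ties being broken by the potentials already constructed. Throughout, I must respect the oriented-graph constraint forbidding both $uv$ and $vu$, so symmetrization and cancellation of linear terms have to be routed through fresh auxiliary vertices rather than by reversing arcs. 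I expect the bulk of the work to be this careful, orientation-aware gadget construction together with the case analysis on the location of each witness matrix's maximum and the signs of its entries; the tractability half and the reduction from \textsc{MaxCut} given the implementation lemma are comparatively routine.
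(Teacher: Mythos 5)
Your tractable half is fine and essentially coincides with the paper's: Cases~2 and~3 are the same trivial observation, and your Case~1 argument (Property~(a) $\Leftrightarrow$ the bilinear coefficient $\alpha_M=m_{11}+m_{22}-m_{12}-m_{21}$ is non-negative, hence $-w^P$ is a regular quadratic pseudo-Boolean function minimizable by an $s$--$t$ min cut) is exactly the content of the paper's Lemma~\ref{lem:case-1}, which constructs that min-cut instance explicitly. Your hardness architecture --- a \textsc{MaxCut} edge realized by an arc whose matrix violates~(a), with stray linear terms cancelled by ``vertex potentials'' implemented as weighted arcs to anchor vertices pinned to a prescribed side --- is also structurally the same as the paper's Theorems~\ref{thm5} and~\ref{thm6} (their gadget $D(x,x',y,y')$ is your pair of anchors, and their $\epsilon$-weighted arcs enforcing $s_A=s_B$ are your linear corrections). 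The paper additionally routes the ``one matrix violates all three properties'' case through $2$-colorability of linear $3$-uniform hypergraphs via directed triangles (Theorem~\ref{thm4}), but your \textsc{MaxCut} route would also work there, since when $m_{11}=m_{22}$ and $m_{12}=m_{21}$ the arc is already a symmetric cut edge, and otherwise a strict off-diagonal maximum pins both anchors with one heavy arc.

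The genuine gap is in your implementation lemma, precisely in the degenerate case you flag, and your proposed fix does not work. Consider a witness matrix with the three-way tie $m_{11}=m_{12}=m_{21}>m_{22}$ (it satisfies~(b), violates~(a) and~(c), so it legitimately arises as the matrix $M$ in the hard regime). An arc carrying this matrix attains its maximum in \emph{three} of the four configurations --- it is penalized only when both endpoints lie in $X_2$ --- so it is false that ``each arc strictly prefers its tail (or its head) on the desired side,'' and a heavy directed cycle of such arcs pins nothing: any independent set of cycle vertices may defect to $X_2$ at zero cost. The fallback ``residual ties broken by the potentials already constructed'' is circular, since the potentials are built from arcs to anchors and the anchors are what you are trying to construct. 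When the second witness $R$ (violating~(b)) has the mirror tie $r_{22}=r_{12}=r_{21}>r_{11}$, no gadget built from arcs of a single matrix type can pin a vertex. The missing idea --- which is how the paper's Theorem~\ref{thm5}, Case~2 resolves exactly this configuration --- is to \emph{interleave} arcs of the two matrix types in one gadget so that their indifference regions conflict: e.g.\ $M$-arcs $xx_2,xx_3$ force ``not both endpoints in $X_2$'' while an $R$-arc $x_2x_3$ forces ``not both in $X_1$,'' and together these make $x\in X_2$ strictly suboptimal. Without this cross-matrix construction (or an equivalent), your anchor step, and hence the whole hardness half, does not go through in the tied-maximum cases.
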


Theorem~\ref{thm:dichotomy} decomposes the space into four cases, each of varying difficulty.
Cases~\ref{main-case2} and~\ref{main-case3} are immediate, since they admit trivially-optimal solutions: in Case~\ref{main-case2} all vertices belong to $X_1$ and in Case~\ref{main-case3} all vertices belong to $X_2$. 
On the other hand, Case~\ref{main-case1} is far from trivial and it requires a more sophisticated argument that creates an equivalent min-cut instance on an undirected graph (see Lemma~\ref{lem:case-1}), which is known to be solvable polynomial-time~\cite{korte2011combinatorial}. 
Finally, the last case deals with ``every other case'' and shows that the problem becomes intractable.
The proof involves a series of intricate subcases and constructions.\footnote{Due to space constraints, the formal proof appears in the Supplementary material, alongside other missing proofs.}

\begin{lemma}
\label{lem:case-1}
If all matrices in \fcal satisfy Property (a), then $\mwdp(\fcal)$ can be solved in polynomial time.
\end{lemma}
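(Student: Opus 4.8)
The plan is to encode a partition $P=(X_1,X_2)$ by binary variables $x_v\in\{0,1\}$, with $x_v=0$ meaning $v\in X_1$ and $x_v=1$ meaning $v\in X_2$, and to rewrite the objective $w^P(D)$ as a quadratic pseudo-Boolean function of the $x_v$. First I would compute the contribution of a single arc $uv$ with assigned matrix $M=f(uv)$ and weight $c=c(uv)$. Writing the four cases of $w^P(uv)$ through the indicators $(1-x_u),(1-x_v),x_u,x_v$ and expanding gives
\[
w^P(uv)=c\Big(m_{11}+(m_{21}-m_{11})x_u+(m_{12}-m_{11})x_v+\gamma_{uv}\,x_ux_v\Big),
\]
where $\gamma_{uv}:=m_{11}+m_{22}-m_{12}-m_{21}$. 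The crucial observation is that $\gamma_{uv}\ge 0$ holds \emph{exactly} when $M$ satisfies Property (a); since $c>0$, the quadratic coefficient $c\gamma_{uv}$ is nonnegative for every arc.

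Next I would linearise the quadratic terms using the identity $x_ux_v=\tfrac12\big(x_u+x_v-[x_u\neq x_v]\big)$, so that every quadratic interaction is expressed through the cut-indicator $[x_u\neq x_v]$. Substituting, $w^P(D)$ becomes a constant plus a linear form in the $x_v$ plus a sum of terms $-\tfrac12 c\gamma_{uv}[x_u\neq x_v]$; because $c\gamma_{uv}\ge 0$, each cut-indicator appears with a \emph{nonpositive} coefficient. Passing to the minimisation problem $\min_x\big({-w^P(D)}\big)$ flips these to nonnegative coefficients, placing us in the regime where the objective is a submodular (min-cut-friendly) quadratic pseudo-Boolean function.

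I would then realise this minimisation as an undirected $s$--$t$ minimum cut. Build a graph $H$ on $V(D)\cup\{s,t\}$ and identify the side of $s$ with $X_1$ (value $0$) and the side of $t$ with $X_2$ (value $1$). Each term $-\tfrac12 c\gamma_{uv}[x_u\neq x_v]$ contributes an undirected edge $\{u,v\}$ of nonnegative capacity $\tfrac12 c\gamma_{uv}$, whose cut cost is paid exactly when $u,v$ land on opposite sides. Each linear term $\alpha_v x_v$ is modelled by a single undirected edge to a terminal: an edge $\{s,v\}$ of capacity $\alpha_v$ when $\alpha_v\ge0$ (paid iff $x_v=1$), and an edge $\{v,t\}$ of capacity $-\alpha_v$ together with an additive constant when $\alpha_v<0$ (paid iff $x_v=0$). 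Every minimum $s$--$t$ cut of $H$ then corresponds to a minimiser of $-w^P(D)$, hence to a maximiser of $w^P(D)$, and the additive constants are recovered afterwards. Since undirected minimum cut is solvable in polynomial time~\cite{korte2011combinatorial}, this yields the claim.

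The main obstacle I expect is bookkeeping rather than depth: guaranteeing that \emph{all} capacities in $H$ are nonnegative. The pairwise capacities are nonnegative solely because Property (a) gives $\gamma_{uv}\ge0$ --- this is the single place the hypothesis is used, and without it the cut coefficients could be negative and the min-cut encoding would break (which is precisely what separates Case~\ref{main-case1} from the \NP-hard case). The remaining care is to collect, for each vertex $v$, the linear coefficient $\alpha_v$ from all incident arcs (both as a row-vertex and as a column-vertex, including the $\tfrac12\gamma$ contributions) and to route it to $s$ or $t$ according to its sign, while tracking the accumulated constant so that the cut value and $w^P(D)$ differ by a known additive term.
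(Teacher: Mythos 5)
Your proposal is correct and follows essentially the same route as the paper: both reduce the problem to an undirected minimum $(s,t)$-cut on $V(D)\cup\{s,t\}$, with Property (a) guaranteeing the nonnegativity of the pairwise edge capacities $c(uv)\cdot(m_{11}+m_{22}-m_{12}-m_{21})/2$, which is exactly where the hypothesis is used in the paper as well. The only (immaterial) difference is how negative terminal-edge weights are handled --- you route each linear coefficient to $s$ or $t$ according to its sign, while the paper uniformly subtracts the minimum weight $\theta$ from all terminal edges and exploits the fact that every vertex contributes exactly one terminal edge to any cut.
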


\begin{proof}
Let $(D,c,f)$ be an instance of $\mwdp(\fcal)$ that satisfies the constraints above.
We will construct a new edge-weighted, undirected, graph $H$ with vertex set $V(D) \cup \{s,t\}$ as follows. 
Let $UG(D)$ denote the undirected graph obtained from $D$ by removing orientations of all arcs. 
Initially, let $E(H) = E(UG(D)) \cup \{su,tu \; | \; u \in V(D) \}$ 
and let all edges in $H$ have weight zero. 
For each arc $uv \in A(D)$ we modify the edge-weight function $w$ of $H$ as follows, where $M=f(uv)$ is the matrix associated with the arc $uv$.

\begin{itemize}
\item Let $w(uv)=c(uv) \cdot (m_{11} + m_{22} - m_{12} - m_{21})/2$. 
Note $w(uv)$ is the weight of the undirected edge $uv$ in $H$ associated with the arc $uv \in A(D)$ and that this weight is non-negative; this is guaranteed by Property (a) and it is crucial for the correctness of the lemma.
\item Add $c(uv) \cdot (-m_{22})/2$ to $w(su)$.
\item Add $c(uv) \cdot (-m_{22})/2$ to $w(sv)$.
\item Add $c(uv) \cdot (m_{21}-m_{11}-m_{12})/2$ to $w(tu)$.
\item Add $c(uv) \cdot (m_{12}-m_{11}-m_{21})/2$ to $w(tv)$.
\end{itemize}

Let $\theta$ be the smallest possible weight of all edges in $H$ after we completed the above process ($\theta$ may be negative).
Now consider the weight function $w^*$ obtained from $w$ by 
subtracting $\theta$ from all edges incident with $s$ or $t$. That is, $w^*(uv)=w(uv)$ if $\{u,v\} \cap \{s,t\}=\emptyset$ and 
$w^*(uv)=w(uv) - \theta$ otherwise.  
Note that all $w^*$-weights in $H$ are non-negative.  
We will show that for any $(s,t)$-cut, $(X_1,X_2)$ in $H$, i.e., $(X_1,X_2)$ partitions $V(H)$ and $s \in X_1$ and $t \in X_2$, the $w^*$-weight of the cut is equal to $-w^P(D) - |V(D)|\cdot \theta$, where $P$ is the partition 
$(X_1 \setminus \{s\},X_2 \setminus \{t\})$ in $D$.  Therefore, a minimum-weight cut in $H$ maximizes $w^P(D)$ in $D$.

Let $(X_1,X_2)$ be any $(s,t)$-cut in $H$.  For every $u \in V(D)$ we note that exactly one of the edges $su$ and $ut$ will belong to the cut.
Therefore, we note that the $w^*$-weight of the cut is $|V(D)| \cdot \theta$ less than the $w$-weight of the cut.
 It therefore suffices to show that the $w$-weight of the cut is $-w^P(D)$ (where  $P$ is the partition
$(X_1 \setminus \{s\},X_2 \setminus \{t\})$ of $V(D)$).

Let $(X_1,X_2)$ be some $(s,t)$-cut. There are four possibilities for any $uv \in A(D)$, which follow from the definition of the $w$-weights.
\begin{itemize}
\item $u,v \in X_1$. In this case, we have added $-c(uv) \cdot m_{11}$ to the $w$-weight of the  $(s,t)$-cut.
\item $u,v \in X_2$. In this case, we have added $-c(uv) \cdot m_{22}$ to the $w$-weight of the  $(s,t)$-cut.
\item $u \in X_1$ and $v \in X_2$. In this case, we have added $-c(uv) \cdot m_{12}$ to the $w$-weight of the  $(s,t)$-cut.
\item $v \in X_1$ and $u \in X_2$. In this case, we have added $-c(uv) \cdot m_{21}$ to the $w$-weight of the  $(s,t)$-cut.
\end{itemize}

Therefore we note that in all cases we have added $-w^P(uv)$ to the $w$-weight of the  $(s,t)$-cut, $(X_1,X_2)$.
So the total $w$-weight of the  $(s,t)$-cut is $-w^P(D)$ as desired.

Analogously, if we have a partition $P=(X_1,X_2)$ of $V(D)$, then adding $s$ to $X_1$ and $t$ to $X_2$ we obtain a $(s,t)$-cut
with $w$-weight $-w^P(D)$ of $H$. 
As we can find a minimum $w^*$-weight cut in $H$ in polynomial time~\cite{korte2011combinatorial} we can find a  partition $P=(X_1,X_2)$ of $V(D)$ with 
minimum value of $-w^P(D)$, which corresponds to the maximum value of $w^P(D)$. Therefore, $\mwdp(\fcal)$ can be solved in polynomial time in this case.
\end{proof}


\section{Social Welfare and Potential Maximization}
\label{sec:sw-potent}
In this section we show how we can utilize Theorem~\ref{thm:dichotomy} and derive as corollaries several results, both positive and negative, for various classes of binary-action polymatrix games. First, we present a simple reduction from general binary-action polymatrix games to \mwdp that is used to solve Objective (i), i.e., maximize the social welfare.

\paragraph{\bf Social Welfare via \mwdp.}
Given a polymatrix game with underlying graph $G$, we create an instance of \mwdp as follows.
\begin{itemize}
    \item $D$ has the same vertex-set as $G$. In addition, for every edge of $G$ we add a directed edge in $D$ with arbitrary orientation. An edge from $u$ to $v$ specifies that player $u$ chooses a row and player $v$ chooses a column in the two-player game played between the two corresponding players.
    \item $c(uv)=1$ for every oriented edge $uv \in A(D)$.
    \item For each two-player game $(\Pi^{uv}, \Pi^{vu})$, with $uv \in E(G)$, we create the welfare-matrix $W^{uv}:=\Pi^{uv}+\Pi^{vu}$, and we associate it with the oriented edge between $u$ and $v$ as follows:
    \begin{itemize}
        \item $f(uv)=W^{uv}$, if we have added the arc $uv$;
        \item $f(uv)=(W^{uv})^T$, if we have added the arc $vu$.
    \end{itemize}
\end{itemize}
The reduction above induces an immediate translation between strategy profiles and partitions. Player $v$ chooses action \one if and only if the corresponding vertex $v$ in $D$ belongs to $X_1$. Conversely, player $v$ chooses action \two if and only if the corresponding vertex $v$ in $D$ belongs to $X_2$. 

For any strategy profile \sbf, we use $P(\sbf)$ to denote the corresponding partition. The following lemma trivially follows from the reduction above. 

\begin{lemma}
\label{lem:sw-to-mwdp}
For every possible strategy profile $\sbf$, it holds that $\wcal(\sbf) = w^{P(\sbf)}(D)$.
\end{lemma}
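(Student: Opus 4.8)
The plan is to prove the identity one edge at a time: I will match every edge of $G$ with the single arc that the reduction creates for it, and show that the social-welfare contribution of that edge under \sbf{} equals the \mwdp arc-weight of that arc under the induced partition $P(\sbf)$. Summing over all edges then yields $\wcal(\sbf)=w^{P(\sbf)}(D)$ immediately.

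The first step is to rewrite the welfare as a sum over the (undirected) edges of $G$. Expanding $\wcal(\sbf)=\sum_{i\in V}p_i(\sbf)=\sum_{i\in V} s_i^{T}\sum_{j:\,ij\in E(G)}\Pi^{ij}s_j$ and regrouping the double sum by unordered pairs, each edge $uv\in E(G)$ is counted by exactly its two oriented interactions, so it contributes $s_u^{T}\Pi^{uv}s_v+s_v^{T}\Pi^{vu}s_u$. Hence $\wcal(\sbf)=\sum_{uv\in E(G)}\bigl(s_u^{T}\Pi^{uv}s_v+s_v^{T}\Pi^{vu}s_u\bigr)$, and it suffices to identify each summand with the weight of the corresponding arc.

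The second step is to read this per-edge quantity as one entry of the welfare matrix $W^{uv}$. Since each $s_i$ is a standard basis vector, $s_u^{T}Ms_v$ selects the entry $M_{ab}$ of any $2\times 2$ matrix $M$, where $a$ (resp.\ $b$) is the action index of $u$ (resp.\ $v$), i.e.\ $1$ for \one and $2$ for \two. Using that the scalar $s_v^{T}\Pi^{vu}s_u$ equals its own transpose $s_u^{T}(\Pi^{vu})^{T}s_v$, the two payoff matrices are brought into a common row/column frame, so the per-edge welfare becomes the $(a,b)$-entry of $W^{uv}$, namely the combined payoff of $u$ and $v$ when $u$ plays action $a$ and $v$ plays action $b$.

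The third step matches this entry with the \mwdp arc-weight, using the correspondence $X_1\leftrightarrow\one$ and $X_2\leftrightarrow\two$ so that the action index of a vertex coincides with its side index in $P(\sbf)$. If the reduction added the arc $uv$, then $f(uv)=W^{uv}$ and the weight definition indexes the row by the tail $u$ and the column by the head $v$, reading off exactly $W^{uv}_{ab}$; if instead it added the reversed arc $vu$, then $f(vu)=(W^{uv})^{T}$ and the weight definition indexes the row by $v$ and the column by $u$, selecting $(W^{uv})^{T}_{ba}=W^{uv}_{ab}$, the same value. I expect this orientation bookkeeping to be the only delicate point, and the step I would write out most carefully: one must check that the arbitrary choice of arc direction, together with the transposition of $W^{uv}$ in the reversed case, exactly cancels the swap of the row and column roles in the \mwdp weight function. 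Everything else is a routine substitution, so once this is verified, summing over $E(G)$ completes the proof.
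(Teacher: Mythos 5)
Your proof is correct and is exactly the verification the paper leaves implicit (the paper states only that the lemma ``trivially follows from the reduction''): decompose the welfare edge by edge, identify each edge's contribution with an entry of $W^{uv}$, and check that the arbitrary arc orientation together with the transposition of $W^{uv}$ cancels the swap of row and column roles in $w^P$. You correctly pin down the one delicate point, namely that $s_v^{T}\Pi^{vu}s_u = s_u^{T}(\Pi^{vu})^{T}s_v$, so the welfare matrix must be read as the combined payoff indexed by $(u$'s action, $v$'s action$)$ — a detail the paper's shorthand $W^{uv}=\Pi^{uv}+\Pi^{vu}$ glosses over.
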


So, the combination of Lemma~\ref{lem:sw-to-mwdp} and Theorem~\ref{thm:dichotomy}, yields a series of results. The first one, is a clean complexity dichotomy for maximizing social welfare in general binary-action polymatrix games. To the best of our knowledge, this is the first dichotomy of this kind.

\begin{theorem}
\label{thm:sw-general}
Consider a binary-action polymatrix game on input graph $G$. Let $W^{uv} = \Pi^{uv}+\Pi^{vu}$ for every $uv \in E(G)$.
Finding a strategy profile that maximizes the social welfare can be solved in polynomial time if one of the following holds.
\begin{itemize}
   \item $w^{uv}_{11} + w^{uv}_{22} \geq w^{uv}_{12} + w^{uv}_{21}$ for every $uv \in E(G)$.
   \item $w^{uv}_{11} \geq \max \{w^{uv}_{12}, w^{uv}_{21}, w^{uv}_{22} \}$ for every $uv \in E(G)$.
   \item $w^{uv}_{22} \geq \max \{w^{uv}_{11}, w^{uv}_{12}, w^{uv}_{21} \}$ for every $uv \in E(G)$.
\end{itemize}
In every other case, the problem is \NP-hard.
\end{theorem}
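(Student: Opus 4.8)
The plan is to transfer the dichotomy of Theorem~\ref{thm:dichotomy} through the welfare-to-\mwdp reduction, in both directions. By Lemma~\ref{lem:sw-to-mwdp}, for the instance $(D,c,f)$ built from the game we have $\wcal(\sbf) = w^{P(\sbf)}(D)$ for every profile $\sbf$, so maximizing social welfare is exactly solving the constructed $\mwdp(\fcal)$ instance, where $\fcal$ is the family consisting of the welfare matrices $W^{uv}$ together with their transposes (the transpose arises whenever the arbitrary orientation points from $v$ to $u$, since $W^{vu} = (W^{uv})^T$).

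First I would establish the positive direction. The crucial observation is that Properties (a), (b), and (c) are invariant under transposition: transposing a $2\times2$ matrix swaps $m_{12}$ and $m_{21}$ while fixing $m_{11}$ and $m_{22}$, so each of the three defining inequalities is unchanged. Consequently ``every $W^{uv}$ satisfies Property (a)'' is equivalent to ``every matrix in $\fcal$ satisfies Property (a)'', and likewise for (b) and (c). Each of the three bullet conditions in the theorem is therefore precisely the hypothesis of the corresponding case of Theorem~\ref{thm:dichotomy}, and combining that theorem with Lemma~\ref{lem:sw-to-mwdp} yields a polynomial-time algorithm whenever one of the conditions holds.

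For the negative direction I would supply the reverse reduction, turning an arbitrary $\mwdp(\fcal)$ instance into a welfare-maximization instance. Given $(D,c,f)$, build a polymatrix game on $UG(D)$ by setting $\Pi^{uv} = c(uv)\cdot f(uv)$ and $\Pi^{vu} = \mathbf{0}$ for each arc $uv \in A(D)$; then $W^{uv} = \Pi^{uv}+\Pi^{vu} = c(uv)\cdot f(uv)$, and Lemma~\ref{lem:sw-to-mwdp} shows that a welfare-optimal profile is precisely a weight-optimal partition. Since multiplying a matrix by the positive constant $c(uv)$ preserves each of Properties (a), (b), (c), a family $\fcal$ that fails all three properties yields welfare matrices that still fail all three, placing the constructed game squarely in the ``every other case'' regime; the \NP-hardness guaranteed by Theorem~\ref{thm:dichotomy} then transfers to social-welfare maximization.

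The main obstacle I anticipate is the bookkeeping in the negative direction: I must ensure that the hard $\mwdp$ instances produced by Theorem~\ref{thm:dichotomy} use matrices whose positive rescalings collectively still violate all three properties (so that the resulting games genuinely lie in the intractable regime), and that the ``arbitrary orientation'' freedom in the forward reduction does not secretly enlarge $\fcal$ into a family that satisfies some property it did not before. Both points are handled cleanly by the transposition- and scaling-invariance of Properties (a)--(c), so the dichotomy for $\mwdp(\fcal)$ passes through verbatim.
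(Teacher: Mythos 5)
Your proof is correct and takes essentially the same approach as the paper: the forward direction via Lemma~\ref{lem:sw-to-mwdp} and the reverse direction by packaging each weighted arc-matrix $c(uv)\cdot f(uv)$ as a two-player game whose welfare matrix recovers it, then invoking Theorem~\ref{thm:dichotomy}. Your explicit checks that Properties (a)--(c) are invariant under transposition and under positive scaling are details the paper treats as immediate, and your choice to give the entire payoff $c\cdot m_{ij}$ to one player (rather than splitting it evenly between the two, as the paper does) is an equivalent variant of the same reverse reduction.
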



The reduction from welfare maximization to \mwdp is immediate. The reduction in the opposite direction can be seen as follows. Consider an arc $uv$ of weight $c$ with matrix $M$. For entry $m_{ij}$ in $M$, let $\frac{c\cdot m_{ij}}{2}$ be the payoff to each player when player $u$ chooses strategy $i$ and player $v$ chooses $j$. Clearly the total welfare is given by $c\cdot m_{ij}$. The proof now follows by appealing to Theorem~\ref{thm:dichotomy}.

Theorem~\ref{thm:sw-general} implies two immediate corollaries for coordination polymatrix games. Firstly, for every welfare-matrix $W^{uv}$ of a pure-coordination game it holds that $w^{uv}_{11} + w^{uv}_{22} \geq w^{uv}_{12} + w^{uv}_{21}$. Second, it is not difficult to construct anti-coordination games where $w^{uv}_{11} + w^{uv}_{22} < w^{uv}_{12} + w^{uv}_{21}$.\footnote{The construction of Appendix C.2 from \citeauthor{CD11-Cai-Daskalakis}~\shortcite{CD11-Cai-Daskalakis} creates these types of welfare matrices.}

\begin{corollary}
\label{cor:pure-coord}
Maximizing social welfare in binary-action pure-coordination polymatrix games can be done in polynomial time.
\end{corollary}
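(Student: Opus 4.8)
The plan is to reduce the corollary to the first case of Theorem~\ref{thm:sw-general}. Concretely, I would show that every welfare matrix $W^{uv} = \Pi^{uv}+\Pi^{vu}$ arising from a pure-coordination polymatrix game satisfies Property (a), i.e.\ $w^{uv}_{11} + w^{uv}_{22} \geq w^{uv}_{12} + w^{uv}_{21}$. Once this holds for every edge $uv \in E(G)$, the first condition of Theorem~\ref{thm:sw-general} (equivalently, Case~\ref{main-case1} of Theorem~\ref{thm:dichotomy} through the reduction of Lemma~\ref{lem:sw-to-mwdp}) is met, and polynomial-time solvability follows immediately.

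The substantive step is verifying Property (a). First I would fix an edge $ij$ and write out the four inequalities that encode the two coordination equilibria of the pairwise game $(\Pi^{ij}, \Pi^{ji})$. Since $(\one,\one)$ is a Nash equilibrium, neither player gains by unilaterally switching to \two; this yields $\Pi^{ij}_{11} \geq \Pi^{ij}_{21}$ (player $i$ cannot profitably deviate) together with the analogous inequality for player $j$. Since $(\two,\two)$ is also a Nash equilibrium, neither player gains by switching to \one; this yields $\Pi^{ij}_{22} \geq \Pi^{ij}_{12}$ and again the analogue for $j$. Summing all four inequalities and regrouping the terms according to the welfare-matrix construction $W^{ij} = \Pi^{ij} + \Pi^{ji}$ gives exactly $w^{ij}_{11} + w^{ij}_{22} \geq w^{ij}_{12} + w^{ij}_{21}$, which is Property (a), with no slack estimate or cancellation argument needed.

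The one point requiring care — and the only real obstacle — is the bookkeeping of row/column conventions: $\Pi^{ij}$ has $i$ as the row player and $j$ as the column player, whereas $\Pi^{ji}$ reverses these roles, so one must track which entry of each matrix corresponds to which joint action profile before aligning the two players' deviation inequalities with the entries of $W^{ij}$. Once this alignment is done correctly, the four equilibrium constraints add up term by term to precisely Property (a). With Property (a) holding on every edge, I would simply invoke the first case of Theorem~\ref{thm:sw-general} to conclude that maximizing \wcal can be done in polynomial time for binary-action pure-coordination polymatrix games.
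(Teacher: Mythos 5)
Your proposal is correct and matches the paper's argument: the paper likewise observes that the pure-coordination Nash conditions force every welfare matrix to satisfy $w^{uv}_{11}+w^{uv}_{22}\ge w^{uv}_{12}+w^{uv}_{21}$ (Property (a)) and then invokes the first case of Theorem~\ref{thm:sw-general}. Your summation of the four no-deviation inequalities, with the row/column bookkeeping for $\Pi^{ij}$ versus $\Pi^{ji}$, is exactly the justification the paper leaves implicit.
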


\begin{corollary}
\label{cor:anti-coord}
It is \NP-hard to maximize social welfare in binary-action anti-coordination polymatrix games.
\end{corollary}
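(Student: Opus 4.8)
The plan is to derive the corollary from the \NP-hardness branch of Theorem~\ref{thm:sw-general}. By Lemma~\ref{lem:sw-to-mwdp}, maximizing welfare is exactly $\mwdp$ run on the family of welfare matrices $W^{uv}=\Pi^{uv}+\Pi^{vu}$, so it suffices to produce a class of anti-coordination two-player games whose welfare matrices collectively fall into the ``every other case'' of the dichotomy---that is, a family $\fcal$ in which no single one of Property (a), (b), or (c) holds for every member. Since \NP-hardness need only hold for \emph{some} subclass of anti-coordination games, it is enough to realize one suitably bad family as welfare matrices of genuine anti-coordination edge games.

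First I would note that a single matrix type is enough. Consider the symmetric \emph{pure} anti-coordination game in which each player receives $1$ when the two players choose different actions and $0$ otherwise; here $(\one,\two)$ and $(\two,\one)$ are Nash equilibria of the two-player game, so it is genuinely anti-coordination, and (using the convention $W^{uv}=\Pi^{uv}+\Pi^{vu}$) its welfare matrix is $W=\left[\begin{smallmatrix} 0 & 2 \\ 2 & 0 \end{smallmatrix}\right]$. I would then check that $W$ defeats all three tractable properties at once: Property (a) fails since $w_{11}+w_{22}=0<4=w_{12}+w_{21}$; Property (b) fails since $w_{11}=0$ is not a largest entry; and Property (c) fails since $w_{22}=0$ is not a largest entry. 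Hence the singleton family $\fcal=\{W\}$ lies outside every tractable case, and the final clause of Theorem~\ref{thm:sw-general} (equivalently Theorem~\ref{thm:dichotomy}) gives \NP-hardness of welfare maximization for polymatrix games assembled from this edge game. As a cross-check, for this $W$ the associated $\mwdp$ instance rewards precisely the arcs whose endpoints are separated by the partition, so the problem is literally \textsc{MaxCut}, whose \NP-hardness yields the corollary directly.

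The only genuine obstacle is conceptual: one must realize that violating Property (a)---the property that visibly separates anti-coordination from pure coordination---does \emph{not} by itself reach the hard regime, because a welfare matrix could still satisfy (b) or (c). The construction therefore has to drive \emph{both} off-diagonal entries strictly above \emph{both} diagonal entries, which the pure anti-coordination game does automatically. Everything else---verifying the two best-response inequalities that make the edge game anti-coordination, and accommodating arbitrary edge weights by scaling each copy of the game by $c(uv)$---is routine.
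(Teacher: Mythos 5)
Your proof is correct and follows essentially the same route as the paper: both derive the corollary by exhibiting anti-coordination edge games whose welfare matrices land in the hardness branch of Theorem~\ref{thm:sw-general}. The paper merely asserts that anti-coordination games with $w^{uv}_{11}+w^{uv}_{22} < w^{uv}_{12}+w^{uv}_{21}$ exist (pointing to a construction of Cai and Daskalakis in a footnote), whereas you give the explicit symmetric example with welfare matrix $\left[\begin{smallmatrix}0&2\\2&0\end{smallmatrix}\right]$ and, importantly, verify that Properties (b) and (c) fail as well --- a check that is genuinely required to invoke the ``every other case'' clause (violating (a) alone does not suffice, e.g.\ $w_{11}=10$, $w_{22}=0$, $w_{12}=w_{21}=9$ violates (a) yet satisfies (b)) and that the paper's one-line justification leaves implicit. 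Your observation that the resulting \mwdp instance is literally \textsc{MaxCut} is a valid independent confirmation of the hardness.
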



\paragraph{Pairwise-Potential Games via \mwdp.} Next we show how to reduce the problem of maximizing the potential in pairwise-potential binary-action polymatrix games to the \mwdp problem. 
Observe that for this class of games, every pairwise-potential $\Phi^{uv}$ can be written as a $2 \times 2$ potential-matrix, so 
$\Phi^{uv} := \left[\begin{array}{cc}
\phi^{uv}_{11} & \phi^{uv}_{12} \\
\phi^{uv}_{21} & \phi^{uv}_{22} \\
\end{array}\right]$. 
To create an instance of \mwdp we will follow a similar approach as before.
This time though, we use potential-matrices:\ $\Phi^{uv}$ will be associated with the oriented edge between $u$ and $v$ as follows.
    \begin{itemize}
        \item $f(uv)=\Phi^{uv}$, if we have added the arc $uv$;
        \item $f(uv)=(\Phi^{uv})^T$, if we have added the arc $vu$.
    \end{itemize}
To reduce from \mwdp to potential maximization, let the matrix $M$ on arc $uv$ with weight $c$ be given by the potential matrix above. Such a matrix is the potential of a $2\times2$ game given by $\left[\begin{array}{cc}
\frac{\phi_{11}}{2}, \frac{\phi_{11}}{2} & 0, \phi_{12} - \frac{\phi_{11}}{2} \\
\phi_{21} - \frac{\phi_{11}}{2}, 0 & \phi_{22}-\phi_{21}, \phi_{22} - \phi_{12} \\
\end{array}\right]$.
{As before, we have reduced in both directions and so we have a one-to-one translation between partitions and strategy profiles.
So, if we denote $P(\sbf)$ the partition associated with strategy profile $\sbf$ we get the following lemma, where $\Phi(\sbf)$ is the sum of the pairwise potentials.
\begin{lemma}
\label{lem:pot-to-mwdp}
For every possible strategy profile $\sbf$ of a pairwise-potential polymatrix game, it holds that $\Phi(\sbf) = w^{P(\sbf)}(D)$.
\end{lemma}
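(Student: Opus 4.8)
The plan is to prove the identity edge by edge. Both sides decompose as sums indexed by the edges of $G$: by definition $\Phi(\sbf) = \sum_{uv \in E(G)} \Phi^{uv}(\sbf)$, while $w^{P(\sbf)}(D) = \sum_{a \in A(D)} w^{P(\sbf)}(a)$, and the reduction places exactly one arc on each edge of $G$. Hence it suffices to show that for every edge $uv$ the corresponding arc $a$ satisfies $w^{P(\sbf)}(a) = \Phi^{uv}(\sbf)$; summing over all edges then gives the lemma at once. This is the same shape of argument that underlies Lemma~\ref{lem:sw-to-mwdp}, with the potential-matrix $\Phi^{uv}$ in place of the welfare-matrix.

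First I would fix the conventions. Under the correspondence between profiles and partitions, a player $x$ plays \one exactly when $x \in X_1$ and plays \two exactly when $x \in X_2$. With the indexing of $\Phi^{uv}$ (row indexed by the action of $u$, column by the action of $v$, index $1 = \one$, index $2 = \two$), the value $\Phi^{uv}(\sbf)$ selected by the profile equals $\phi^{uv}_{11}, \phi^{uv}_{22}, \phi^{uv}_{12}, \phi^{uv}_{21}$ in the four cases $u,v \in X_1$; $u,v \in X_2$; $u \in X_1, v \in X_2$; and $u \in X_2, v \in X_1$, respectively. Since $c(a)=1$ throughout, I would then compare this list against the definition of $w^{P}$ in the two orientation cases. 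If the reduction added the arc $uv$, then $f(uv)=\Phi^{uv}$ and the four cases of $w^{P}(uv)$ coincide term by term with the list, giving $w^{P(\sbf)}(uv)=\Phi^{uv}(\sbf)$ directly.

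The one step I would write out most carefully is the reversed orientation, where the arc $vu$ was added and $f(vu)=(\Phi^{uv})^T$. Here the tail of the arc plays the role of the row in the definition of $w^{P}$, so the transpose is precisely what compensates for $u$ and $v$ having swapped row/column roles. Concretely, the off-diagonal entries of $(\Phi^{uv})^T$ are $\phi^{uv}_{21}$ and $\phi^{uv}_{12}$, and one has to track which of $u,v$ lies in $X_1$ versus $X_2$ to confirm these align with the entry that $\sbf$ selects in $\Phi^{uv}$; the two diagonal cases are immediate since the transpose fixes them. The only real obstacle is this bookkeeping between the transpose and the partition in the two mixed cases, and checking it yields $w^{P(\sbf)}(vu)=\Phi^{uv}(\sbf)$ as well. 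No property of potentials beyond the $2\times 2$ matrix representation $\Phi^{uv}$ is invoked, so once the per-arc identity holds in both orientations, summation completes the proof.
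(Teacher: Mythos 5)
Your proposal is correct and follows exactly the argument the paper intends: the paper states the lemma as an immediate consequence of the reduction (mirroring Lemma~\ref{lem:sw-to-mwdp}), and your edge-by-edge decomposition with the careful check that the transpose $(\Phi^{uv})^T$ compensates for the reversed row/column roles in the two mixed-partition cases is precisely the verification being left implicit. Nothing is missing.
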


Lemma~\ref{lem:pot-to-mwdp} combined with Theorem~\ref{thm:dichotomy}, yield the following theorem. Again, to the best of our knowledge, this is the first dichotomy of this kind.

\begin{theorem}
\label{thm:pot-general}
Consider a binary-action potential polymatrix game on input graph $G$ with pairwise-potential matrices $\Phi^{uv}$, for every $uv \in E(G)$.
Finding a strategy profile that maximizes the potential function $\Phi$ can be solved in polynomial time if one of the following holds.
\begin{itemize}
   \item $\phi^{uv}_{11} + \phi^{uv}_{22} \geq \phi^{uv}_{12} + \phi^{uv}_{21}$ for every $uv \in E(G)$.
   \item $\phi^{uv}_{11} \geq \max \{\phi^{uv}_{12}, \phi^{uv}_{21}, \phi^{uv}_{22} \}$ for every $uv \in E(G)$.
   \item $\phi^{uv}_{22} \geq \max \{\phi^{uv}_{11}, \phi^{uv}_{12}, \phi^{uv}_{21} \}$ for every $uv \in E(G)$.
\end{itemize}
In every other case, the problem is \NP-hard.
\end{theorem}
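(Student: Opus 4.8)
The plan is to obtain the whole dichotomy as a direct corollary of Theorem~\ref{thm:dichotomy}, exploiting the fact that potential maximization and $\mwdp$ are equivalent in \emph{both} directions: the forward direction is exactly Lemma~\ref{lem:pot-to-mwdp}, and the backward direction is the explicit two-player-game construction sketched just before the statement. The first thing I would record is that the three conditions in the theorem are precisely Properties (a), (b) and (c) of Theorem~\ref{thm:dichotomy} applied to the pairwise-potential matrices $\Phi^{uv}$. Each of these properties depends only on the sums $\phi_{11}+\phi_{22}$ and $\phi_{12}+\phi_{21}$ and on the multiset $\{\phi_{11},\phi_{22},\phi_{12},\phi_{21}\}$, so all three are invariant under the transposition $\phi_{12}\leftrightarrow\phi_{21}$. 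Hence it is irrelevant whether the reduction of Lemma~\ref{lem:pot-to-mwdp} orients a given edge as $uv$ (assigning $\Phi^{uv}$) or as $vu$ (assigning $(\Phi^{uv})^{T}$): the resulting arc-matrix satisfies exactly the same subset of (a), (b), (c) as $\Phi^{uv}$ itself.

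For the three tractable cases I would proceed as follows. Given a game all of whose potential matrices satisfy one of the conditions, I build the $\mwdp$ instance of Lemma~\ref{lem:pot-to-mwdp}; by the transpose-invariance above, every matrix of the family $\fcal$ used by this instance satisfies the corresponding Property (a), (b) or (c). Since Lemma~\ref{lem:pot-to-mwdp} gives $\Phi(\sbf)=w^{P(\sbf)}(D)$ for every strategy profile $\sbf$, a profile maximizing $\Phi$ corresponds exactly to a partition maximizing $w^P(D)$, and the relevant polynomial case of Theorem~\ref{thm:dichotomy} solves the instance in polynomial time.

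For the remaining ``every other case'' I would transfer the hardness of $\mwdp$ through the reverse reduction. The key auxiliary fact to verify is that \emph{every} $2\times2$ real matrix $M$ is the exact pairwise potential of some two-player potential game: writing the players' payoff matrices $A,B$, the exact-potential requirement reduces to the four difference equations $a_{21}-a_{11}=\phi_{21}-\phi_{11}$, $a_{22}-a_{12}=\phi_{22}-\phi_{12}$, $b_{12}-b_{11}=\phi_{12}-\phi_{11}$ and $b_{22}-b_{21}=\phi_{22}-\phi_{21}$, which are always solvable (scaling all payoffs by the arc weight $c>0$ then realizes $c\cdot M$ as the pairwise potential). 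Because Properties (a), (b), (c) are invariant under positive scaling, the family of realized potential matrices $\{c(uv)\,f(uv)\}$ falsifies all three properties exactly when $\fcal$ does. Thus any hard $\mwdp(\fcal)$ instance lifts to a pairwise-potential polymatrix game in which $\Phi(\sbf)=w^{P(\sbf)}(D)$, so its $\NP$-hardness carries over to potential maximization.

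I expect the main obstacle to be pinning down the correctness of this reverse reduction: one must confirm that the constructed game is an \emph{exact} (not merely ordinal) potential game and that its pairwise potential is genuinely $M$ up to the scaling by $c$, which hinges on placing the entries $\phi_{22}-\phi_{12}$ and $\phi_{22}-\phi_{21}$ with the correct player. Once this single algebraic identity is settled, the two objective-preserving reductions make the potential-maximization dichotomy coincide verbatim with the $\mwdp$ dichotomy, and the theorem follows from Theorem~\ref{thm:dichotomy}.
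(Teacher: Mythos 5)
Your proposal is correct and takes essentially the same route as the paper: reduce potential maximization to \mwdp via the pairwise-potential matrices (Lemma~\ref{lem:pot-to-mwdp}), reduce back by realizing any $2\times 2$ matrix as an exact pairwise potential, and invoke Theorem~\ref{thm:dichotomy}. You also correctly supply two details the paper leaves implicit: that Properties (a)--(c) are invariant under transposition (needed because the arc orientation may assign $(\Phi^{uv})^{T}$), and that in the explicit reverse-direction game the entries $\phi_{22}-\phi_{21}$ and $\phi_{22}-\phi_{12}$ must be assigned to the row and column player respectively for the potential to be exactly $\Phi^{uv}$ (the paper's displayed bimatrix has these two swapped, which only yields an exact potential game when $\phi_{12}=\phi_{21}$).
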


Theorem~\ref{thm:pot-general} reveals another difference between pure-coordination and anti-coordination polymatrix games, when both are pairwise-potential games. 
While for the latter class of games it is \NP-hard to maximize the potential function (this was implied in \citeauthor{CD11-Cai-Daskalakis}~\shortcite{CD11-Cai-Daskalakis}), for the former class of games the problem can be solved in polynomial time.

\begin{corollary}
\label{cor:pot-pure-coord}
Maximizing the potential of a pairwise-potential binary-action pure-coordination polymatrix game can be solved in polynomial time.
\end{corollary}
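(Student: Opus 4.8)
The plan is to reduce the corollary to the first condition of Theorem~\ref{thm:pot-general}. Concretely, I would show that for a pairwise-potential pure-coordination game, every pairwise-potential matrix $\Phi^{uv}$ satisfies $\phi^{uv}_{11} + \phi^{uv}_{22} \geq \phi^{uv}_{12} + \phi^{uv}_{21}$ for every $uv \in E(G)$ (the analog of Property~(a)). Once this is established the corollary is immediate: by Theorem~\ref{thm:pot-general}, or equivalently by combining Lemma~\ref{lem:pot-to-mwdp} with Case~\ref{main-case1} of Theorem~\ref{thm:dichotomy}, maximizing $\Phi$ becomes a polynomial-time-solvable instance of $\mwdp(\fcal)$.

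First I would unpack pure coordination at the level of a single edge game $(\Pi^{uv}, \Pi^{vu})$. By definition, both $(\one,\one)$ and $(\two,\two)$ are pure Nash equilibria of this $2 \times 2$ game. Writing out the no-deviation conditions for the row player $u$ at these two profiles gives $\Pi^{uv}_{11} \geq \Pi^{uv}_{21}$ (from $(\one,\one)$) and $\Pi^{uv}_{22} \geq \Pi^{uv}_{12}$ (from $(\two,\two)$).

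Next I would translate these payoff inequalities into inequalities on the potential matrix. By the defining property of the pairwise potential $\Phi^{uv}$, a unilateral deviation of $u$ changes $u$'s payoff by exactly the change in potential, so $\phi^{uv}_{11} - \phi^{uv}_{21} = \Pi^{uv}_{11} - \Pi^{uv}_{21}$ and $\phi^{uv}_{22} - \phi^{uv}_{12} = \Pi^{uv}_{22} - \Pi^{uv}_{12}$. The Nash-equilibrium conditions above then give $\phi^{uv}_{11} - \phi^{uv}_{21} \geq 0$ and $\phi^{uv}_{22} - \phi^{uv}_{12} \geq 0$, and adding these two inequalities yields precisely $\phi^{uv}_{11} + \phi^{uv}_{22} \geq \phi^{uv}_{12} + \phi^{uv}_{21}$, as required. (Symmetrically, one could instead use the no-deviation conditions for the column player $v$ together with the column-difference property of the potential; either route gives the same conclusion.)

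I do not expect a serious obstacle: the argument is essentially bookkeeping. The one point demanding care is fixing the row/column and which-player conventions consistently across the payoff matrices $\Pi^{uv}$, the Nash-equilibrium inequalities, and the potential matrix $\Phi^{uv}$, so that the potential differences line up with the correct payoff differences. With those conventions pinned down, the inequalities add cleanly, and invoking the first case of Theorem~\ref{thm:pot-general} completes the proof.
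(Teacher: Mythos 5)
Your proposal is correct and follows essentially the same route as the paper: both establish that each pairwise-potential matrix of a pure-coordination game satisfies Property~(a), i.e.\ $\phi^{uv}_{11}+\phi^{uv}_{22} \geq \phi^{uv}_{12}+\phi^{uv}_{21}$, by combining the Nash-equilibrium conditions at $(\one,\one)$ and $(\two,\two)$ with the defining property of the pairwise potential, and then invoke the tractable case of the dichotomy. The only cosmetic difference is that the paper derives the slightly stronger facts $\phi^{uv}_{11}\geq\max\{\phi^{uv}_{12},\phi^{uv}_{21}\}$ and $\phi^{uv}_{22}\geq\max\{\phi^{uv}_{12},\phi^{uv}_{21}\}$ using deviations of both players, whereas you use only the row player's deviations; both yield the required inequality.
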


\begin{proof}
We will show that when the game is pure-coordination, for every $uv \in E(G)$ the potential-matrix $\Phi^{uv}$ satisfies that $\phi^{uv}_{11}+\phi^{uv}_{22} \geq \phi^{uv}_{12}+\phi^{uv}_{21}$. 
Recall, since the game is pure-coordination the following hold. 
\begin{itemize}
    \item Outcome $(\one,\one)$, that corresponds to $\phi^{uv}_{11}$ of the potential-matrix, is a Nash equilibrium for the two-player game played on $uv$. So, since this is a Nash equilibrium {\em and} the two-player game is potential, any deviation from $(\one, \one)$ will lower the payoffs of {\em both} players, and thus it will lower the value of the pairwise-potential.
    Therefore, $\phi^{uv}_{11} \ge \max\{\phi^{uv}_{12},\phi^{uv}_{21}\}$.
    \item Outcome $(\two,\two)$, that corresponds to $\phi^{uv}_{22}$ of the potential-matrix, is a Nash equilibrium for the two-player game played on $uv$.
    Using verbatim the arguments as before, it follows that $\phi^{uv}_{22} \ge \max\{\phi^{uv}_{12},\phi^{uv}_{21}\}$.
\end{itemize}
 Our claim follows by combining the two bullets above.
%
\end{proof}
\section{Welfare-optimal Nash equilibria}
\label{sec:best-NE}
In this section we focus on the computation of welfare-optimal Nash equilibria in binary-action polymatrix games.
We focus only on pure-coordination games since the result of~\cite{CD11-Cai-Daskalakis} already implies that in anti-coordination games, welfare-optimal Nash equilibria are \NP-hard to compute. 
On the other hand, our results indicate that pure-coordination games are considerably more well-behaved, since they admit polynomial-time algorithms for welfare and potential maximization. 
Unfortunately, as we prove, when it comes to computing a ``best'' Nash equilibrium, things become almost immediately intractable even for the arguably simplest class of pure-coordination games, which we term {\em 2-type threshold-games}, which is a special case of {\em anonymous games}. 

\paragraph{\bf Anonymous pure-coordination games.} 
In an {\em anonymous game}~\cite{blonski1999anonymous}, the payoff of a player does not depend on the identity of their opponents, but only from the number of players that choose a specific action.
In an anonymous polymatrix game on graph $G$ every player $u \in V(G)$ is associated with a {\em single} payoff matrix $\Pi^u$ that is used in every two-player game they participate, i.e., player $u$ participates in the games $(\Pi^u, \Pi^v)$, for every $v \in E(G)$.
We say that an anonymous polymatrix game has $k$ types of players if there exists a set of payoff matrices \acal of size $k$ and $\Pi^u \in \acal$ for every $u \in V(G)$.
Finally, a pure-coordination game is $k$-type anonymous, if it satisfies the conditions above.

Next we focus on a special class of binary-action pure-coordination anonymous games, termed {\em threshold games} \cite{NearyNewton:2017:JMID};
their name follows from the fact that they model game-theoretically the classic threshold model of \citeauthor{Granovetter:1978:AJS}~\shortcite{Granovetter:1978:AJS}, wherein it is optimal for each player to choose each action once the fraction of their neighbors choosing that action exceeds a specific threshold.

\paragraph{\bf Threshold games.} A threshold-game is a binary-action anonymous pure-coordination polymatrix game, where every player $u$ is associated with a parameter $\gamma_u \in [0,1]$ and their payoff-matrix is 
$\Pi^{u} := \left[\begin{array}{cc}
\gamma_u & 0 \\
0 & 1-\gamma_u \\
\end{array}\right]$.

Clearly, finding optimal outcomes in a 1-type threshold-game (i.e., one wherein $\gamma_u$ is the same for all players $u$) is trivial.
However, as we now show, the problem becomes interesting when a second type is introduced. 
Specifically, we use the Language Game \cite{Neary:2012:GEB}, in which the population is partitioned into two types, $A$ and $B$, such that all players of type $A$ have threshold $\gamma_A$ and all players in Group $B$ have threshold $\gamma_B$, with $0\le \gamma_B \le \gamma_A \le 1$.
When $\gamma_B \le \frac{1}{2} \le \gamma_A$ a tension emerges:\ type-$A$ players prefer that everyone coordinates on action $\one$ while type-$B$ players prefer the opposite.
Even for this simple two-type threshold model, the following shows that most cases are intractable.


\begin{theorem}
\label{thm:max_NE-hard}
The complexity of finding a welfare-maxi\-mi\-zing Nash equilibrium in a threshold-game with two thresholds, $\gamma_A, \gamma_B$, such that $0 \le \gamma_B \le \gamma_A \le 1$, is as follows.
\begin{description}
\item[1.] If $\gamma_A \leq 1/2$ the problem is polynomial time.
\item[2.] If $\gamma_B \geq 1/2$ the problem is polynomial time.
\item[3.] If $\gamma_B = 0$ and $\gamma_A = 1$ the problem is polynomial time.
\item[4.] In all other cases the problem is NP-hard.
\end{description} 
\end{theorem}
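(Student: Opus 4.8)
First I would record the best-response rule. If player $u$ has threshold $\gamma_u$ and, among its neighbours, $d_2$ play \two and $d_1$ play \one, then \two is a best response iff $(1-\gamma_u)d_2 \ge \gamma_u d_1$, i.e.\ iff the fraction of \two-neighbours is at least $\gamma_u$ (symmetrically \one is a best response iff the fraction of \one-neighbours is at least $1-\gamma_u$), with ties making both actions best responses. In particular the all-\one and all-\two profiles are always Nash equilibria. The contribution of an edge $uv$ to welfare is $\gamma_u+\gamma_v$ if both play \one, $2-\gamma_u-\gamma_v$ if both play \two, and $0$ under disagreement. When $\gamma_A\le 1/2$ (Case~1) every threshold is $\le 1/2$, so on every edge $2-\gamma_u-\gamma_v \ge \gamma_u+\gamma_v \ge 0$; hence all-\two maximises welfare edge-by-edge over \emph{all} profiles, and since it is a Nash equilibrium it is the welfare-optimal one. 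Case~2 is the mirror image (all-\one) under the symmetry $\one\leftrightarrow\two$, $\gamma\leftrightarrow 1-\gamma$.

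\textbf{The decoupled case $\gamma_B=0,\gamma_A=1$.}
Here the best-response rule degenerates: a type-$A$ player may play \two only if all its neighbours play \two, and a type-$B$ player may play \one only if all its neighbours play \one (each may always play its ``preferred'' action). I would first prove a clean characterisation of the equilibria: writing $x_v=1$ iff $v$ plays \one, the conditions become $x_v\le x_a$ for every neighbour $v$ of a type-$A$ player $a$, and $x_b\le x_w$ for every neighbour $w$ of a type-$B$ player $b$. Consequently adjacent players of the same type must agree, so each connected component of $G[A]$ and of $G[B]$ is monochromatic, while across every $A$--$B$ edge the $B$-endpoint's value is at most the $A$-endpoint's (the only permitted disagreement is $A=\one$, $B=\two$); conversely every such assignment is a Nash equilibrium. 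Contracting same-type components into ``blobs'' turns welfare maximisation into choosing a $0/1$ value per blob, subject to the monotone constraints $x_\beta\le x_\alpha$ on $A$--$B$ blob-edges, so as to maximise a sum of single-blob rewards and pairwise agreement terms. This is a project-selection/closure instance solvable by a single min-cut, exactly in the spirit of Lemma~\ref{lem:case-1} but with the ordering constraints encoded by infinite-capacity arcs; hence Case~3 is polynomial.

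\textbf{The hard region (Case~4).}
For every $(\gamma_A,\gamma_B)$ with $1/2<\gamma_A\le 1$, $0\le\gamma_B<1/2$, and $(\gamma_B,\gamma_A)\ne(0,1)$, I would prove \NP-hardness by reduction from a canonical selection/partition problem (for instance Maximum Independent Set or \bisection). The workhorse is a family of \emph{anchor gadgets} that, in every Nash equilibrium, pin certain ``port'' vertices to a fixed action; by attaching to a player a controlled number of such fixed \one- or \two-ports, the fractional threshold $\gamma_u$ is realised as an exact \emph{integer} count threshold, so a port plays \two iff at least a prescribed number of its ``live'' neighbours do. Since $\gamma_A,\gamma_B$ are fixed rationals $p/q$, the padding uses $O(q)$ auxiliary vertices and the reduction stays polynomial. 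On top of this I would build variable gadgets (a cluster whose equilibrium action encodes a bit) and constraint gadgets whose welfare contribution is maximised exactly when the encoded constraint is satisfied, so that the value of a welfare-optimal equilibrium equals a fixed base plus the optimum of the source instance.

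\textbf{Main obstacle.}
The genuine difficulty is Case~4, and it is twofold. First, the reduction must be \emph{uniform across a two-parameter continuum}: one gadget scheme, with $\gamma$-dependent padding multiplicities, has to work for all $\gamma_A\in(1/2,1]$ and $\gamma_B\in[0,1/2)$ while degenerating gracefully at the excluded point $(0,1)$. The inequalities $\gamma_A>1/2>\gamma_B$ are precisely what let padding force both ``a strict majority of \two-neighbours is required'' (type $A$) and ``a strict majority of \one-neighbours is required'' (type $B$), creating the tension that encodes binary choices; at $(0,1)$ the thresholds are all-or-nothing, the equilibria decouple into the monochromatic blobs of Case~3, and no such gadget can exist, matching its tractability. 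Second, because we optimise over Nash equilibria rather than all profiles (welfare over all profiles is already polynomial by Corollary~\ref{cor:pure-coord}), every gadget must be shown \emph{equilibrium-robust}: no player inside a gadget may have a profitable unilateral deviation in the intended equilibria, and no unintended equilibrium may achieve spuriously high welfare. Managing this equilibrium bookkeeping across all subcases of the parameter region is where the bulk of the effort lies.
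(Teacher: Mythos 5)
Your Cases 1--3 are essentially the paper's own arguments: the edge-by-edge dominance of a monochromatic profile for Cases 1--2, and for Case 3 the characterisation of equilibria by monochromatic components of $G[A]$ and $G[B]$ with one-directional constraints across $A$--$B$ edges, contracted and solved by a min $(s,t)$-cut with infinite-capacity arcs encoding the forbidden configuration. That part is correct and matches the paper.

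Case 4, however, is where essentially all of the work lies, and what you give is a plan rather than a proof --- and the plan as stated contains a step that cannot work. You propose ``anchor gadgets that, in every Nash equilibrium, pin certain port vertices to a fixed action.'' No such gadget can exist in a pure-coordination polymatrix game: as you yourself note at the outset, the all-\one and all-\two profiles are always Nash equilibria, so no vertex is pinned to a fixed action across \emph{all} equilibria. The pinning must instead be enforced through \emph{welfare}: one must show that any profile violating the intended anchoring loses so much welfare that it cannot be welfare-optimal. This is exactly what the paper does --- it anchors with two large cliques $C_A$ and $C_B$ whose sizes $c_A,c_B$ are chosen so that deviating from $(C_A\subseteq X_{\one},\,C_B\subseteq X_{\two})$ costs at least a quantity $\theta$ exceeding the total welfare at stake in the rest of the construction --- and this switch from ``pinned in every NE'' to ``pinned in every welfare-optimal NE'' changes the bookkeeping substantially, because one must simultaneously certify that the intended profiles \emph{are} equilibria and that no unintended equilibrium is competitive. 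Beyond this, your sketch names no concrete source problem with a worked construction (the paper reduces from \textsc{Minimum Traversal} in $3$-uniform hypergraphs, using hyperedge-vertices $r_e$ wired to the cliques with multiplicities $x_A,x_B$ calibrated so that $r_e$ prefers \one iff at least one incident vertex-gadget plays \one, and pendant sets $Z_u$ that give the type-$B$ vertices $u'$ the freedom to play \one in equilibrium), and it contains none of the quantitative verification (the constants $\theta$, $x_A$, $x_B$, $c_A$, $c_B$, $z$ and Claims A--E) that makes the correspondence between welfare-optimal equilibria and minimum traversals exact. As written, Case 4 is not established.
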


\begin{proof}[Proof sketch]
{\bf Cases 1-2.}
If $\gamma_A \leq 1/2$, or $\gamma_B \geq 1/2$, then all players prefer the same action and selecting that action for each player 
gives a maximum-welfare Nash equilibrium.

{\bf Case 3.}
Now, let $\gamma_A=1$ and let $\gamma_B=0$ and $G$ be a graph and let $(A,B)$ be a partition of $V(G)$, so that if $u \in Y \Leftrightarrow \gamma_u = \gamma_Y$, where $Y \in \{ A,B\}$. 
We want to find a Nash equilibrium,  $(X_{\one},X_{\two})$, of maximum welfare, where $X_i$ denotes the set of players that choose action $i \in \{\one, \two\}$. 

Let  $(X_{\one},X_{\two})$ be a Nash equilibrium of $G$. This implies that all vertices in $X_{\two} \cap A$ only have edges to vertices in $X_{\two}$ (as
otherwise it is not a Nash equilibrium). Note that this means that any connected component $C$ in $G[A]$ is either subset of $X_{\one}$ or a subset of $X_{\two}$, else one can find an edge from a vertex in $X_{\two} \cap A$ to a vertex in $X_{\one}$. 
Analogously,
every connected component of $G[B]$ is either a subset of $X_{\one}$ or a subset of $X_{\two}$. Finally, given a connected component $C_A$ of $G[A]$ and a connected component $C_B$ of $G[B]$, if there is an edge in $G$ between a vertex $u\in C_A$ and a vertex $v\in C_B$, then it is not possible that $C_B\subseteq X_{\one}$ and $C_A\subseteq X_{\two}$.  

We can view the value of social welfare in a Nash equilibrium in term of the loss w.r.t. the sum of maximum achievable utility of each player which is in this case the sum of degrees of vertices in $G$ or $2|E(G)|$. The equilibrium that maximizes the social welfare then minimizes this loss. 
We now build an auxiliary digraph $D$ whose vertices are the components of $G[A]$, the components of $G[B]$ and two new vertices $s$ and $t$. 
There is an arc from $s$ to every component in $G[A]$, an arc from every component in $G[B]$ to $t$, and two opposite arcs between every connected component $C_A$ of $G[A]$ and every connected component $C_B$ of $G[B]$ such there is an edge in $G$ between a vertex $u\in C_A$ and a vertex $v\in C_B$.
The goal is to set up the weights of the arcs such that there is a one-to-one correspondence between minimal $s$-$t$ cuts in $D$ and Nash equilibria in $G$ given by: (i) every player in the ``$s$ side'' of the cut chooses action $\one$ and (ii) every player in the ``$t$ side'' chooses action $\two$. 
Moreover, the weight of the cut should be exactly the loss w.r.t. $2|E(G)|$. It is rather straightforward to verify that setting the weights as follows achieves this:
    every arc from $s$ to a component $C_A$ of $G[A]$ has weight sum of degrees of vertices in $C_A$; 
    every arc from a component $C_B$ of $G[B]$ to $t$ has weight sum of degrees of vertices in $C_B$; 
    every arc from a component $C_A$ of $G[A]$ to a component $C_B$ of $G[B]$ has weight two times the number of edges between $C_A$ and $C_B$;
    every arc from a component $C_B$ of $G[B]$ to a component $C_A$ of $G[A]$ has weight infinity (as players in $C_B$ cannot use strategy $\one$ if players in $C_A$ use strategy $\two$). 
To find a welfare-maximizing Nash equilibrium, it now suffices to find a minimum weight $s$-$t$ cut in $D$ that can be done in polynomial time~\cite{korte2011combinatorial}.

{\bf Case 4.} First note that if $\gamma_B = 0$, then $1/2 < \gamma_A < 1$, and we can exchange the role of $A$ and $B$ and that of the strategy $\one$ and strategy $\two$.
Hence, it suffices to prove the statement for $0 < \gamma_B <1/2 < \gamma_A \le 1$.

We will show the \NP-hardness result via a reduction from \textsc{Minimum Traversal} problem in $3$-uniform hypergraphs\footnote{The problem is 
 also called \textsc{$3$-Hitting Set}.}, which is known to be \NP-hard~\cite{garey1979computers}. 
 A $3$-uniform hypergraph $H$ has a set of vertices $V(H)$ and a set of hyperedges $E(H)$, where every hyperedge $e\in E(H)$ is a subset of vertices of size exactly $3$. 
 A traversal of $H$ is a set of vertices $X$ such that $X\cap e\neq \emptyset$ for every edge $e\in E(H)$. \textsc{Minimum Traversal} problem asks to find a traversal of the minimum size.

Let $H$ be a $3$-uniform hypergraph. We will create a graph, $G$, and a partition $(A,B)$ of $V(G)$ such that a solution to our problem for $G$ will give us a minimum traversal in $H$. We also refer the reader to Figure~\ref{fig:G} for an illustration.

\begin{figure}[t]
  \begin{center}
\tikzstyle{vertexB}=[circle,draw, minimum size=7pt, scale=0.9, inner sep=0.9pt]
\tikzstyle{vertexA}=[rectangle,draw, minimum size=6pt, scale=0.9, inner sep=0.9pt]
\tikzstyle{vertexBs}=[circle,draw, minimum size=6pt, scale=0.5, inner sep=0.9pt]
\begin{tikzpicture}[scale=0.3]


  \draw (0,12) rectangle (8,14); \node at (4,13) {$C_A$};
  \node (a1) at (1,13.2) [vertexA]{};
  \node (a2) at (2,12.8) [vertexA]{};
  \node (a3) at (6,13.2) [vertexA]{};
  \node (a4) at (7,12.8) [vertexA]{};

  \draw (10,12) rectangle (18,14); \node at (14,13) {$C_B$};
  \node (a1) at (11,13.2) [vertexB]{};
  \node (a2) at (12,12.8) [vertexB]{};
  \node (a3) at (16,13.2) [vertexB]{};
  \node (a4) at (17,12.8) [vertexB]{};

  \node (r1) at (1,9) [vertexB]{$r_{e_1}$};
  \node (r2) at (5,9) [vertexB]{$r_{e_2}$};
  \node (r3) at (9,9) [vertexB]{$r_{e_3}$};
  \node at (13,9) {$\cdots$};
  \node (r4) at (17,9) [vertexB]{$r_{e_m}$};

  \draw[line width=0.03cm] (r1) to (2,12);
  \draw[line width=0.03cm] (r1) to (10.5,12);

  \draw[line width=0.03cm] (r2) to (3,12);
  \draw[line width=0.03cm] (r2) to (11.5,12);

  \draw[line width=0.03cm] (r3) to (4,12);
  \draw[line width=0.03cm] (r3) to (12.5,12);

  \draw[line width=0.03cm] (r4) to (6,12);
  \draw[line width=0.03cm] (r4) to (14.5,12);

  \node (u1) at (1,4) [vertexB]{$u_1'$};
  \node (u2) at (4,4) [vertexB]{$u_2'$};
  \node (u3) at (7,4) [vertexB]{$u_3'$};
  \node (u4) at (10,4) [vertexB]{$u_4'$};
  \node at (13.5,4) {$\cdots$};
  \node (u5) at (17,4) [vertexB]{$u_n'$};

  \draw[line width=0.03cm] (r1) to (u1);
  \draw[line width=0.03cm] (r1) to (u2);
  \draw[line width=0.03cm] (r1) to (u3);

  \draw[line width=0.03cm] (r2) to (u1);
  \draw[line width=0.03cm] (r2) to (u3);
  \draw[line width=0.03cm] (r2) to (u4);

  \draw[line width=0.03cm] (r3) to (8.4,7.8);
  \draw[line width=0.03cm] (r3) to (9,7.8);
  \draw[line width=0.03cm] (r3) to (9.6,7.8);

  \draw[line width=0.03cm] (r4) to (16.4,7.8);
  \draw[line width=0.03cm] (r4) to (17,7.8);
  \draw[line width=0.03cm] (r4) to (17.6,7.8);

  \draw (0,-1.5) rectangle (2,1); \node at (1,-0.7) {$Z_{u_1'}$};
  \node (z11) at (0.4,0.5) [vertexBs]{};
  \node (z12) at (1,0.5) [vertexBs]{};
  \node (z13) at (1.6,0.5) [vertexBs]{};
  \draw[line width=0.08cm] (u1) to (1,1);

  \draw (3,-1.5) rectangle (5,1); \node at (4,-0.7) {$Z_{u_2'}$};
  \node (z11) at (3.4,0.5) [vertexBs]{};
  \node (z12) at (4,0.5) [vertexBs]{};
  \node (z13) at (4.6,0.5) [vertexBs]{};
  \draw[line width=0.08cm] (u2) to (4,1);

  \draw (6,-1.5) rectangle (8,1); \node at (7,-0.7) {$Z_{u_3'}$};
  \node (z11) at (6.4,0.5) [vertexBs]{};
  \node (z12) at (7,0.5) [vertexBs]{};
  \node (z13) at (7.6,0.5) [vertexBs]{};
  \draw[line width=0.08cm] (u3) to (7,1);

  \draw (9,-1.5) rectangle (11,1); \node at (10,-0.7) {$Z_{u_4'}$};
  \node (z11) at (9.4,0.5) [vertexBs]{};
  \node (z12) at (10,0.5) [vertexBs]{};
  \node (z13) at (10.6,0.5) [vertexBs]{};
  \draw[line width=0.08cm] (u4) to (10,1);

  \draw (16,-1.5) rectangle (18,1); \node at (17,-0.7) {$Z_{u_n'}$};
  \node (z11) at (16.4,0.5) [vertexBs]{};
  \node (z12) at (17,0.5) [vertexBs]{};
  \node (z13) at (17.6,0.5) [vertexBs]{};
  \draw[line width=0.08cm] (u5) to (17,1);

  \end{tikzpicture}
\caption{The construction of Case 4 in Theorem~\ref{thm:max_NE-hard}. The graph $G$ when $H$ is a $3$-uniform hypergraph with $m$ edges, including the edges $e_1=\{u_1,u_2,u_3\}$ and $e_2=\{u_1,u_3,u_4\}$.
The square vertices in $C_A$ denotes A-vertices and round vertices (everywhere else) denote B-vertices. 
} \label{fig:G}
\end{center} \end{figure}
For the reduction, we will need to fix some sizes that depend of $\gamma_A$, $\gamma_B$, $|V(H)|$, and $|E(H)|$. We postpone the selection of the sizes to the appendix, where we give exact proofs of the statements below. The graph $G$ consists of:
\begin{itemize}
    \item Cliques $C_A$, $C_B$ that are ``sufficiently large''; the sizes are selected in a way such that in any welfare-optimal Nash equilibrium, players in $C_A$ choose action $\one$ and in $C_B$ choose action $\two$. $A$ contains precisely the players in $C_A$, all the remaining players are in $B$. 
    \item Vertex sets $R = \bigcup_{e\in E(H)}\{r_e\}$, $V' = \bigcup_{u\in V(H)}\{u'\}$. There is an edge between $r_e$ and $u'$ iff $u\in e$. Moreover, there are $c_A$ edges from every $r_e$ to $C_A$ and $c_B$ edges from every $r_e$ to $C_B$. $c_A$ and $c_B$ are selected so that if every player in $C_A$ uses $\one$ and in $C_B$ $\two$, then $r_e$ (recall that $r_e\in B$) prefers $\one$ if at least one of its exactly three neighbors in $V'$ chooses $\one$.
    \item For every $u\in V(H)$, vertex set $Z_u$ of size $z$ that is to be fixed. There is an edge from every vertex $z\in Z_u$ to the vertex $u'\in V'$. Hence, in any Nash equilibrium $z$ chooses the same strategy as $u'$. This also allows $u'$ the freedom of choosing $\one$ even though $u'\in B$.
\end{itemize}

By a clever selection of the sizes, we can achieve that if $\sbf$ is a welfare-maximizing Nash equilibrium that partitions players into $(X_\one, X_\two)$ by the strategy they are using, then: 
\begin{enumerate}
    \item $V(C_A)\cup R\subseteq X_\one$, $V(C_B)\subseteq X_\two$. Intuitively, this is because the choice of $c_A$ and $c_B$ guarantees that the loss of utility for the players in $C_A$ when a player $r_e\in R$ selects $\two$, is larger than the gain players in $V(G)\setminus V(C_A)$ can get from this selection. 
    \item For every $e\in E(H)$, $r_e$ has a neighbor in $V'\cap X_\one$;
\end{enumerate}
Note that the second condition states that $V'\cap X_\one$ is a traversal of $H$. 
We now observe that because $V'\cup \bigcup_{u\in V(H)}Z_u\subseteq B$, there is a significant gain in overall welfare if we decrease the number of players in $V'$ that choose action $\one$, as long as we preserve the above two conditions. Therefore, a welfare-maximizing Nash equilibrium not only gives a traversal, but a minimum traversal~in~$H$.
\end{proof}




\section{Discussion}

Our paper provides several novel dichotomy results for the complexity of economically-efficient outcomes in general binary-action polymatrix games, coordination games, potential games, and threshold games. 

Our main tool for deriving the majority of these results is the dichotomy for \mwdp, a novel graph-theoretic problem, which we strongly believe will find applications in other domains too. 
To this end, we have already identified several problems arising from  graph theory whose complexity can easily be determined by our dichotomy for \mwdp. 
Our list includes both well-known problems ((Directed) Max Cut, (Directed) Min $(s,t)$-cut,  Max Density Subgraph) and new ones that we describe in the supplementary material.

\paragraph*{Acknowledgements.}
Anders Yeo's research was supported by the Danish research council for independent research under grant number DFF 7014-00037B.

\bibliographystyle{named}
\bibliography{ijcai23}

\appendix

\newpage
\onecolumn
\section{Proving Lemma \ref{lem:case-1} and Other Basic Results}

Recall that we consider the following general problem called \maxweightpartition (\mwdp).

\begin{description}
 \item[\mwdp(${\cal F}$):] We are given a family of $2 \times 2$ matrices ${\cal F}$. For each matrix $M \in {\cal F}$ let $m_{11}$, $m_{12}$, $m_{21}$ and 
$m_{22}$ denote the entries of $M$, such that 

\[
M = \left[\begin{array}{cc}
m_{11} & m_{12} \\
m_{21} & m_{22} \\
\end{array}\right]
\]

  An instance of {\it \mwdp(${\cal F}$)} consists of an oriented graph $D$ (that is, a directed graph without directed $2$-cycles),  
arc-weights $c: A(D) \rightarrow \mathbb{R}^+$ and 
an assignment, $f: A(D) \rightarrow {\cal F}$, of a matrix from ${\cal F}$ to each arc in $D$.

Given a partition, $P=(X_1,X_2)$, of $V(D)$ the weight of en arc $uv \in E(D)$ is defined as follows, where $M=f(uv)$ is the matrix
assigned to the arc $uv$.

 \[
w^P(uv) = \left\{
\begin{array}{rcl}
c(uv) \cdot m_{11} & \mbox{if} & u,v \in X_1 \\
c(uv) \cdot m_{22} & \mbox{if} & u,v \in X_2 \\
c(uv) \cdot m_{12} & \mbox{if} & u \in X_1 \mbox { and } v \in X_2 \\
c(uv) \cdot m_{21} & \mbox{if} & u \in X_2 \mbox { and } v \in X_1 \\
\end{array}
\right.
\]

  The weight of $D$, $w^P(D)$, is defined as the sum of all the weights of the arcs. That is, $w^P(D) = \sum_{a \in A(D)} w^P(a)$. 
  The aim is to find the partition, $P$, that maximizes $w^P(D)$.
\end{description}

\vspace{2mm}

{\bf Lemma} {\bf \ref{lem:case-1}}
{\em If $m_{11} + m_{22} \geq m_{12} + m_{21}$ for all matrices $M \in {\cal F}$ then {\it \mwdp(${\cal F}$)} can be solved in polynomial time.}

\pf
Let $(D,c,f)$ be an instance of {\it \mwdp(${\cal F}$)}.
We will construct a new (undirected) graph $H$ with vertex set $V(D) \cup \{s,t\}$ as follows. Let $UG(D)$ denote the undirected graph obtained from $D$ by removing
orientations of all arcs. 
Initially let $E(H) = E(UG(D)) \cup \{su,tu \; | \; u \in V(D) \}$ 
and let all edges in $H$ have weight zero. Now for each arc $uv \in A(D)$ we modify the weight function $w$ of $H$ as follows, where $M=f(uv)$ is the matrix associated
with the arc $uv$.

\begin{itemize}
\item Let $w(uv)=c(uv) \times \frac{m_{11} + m_{22} - m_{12} - m_{21}}{2}$. Note $w(uv)$ is the weight of the undirected edge $uv$ in $H$ associated with the arc $uv$ in $D$ and that this weight is non-negative.
\item Add $c(uv) \times \left( \frac{-m_{22}}{2}\right)$ to $w(su)$.
\item Add $c(uv) \times \left( \frac{-m_{22}}{2}\right)$ to $w(sv)$.
\item Add $c(uv) \times \frac{m_{21}-m_{11}-m_{12}}{2}$ to $w(tu)$.
\item Add $c(uv) \times \frac{m_{12}-m_{11}-m_{21}}{2}$ to $w(tv)$.
\end{itemize}

Let $\theta$ be the smallest possible weight of all edges in $H$ after we completed the above process ($\theta$ may be negative).
Now consider the weight function $w^*$ obtained from $w$ by 
subtracting $\theta$ from all edges incident with $s$ or $t$. That is, $w^*(uv)=w(uv)$ if $\{u,v\} \cap \{s,t\}=\emptyset$ and 
$w^*(uv)=w(uv) - \theta$ otherwise.  
Now we note that all $w^*$-weights in $H$ are non-negative.  We will show that for any $(s,t)$-cut, $(X_1,X_2)$ in $H$ (ie, $(X_1,X_2)$ partitions 
$V(H)$ and $s \in X_1$ and $t \in X_2$) the $w^*$-weight of the cut is equal to $-w^P(D) - |V(D)|\theta$, where $P$ is the partition 
$(X_1 \setminus \{s\},X_2 \setminus \{t\})$ in $D$.  Therefore a minimum weight cut in $H$ maximizes $w^P(D)$ in $D$.

Let $(X_1,X_2)$ be any $(s,t)$-cut in $H$.  For every $u \in V(D)$ we note that exactly one of the edges $su$ and $ut$ will belong to the cut.
Therefore, we note that the $w^*$-weight of the cut is $|V(D)|\theta$ less than the $w$-weight of the cut.
 It therefore suffices to show that the $w$-weight of the cut is $-w^P(D)$ (where  $P$ is the partition
$(X_1 \setminus \{s\},X_2 \setminus \{t\})$ of $V(D)$).

Let $uv \in A(D)$ be arbitrary and consider the following four possibilities.

\begin{itemize}
\item $u,v \in X_1$. In this case when we considered the arc $uv$ above (when defining the $w$-weights) we added  
$c(uv) \times \frac{m_{21}-m_{11}-m_{12}}{2}$ to $w(tu)$ and $c(uv) \times \frac{m_{12}-m_{11}-m_{21}}{2}$ to $w(tv)$.
So, we added the following amount to the $w$-weight of the  $(s,t)$-cut, $(X_1,X_2)$.
\[
c(uv) \times \left( \frac{m_{21}-m_{11}-m_{12}}{2} + \frac{m_{12}-m_{11}-m_{21}}{2} \right) = -c(uv) \cdot m_{11}
\]
\item $u,v \in X_2$. In this case when we considered the arc $uv$ above (when defining the $w$-weights) we added
$c(uv) \times \left(\frac{-m_{22}}{2}\right)$ to $w(su)$ and $c(uv) \times \left(\frac{-m_{22}}{2}\right)$ to $w(sv)$.
So, we added the following amount to the $w$-weight of the  $(s,t)$-cut, $(X_1,X_2)$.
\[
c(uv) \times \left( \frac{-m_{22}}{2} + \frac{-m_{22}}{2} \right) = -c(uv) \cdot m_{22}
\]
\item $u \in X_1$ and $v \in X_2$. In this case when we considered the arc $uv$ above (when defining the $w$-weights) we 
let $w(uv)=c(uv) \times \frac{m_{11} + m_{22} - m_{12} - m_{21}}{2}$ and added
$c(uv) \times \left( \frac{-m_{22}}{2}\right)$ to $w(sv)$ and
$c(uv) \times \frac{m_{21}-m_{11}-m_{12}}{2}$ to $w(tu)$.
So, we added the following amount to the $w$-weight of the  $(s,t)$-cut, $(X_1,X_2)$.
\[
c(uv) \times \left( \frac{m_{11} + m_{22} - m_{12} - m_{21}}{2} + \frac{-m_{22}}{2} + \frac{m_{21}-m_{11}-m_{12}}{2} \right) = -c(uv) \cdot m_{12}
\]
\item $v \in X_1$ and $u \in X_2$. In this case when we considered the arc $uv$ above (when defining the $w$-weights) we
let $w(uv)=c(uv) \times \frac{m_{11} + m_{22} - m_{12} - m_{21}}{2}$ and added
$c(uv) \times \left(\frac{-m_{22}}{2}\right)$ to $w(su)$ and
$c(uv) \times \frac{m_{12}-m_{11}-m_{21}}{2}$ to $w(tv)$.
So, we added the following amount to the $w$-weight of the  $(s,t)$-cut, $(X_1,X_2)$.
\[
c(uv) \times \left( \frac{m_{11} + m_{22} - m_{12} - m_{21}}{2} + \frac{-m_{22}}{2} + \frac{m_{12}-m_{11}-m_{21}}{2} \right) = -c(uv) \cdot m_{21}
\]
\end{itemize}

Therefore we note that in all cases we have added $-w^P(uv)$ to the $w$-weight of the  $(s,t)$-cut, $(X_1,X_2)$.
So the total $w$-weight of the  $(s,t)$-cut is $-w^P(D)$ as desired.

Analogously, if we have a partition $P=(X_1,X_2)$ of $V(D)$, then adding $s$ to $X_1$ and $t$ to $X_2$ we obtain a $(s,t)$-cut
with $w$-weight $-w^P(D)$ of $H$. 
As we can find a minimum $w^*$-weight cut in $H$ in polynomial time we can find a  partition $P=(X_1,X_2)$ of $V(D)$ with 
minimum value of $-w^P(D)$, which corresponds to the maximum value of $w^P(G)$. Therefore, 
{\it \mwdp(${\cal F}$)} can be solved in polynomial time in this case.~\qed

\begin{proposition}\label{thm2}
If $m_{11} = \max\{ m_{11}, m_{22}, m_{12}, m_{21} \}$ for all matrices $M \in {\cal F}$ then {\it \mwdp(${\cal F}$)} can be solved in polynomial time.
\end{proposition}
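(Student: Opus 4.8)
The plan is to observe that Property (b) renders the problem trivial: the partition placing every vertex in $X_1$ simultaneously maximizes the contribution of every individual arc, hence maximizes the total weight. No auxiliary graph or flow computation is needed, in contrast to Case (a) handled by Lemma~\ref{lem:case-1}.

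First I would fix an arbitrary partition $P=(X_1,X_2)$ of $V(D)$ and recall from the definition of $w^P$ that the weight contributed by any arc $uv \in A(D)$ equals $c(uv)\cdot m_{ij}$ for exactly one entry $m_{ij}$ of the matrix $M=f(uv)$, where the index $(i,j)$ is determined solely by which sides of the partition $u$ and $v$ land on. Since $c(uv)\in\mathbb{R}^+$ is strictly positive and Property (b) guarantees $m_{11}\geq m_{ij}$ for every entry of every matrix in $\mathcal{F}$, this yields the arc-wise upper bound $w^P(uv)\leq c(uv)\cdot m_{11}$, valid for every partition $P$ and every arc $uv$.

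Next I would exhibit the partition $P^\ast=(V(D),\emptyset)$, i.e.\ $X_1=V(D)$ and $X_2=\emptyset$. Under $P^\ast$ both endpoints of every arc $uv$ lie in $X_1$, so $w^{P^\ast}(uv)=c(uv)\cdot m_{11}$, attaining the above bound with equality on each arc. Summing over all arcs gives $w^{P^\ast}(D)=\sum_{uv\in A(D)} c(uv)\cdot m_{11}\geq w^P(D)$ for every partition $P$, so $P^\ast$ is optimal. Since $P^\ast$ can be written down in linear time, $\mathcal{F}$-MWDP is solvable in polynomial time, which establishes the claim.

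I do not expect any genuine obstacle here; the only point requiring care is that the bound $w^P(uv)\leq c(uv)\cdot m_{11}$ holds arc by arc and the bound is met \emph{simultaneously} for all arcs by the single partition $P^\ast$. It is precisely this decoupling across arcs — a consequence of $m_{11}$ being the global maximum entry together with the positivity of $c$ — that removes the need for any min-cut construction, unlike the argument in Lemma~\ref{lem:case-1}.
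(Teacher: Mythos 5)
Your argument is correct and is essentially the paper's own proof: the paper likewise observes that the partition $(V(D),\emptyset)$ attains the arc-wise maximum $c(uv)\cdot m_{11}$ on every arc simultaneously, hence is optimal. Your write-up merely spells out the arc-wise upper bound and the role of the positivity of $c$ more explicitly.
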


\pf
In this case the maximum is clearly obtained by the partition $P=(X_1,X_2)=(V(D),\emptyset)$ as the maximum possible value of $w^P(a)$ for every $a \in A(D)$
is obtained when both endpoints of $a$ belong to $X_1$.  Therefore, the problem is polynomial time solvable.~\qed


\begin{proposition}\label{thm3}
If $m_{22} = \max\{ m_{11}, m_{22}, m_{12}, m_{21} \}$ for all matrices $M \in {\cal F}$ then {\it \mwdp(${\cal F}$)} can be solved in polynomial time.
\end{proposition}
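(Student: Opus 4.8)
The plan is to mirror the argument of Proposition~\ref{thm2} verbatim, exploiting the symmetry between Properties (b) and (c). Property (b) ensures that $m_{11}$---the entry selected when both endpoints of an arc lie in $X_1$---is the largest entry of every matrix, which makes $P=(V(D),\emptyset)$ optimal there. Property (c) makes the identical guarantee for $m_{22}$, the entry selected when both endpoints lie in $X_2$. Hence the roles of $X_1$ and $X_2$ are simply interchanged, and I expect the trivially-optimal solution to be $P=(X_1,X_2)=(\emptyset,V(D))$, i.e., the partition that places every vertex in $X_2$.

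Concretely, I would first observe that under $P=(\emptyset,V(D))$ every arc $a=uv$ has both endpoints in $X_2$, so by definition $w^P(a)=c(uv)\cdot m_{22}$, where $M=f(uv)$. Next I would establish optimality arc-by-arc: for any partition $P'$ and any arc $a$, the quantity $w^{P'}(a)$ equals one of $c(a)\cdot m_{11}$, $c(a)\cdot m_{22}$, $c(a)\cdot m_{12}$, or $c(a)\cdot m_{21}$; since $c(a)>0$ and Property (c) gives $m_{22}=\max\{m_{11},m_{22},m_{12},m_{21}\}$, each of these is at most $c(a)\cdot m_{22}=w^{P}(a)$. Summing over all arcs yields $w^{P'}(D)\le w^{P}(D)$, so $P$ attains the maximum. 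Since $P$ is produced in constant time and its weight computed in linear time, $\mwdp({\cal F})$ is solvable in polynomial time.

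I do not expect any genuine obstacle here: the statement is the exact mirror image of Proposition~\ref{thm2} under the substitution $X_1\leftrightarrow X_2$ and $m_{11}\leftrightarrow m_{22}$. The only subtle point worth flagging is where positivity of the weights is used---it enters precisely in the arc-by-arc comparison $c(a)\cdot m_{ij}\le c(a)\cdot m_{22}$, which would fail if $c(a)$ were permitted to be negative; this is exactly why the problem definition restricts $c$ to $\mathbb{R}^+$.
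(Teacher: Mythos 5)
Your proposal is correct and matches the paper's proof exactly: both take the partition $P=(\emptyset,V(D))$ and observe that, since $m_{22}$ is the largest entry of every matrix and all arc weights $c(a)$ are positive, each arc already attains its maximum possible contribution, so $P$ is optimal. The arc-by-arc comparison and the remark on where positivity of $c$ is used are just a slightly more explicit rendering of the paper's one-line argument.
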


\pf
In this case the maximum is clearly obtained by the partition $P=(X_1,X_2)=(\emptyset,V(D))$ as the maximum possible value of $w^P(a)$ for every $a \in A(D)$
is obtained when both endpoints of $a$ belong to $X_2$.  Therefore, the problem is polynomial time solvable.~\qed

\2

A hypergraph, $H=(V,E)$, is $3$-{\em uniform} if all edges of $H$ contain three vertices. 
Two edges {\em overlap} if they have at least two vertices in common.
A hypergraph is {\em linear} if it contains no overlapping edges.
A hypergraph is {\em 2-colorable} if the vertex set can be colored with two colors such that every edge contains vertices of both colors.
The following result is well-known.

\2

\begin{theorem}\label{thmL}\cite{Lovasz}\footnote{In fact,  \cite{Lovasz} proves the result for 4-uniform hypergraphs, but a simple gadget allows one to reduce it to 3-uniform hypergraphs. Also, note that the 2-colorability of 3-uniform hypergraphs problem is equivalent to the monotone NAE 3-SAT problem whose NP-completeness follows from  Schaefer's dichotomy theorem
\cite{Schaefer78}, where monotonicity means that no negations are allowed in the SAT formulas.} 
It is NP-hard to decide if a $3$-uniform hypergraph is 2-colorable.
\end{theorem}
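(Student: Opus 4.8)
The plan is to prove the statement through the well-known equivalence between $2$-colorability of $3$-uniform hypergraphs and \emph{monotone NAE $3$-SAT}, and then to derive \NP-hardness of the latter from Schaefer's dichotomy theorem. First I would set up the correspondence explicitly: given a $3$-uniform hypergraph $H=(V,E)$, introduce one Boolean variable per vertex and, for each hyperedge $e=\{a,b,c\}$, impose the single ternary not-all-equal constraint on the variables of $a,b,c$. A proper $2$-coloring of $H$ is exactly a truth assignment under which no constraint is monochromatic, i.e.\ an NAE-satisfying assignment; conversely every NAE-satisfying assignment yields a proper $2$-coloring by reading the two truth values as the two colors. Because hyperedges are unordered and carry no orientation, every variable occurs only positively, so the resulting instance is monotone. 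This correspondence is an identity of instances and trivially preserves yes/no answers in both directions, so $3$-uniform hypergraph $2$-colorability and monotone NAE $3$-SAT are the same decision problem up to renaming.

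Next I would invoke Schaefer's dichotomy for the ternary relation $R=\{0,1\}^3\setminus\{(0,0,0),(1,1,1)\}$, whose constraint satisfaction problem $\mathrm{CSP}(R)$ is precisely monotone NAE $3$-SAT (the relation is symmetric, so no negated occurrences ever arise). It then remains only to verify that $R$ lies in none of Schaefer's six tractable classes. It is not $0$-valid since $(0,0,0)\notin R$, and not $1$-valid since $(1,1,1)\notin R$. It is not affine, since an affine relation is a coset of a linear subspace of $\mathbb{F}_2^3$ and hence has cardinality a power of two, whereas $|R|=6$. It is not bijunctive ($2$-SAT-definable), i.e.\ not closed under the coordinatewise majority operation, as witnessed by $(0,0,1),(0,1,0),(1,0,0)\in R$ whose majority is $(0,0,0)\notin R$. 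It is not Horn (not closed under coordinatewise $\min$), witnessed by $(0,1,1),(1,0,0)\in R$ with $\min=(0,0,0)\notin R$, and dually not dual-Horn (not closed under coordinatewise $\max$), witnessed by the same pair with $\max=(1,1,1)\notin R$. Since $R$ avoids all tractable cases, Schaefer's theorem yields that $\mathrm{CSP}(R)$ is \NP-complete, and hence $3$-uniform hypergraph $2$-colorability is \NP-hard.

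An alternative, more combinatorial route would start from Lov\'asz's \NP-hardness for $4$-uniform hypergraphs and reduce it to the $3$-uniform case by a local gadget replacing each $4$-edge with a constant number of $3$-edges and a fresh vertex. I expect the main obstacle to lie here rather than in the Schaefer calculation: the naive split of a $4$-edge $\{a,b,c,d\}$ into $\{a,b,x\}$ and $\{c,d,x\}$ fails, because $x$ can be colored opposite to a monochromatic $\{a,b,c,d\}$ and thereby satisfy both $3$-edges, so a monochromatic $4$-edge would wrongly survive. The difficulty is that hypergraph $2$-coloring forbids the ``negation'' that the analogous NAE-$4$-SAT-to-NAE-$3$-SAT reduction relies on, so one must additionally force, via extra $3$-edges, that an auxiliary vertex takes the color opposite to $x$. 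For this reason the Schaefer-based argument is the cleaner of the two, and it is the one I would present in full; the gadget reduction from the $4$-uniform case is worth recording only as an independent confirmation.
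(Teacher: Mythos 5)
The paper does not actually prove this statement---it cites it to Lov\'asz, and its footnote sketches exactly the two routes you describe (the $4$-uniform result plus a gadget, or the equivalence with monotone NAE $3$-SAT combined with Schaefer's dichotomy). Your elaboration of the Schaefer route is correct: the relation $\{0,1\}^3\setminus\{(0,0,0),(1,1,1)\}$ fails all six tractable conditions exactly as you verify, so your proposal matches the argument the paper intends and simply fills in the details the footnote omits.
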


\2

We will extend the above theorem to the following theorem, where we only consider linear hypergraphs.

\begin{theorem} \label{thmNPhard}
It is NP-hard to decide if a linear $3$-uniform hypergraph is 2-colorable.
\end{theorem}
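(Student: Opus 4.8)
The plan is to reduce from $2$-colorability of general (not necessarily linear) $3$-uniform hypergraphs, which is NP-hard by Theorem~\ref{thmL}. The only obstruction to linearity is that two edges of the input hypergraph $H$ may share two vertices (we may delete duplicate edges, so assume no two edges share all three). I would remove all such overlaps in one stroke by \emph{splitting vertices along incidences}: for each vertex $v$ of $H$, create one private copy $v^{(1)},\dots,v^{(d_v)}$ for each of the $d_v$ edges containing $v$, and in each edge replace its three vertices by the corresponding copies. After this replacement the ``primal'' edges are pairwise \emph{vertex-disjoint}, hence trivially linear; the whole difficulty is shifted to re-imposing, by auxiliary edges, the constraint that all copies of a single $v$ receive the same color.

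The heart of the construction is therefore an \emph{equality gadget}: a linear $3$-uniform hypergraph that is itself $2$-colorable but forces two designated terminal vertices $x,y$ to get equal colors, and that moreover admits, for each of the two colors, a proper $2$-coloring in which $x,y$ take that color (transparency). I would build this from the Fano plane $\mathcal{F}$, the projective plane of order $2$: it is linear (any two of its seven lines meet in exactly one point) and it is not $2$-colorable, yet it is edge-critical, i.e.\ deleting any single line $\ell=\{p,q,r\}$ yields a $2$-colorable hypergraph $\mathcal{F}\setminus\ell$. Because $\mathcal{F}$ itself has no $2$-coloring, in \emph{every} proper $2$-coloring of $\mathcal{F}\setminus\ell$ the three points $p,q,r$ must be monochromatic (otherwise $\ell$ would also be non-monochromatic, $2$-coloring all of $\mathcal{F}$); conversely a concrete $2$-coloring exists, and the global color-flip symmetry supplies one for each color. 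Taking a fresh copy of $\mathcal{F}\setminus\ell$ with $p,q,r$ identified with $x$, $y$, and a fresh vertex thus forces $x=y$ and is transparent, and to force all $d_v$ copies of $v$ equal I would chain such gadgets along $v^{(1)},\dots,v^{(d_v)}$, each gadget using its own private Fano points.

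The remaining checks are then routine, and the main obstacle is precisely guaranteeing that inserting these gadgets keeps the \emph{entire} hypergraph linear. This is where deleting the line $\ell=\{p,q,r\}$ is essential: since in $\mathcal{F}$ any two points lie on a unique common line, in $\mathcal{F}\setminus\ell$ no surviving line contains two of the terminals $p,q,r$, so each gadget line meets the terminal set in at most one vertex. Consequently a gadget line shares at most one vertex with any primal edge (at most one copy vertex), gadgets attached to different original vertices are vertex-disjoint, and two gadgets consecutive in a chain meet only in their shared terminal, while all internal sharing is inherited from the linearity of $\mathcal{F}$. For correctness, a proper $2$-coloring $c$ of $H$ extends by coloring every copy of $v$ with $c(v)$ and using transparency to fill in the gadgets, whereas any proper $2$-coloring of the constructed hypergraph assigns all copies of $v$ a common color through the equality gadgets, and the vertex-disjoint primal edges then certify that this induced coloring properly $2$-colors $H$. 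Since $\sum_v d_v = 3\,|E(H)|$ and each gadget has constant size, the reduction is polynomial, establishing NP-hardness.
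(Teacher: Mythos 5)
Your proof is correct and rests on the same two pillars as the paper's: a reduction from $2$-colorability of general $3$-uniform hypergraphs, and the Fano plane minus one line used as a linearity-preserving equality gadget (its three deleted-line points are forced monochromatic in every proper $2$-coloring, yet both colors are realizable). The only difference is organizational rather than conceptual --- you split every vertex into degree-many copies in a single pass and chain equality gadgets to make all primal edges vertex-disjoint, whereas the paper repairs one overlapping pair of edges at a time by replacing a single shared vertex with the three points of the deleted line and iterates until no overlaps remain; your one-shot version has the minor advantage of not needing the (true but unargued-in-detail) claim that each repair step strictly decreases the number of overlapping pairs without creating new ones.
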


\pf
  We will reduce from the problem to decide if a $3$-uniform hypergraph is 2-colorable, which is NP-hard by  Theorem \ref{thmL}.
So let $H$ be a $3$-uniform hypergraph. We will reduce $H$ to a linear $3$-uniform hypergraph, $H'$, such that $H$ is 2-colorable if and only if
$H'$ is 2-colorable. This will complete the proof.

We may assume that $H$ is not linear, as otherwise we just let $H'=H$ and we are done. Let $e_y=\{x_1,x_2,y\}$ and $e_z=\{x_1,x_2,z\}$ be two edges that intersect
in at least two vertices. 
Let $F$ be a copy of the Fano plane and let $e=\{f_1,f_2,f_3\}$ be any edge in $F$. It is known that $F$ is not 2-colorable but $F-e$ is 2-colorable and any 
2-coloring of $F-e$ assigns the same color to all vertices of $e$ (as $F$ is not 2-colorable). Now add $F-e$ to $H-x_1$ and for every edge in $H$ that contains
$x_1$ replace $x_1$ by some vertex in $\{f_1,f_2,f_3\}$. Furthermore do this such that $e_y$ and $e_z$ are given different vertices in $\{f_1,f_2,f_3\}$.
Note that the resulting hypergraph is 2-colorable if and only if $H$ is 2-colorable. Furthermore, the number of edges that overlap has decreased. 
So, we have to perform the above operation at most ${|E(H)| \choose 2} \leq |E(H)|^2$ times in order to get rid of all overlapping edges. So this reduction is polynomial
and results in a linear $3$-uniform hypergraph, $H'$, that is 2-colorable if and only if
$H$ 2-colorable.~\qed

\begin{theorem} \label{thm4}
If $m_{11} + m_{22} < m_{12} + m_{21}$ and $\max\{m_{11},m_{22}\} < \max\{ m_{12}, m_{21} \}$  for some matrix $M \in {\cal F}$ then {\it \mwdp(${\cal F}$)} is NP-hard to solve.
\end{theorem}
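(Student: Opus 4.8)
The plan is to prove hardness already for instances that use only the single ``bad'' matrix $M$ on every arc, so that it suffices to show $\mwdp(\{M\})$ is \NP-hard. First I would \emph{normalize}: subtracting $m_{11}$ from all four entries changes $w^P(D)$ by the fixed constant $m_{11}\sum_{a}c(a)$, which is independent of $P$ and hence preserves optima; after this $m_{11}=0$. Reversing the orientation of every arc replaces $M$ by $M^{T}$ and interchanges $m_{12}$ and $m_{21}$, so I may assume $m_{12}\ge m_{21}$. The hypotheses then read $m_{12}>\max\{0,m_{22}\}$ and $\gamma:=m_{11}+m_{22}-m_{12}-m_{21}<0$; in particular $m_{12}$ is the largest entry.

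Writing $x_v=1$ when $v\in X_2$ and $x_v=0$ when $v\in X_1$, expanding the weight of an arc $u\to v$ gives
\[
w^P(uv)=c(uv)\bigl(m_{21}\,x_u+m_{12}\,x_v+\gamma\,x_u x_v\bigr),
\]
so, up to the additive constant above,
\[
w^P(D)=\text{(const)}+\gamma\!\!\sum_{uv\in A(D)}\!\!c(uv)\,x_u x_v+\sum_{v}\ell_v x_v .
\]
Setting $W_{uv}:=-\tfrac{\gamma}{2}c(uv)>0$ and using $x_u x_v=\tfrac12(x_u+x_v)-\tfrac12(x_u\oplus x_v)$, this rearranges into $w^P(D)=\text{(const)}+\sum_{uv}W_{uv}(x_u\oplus x_v)+\sum_v\ell'_v x_v$, i.e.\ a weighted Max Cut objective plus a residual linear ``bias'', where each out-arc of $v$ contributes $p:=\tfrac12(m_{22}+m_{21}-m_{11}-m_{12})$ and each in-arc contributes $q:=\tfrac12(m_{22}+m_{12}-m_{11}-m_{21})$ to $\ell'_v$.

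The reduction is then from Max Cut, which is \NP-hard~\cite{garey1979computers}. Given a graph $G'$, I create one vertex of $D$ per vertex of $G'$ and, for each edge of $G'$, a single $M$-arc of weight $-2/\gamma$ (so $W_{uv}=1$) in an arbitrary orientation; this reproduces exactly the Max Cut term. To kill the bias I add two \emph{pole} vertices $s,t$ and one super-heavy $M$-arc $s\to t$ of weight $C$. Since $m_{12}$ is the \emph{unique} largest entry, this arc is maximized only at $s\in X_1,\,t\in X_2$, and choosing $C$ larger than the total variation of all remaining arcs forces $(x_s,x_t)=(0,1)$ in every optimal partition, pinning the poles. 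With the poles pinned, an arc $s\to v$ injects the purely linear term $m_{12}\,c\,x_v$ (positive) and an arc $v\to t$ injects $(m_{22}-m_{12})c\,x_v$ (negative, as $m_{12}>m_{22}$), and neither creates a quadratic term among non-pole vertices. Hence for each $v$ I add one calibrated pole-arc cancelling $\ell'_v$ exactly, after which $w^P(D)=\text{(const)}+\sum_{uv}W_{uv}(x_u\oplus x_v)$ over the non-pole vertices, so a maximizer of $w^P(D)$ induces a maximum cut of $G'$. The construction has polynomially many vertices and arcs and polynomially bounded weights.

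The hard part is exactly this bias-cancellation step, and the delicate point is guaranteeing that the poles are pinned in \emph{every} optimal solution: this is where I use that $m_{12}$ strictly dominates the other three entries and that a polynomially large $C$ already outweighs the (polynomially bounded) contribution of all other arcs. One degenerate case must be handled separately, namely $m_{12}=m_{21}$: the heavy arc then cannot distinguish the poles, but the off-diagonal is symmetric, so either $m_{11}=m_{22}$, in which case the bias vanishes and the instance is already a pure weighted Max Cut, or $m_{11}\neq m_{22}$, in which case the diagonal asymmetry can be used to pin the poles. I note that one could alternatively reduce from $2$-colourability of linear $3$-uniform hypergraphs (Theorem~\ref{thmNPhard}), placing an arc-disjoint triangle gadget on each hyperedge (linearity ensuring arc-disjointness); however the asymmetry of $M$ still forces the same symmetrization effort, so the Max Cut route above is the cleaner one.
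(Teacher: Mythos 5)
Your route is genuinely different from the paper's: you rewrite the objective as a pseudo-Boolean quadratic, extract a positive-weight cut term from the failure of Property (a) (i.e.\ $\gamma<0$), and cancel the residual linear bias with arcs to two pinned pole vertices, reducing from \textsc{MaxCut}; the paper instead reduces from $2$-colourability of linear $3$-uniform hypergraphs via a directed triangle per hyperedge and, in the asymmetric case, compensates the bias with a star of $\theta$-weighted arcs from a single pinned apex. For the case $m_{12}\neq m_{21}$ (after your normalisation and transposition reductions, both of which are valid since they preserve both hypotheses and the optimisers), your argument is complete: the algebraic expansion checks out, the heavy $s\to t$ arc pins the poles because $m_{12}$ is then the \emph{strict} maximum entry, the two families of cancellation arcs have linear coefficients of opposite sign ($c\,m_{12}>0$ and $c(m_{22}-m_{12})<0$, so every bias can be cancelled with a positive weight), and all weights are polynomially bounded. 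Likewise the sub-subcase $m_{12}=m_{21}$, $m_{11}=m_{22}$ is fine, since there $p=q=0$ and the instance is literally unweighted \textsc{MaxCut}.

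The gap is the subcase $m_{12}=m_{21}$ with $m_{11}\neq m_{22}$, which you dispatch with the sentence ``the diagonal asymmetry can be used to pin the poles.'' That is precisely the step requiring a real construction, and the bias does not vanish there (it equals $\tfrac{1}{2}(m_{22}-m_{11})$ times the weighted degree of each vertex after normalisation), so pinning cannot be skipped. A single heavy arc carrying $M$ attains its maximum $m_{12}=m_{21}$ on \emph{both} split placements of its endpoints, and more generally any bipartite gadget built from $M$ is invariant under exchanging $X_1$ and $X_2$ on its vertex set up to the diagonal entries of within-class arcs; you therefore need an odd, weight-unbalanced gadget whose two ``swapped'' optima differ by a multiple of $m_{11}-m_{22}$. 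The paper supplies exactly this: the directed triangle $xyzx$ with $c(xy)=c(yz)=2$, $c(zx)=1$ and matrix $M$ on all three arcs, whose unique optimum is $x,z\in X_1$, $y\in X_2$ because the tie $4m_{12}+m_{11}$ versus $4m_{12}+m_{22}$ is broken by $m_{11}>m_{22}$ (and the WLOG $m_{11}>m_{22}$ is legitimate here, since with $m_{12}=m_{21}$ relabelling $X_1\leftrightarrow X_2$ fixes the off-diagonal and swaps the diagonal). Until you exhibit such a gadget and verify its uniqueness of optimum, the proof is incomplete in this subcase; with it, your \textsc{MaxCut} reduction goes through verbatim.
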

\pf
We will reduce from the problem of determining if a linear $3$-uniform hypergraph is bipartite, which is NP-hard by Theorem~\ref{thmNPhard}.
Let $H=(V,E)$ be a linear $3$-uniform hypergraph. We will now construct an instance of {\it \mwdp(${\cal F}$)}, such that a solution to this instance
will tell us if $H$ is bipartite or not.

We start by letting $V(D)=V(H)$ and for each edge $e \in E(H)$ we add a directed $3$-cycle $C_e=xyzx$ to $D$, where $V(e)=\{x,y,z\}$.
Furthermore for each arc $a$ in the directed $3$-cycle let $c(a)=1$ and $f(a)=M$ ($M$ is defined in the statement of the theorem).
As $H$ is linear we note that $D$ is an oriented graph (i.e., it has no directed $2$-cycles or parallel arcs).
We now consider the following cases.

\2

{\bf Case 1. $m_{11}=m_{22}$:} In this case we will show that the above construction of $(D,c,f)$ is sufficient.
If $P=(X_1,X_2)$ is a partition on $V(H)$ (and therefore also a partition of $V(D)$) and $e \in E(H)$ then the following holds.

\begin{itemize}
 \item If $|V(e) \cap X_1|=3$ then the arcs of $C_e$ contribute $3m_{11}$ to $w^P(D)$.
 \item If $|V(e) \cap X_1|=2$ then the arcs of $C_e$ contribute $m_{11}+m_{12}+m_{21}$ to $w^P(D)$.
 \item If $|V(e) \cap X_1|=1$ then the arcs of $C_e$ contribute $m_{22}+m_{12}+m_{21}$ to $w^P(D)$.
 \item If $|V(e) \cap X_1|=0$ then the arcs of $C_e$ contribute $3m_{22}$ to $w^P(D)$.
\end{itemize}

So, as $m_{11}=m_{22}$, this implies the following.

\begin{itemize}
 \item If $|V(e) \cap X_1| \in \{0,3\}$ then the arcs of $C_e$ contribute $3m_{11}$ to $w^P(D)$.
 \item If $|V(e) \cap X_1| \in \{1,2\}$ then the arcs of $C_e$ contribute $m_{11}+m_{12}+m_{21}$ to $w^P(D)$.
\end{itemize}

And as $m_{12}+m_{21} > m_{11}+m_{22} = 2m_{11}$ we note that the maximum value of $w^P(D)$ is $|V(H)| (m_{11}+m_{12}+m_{21})$ if and only if
$|V(e) \cap X_1| \in \{1,2\}$ for all $e \in E(H)$, which is equivalent to $H$ being bipartite (as $X_1$ are the vertices with one color and $X_2$
are the vertices with the other color). So in this case we have the desired reduction.

\2

{\bf Case 2. $m_{11} \not= m_{22}$:}  Without loss of generality we may assume that $m_{11} > m_{22}$.
We now construct an instance, $(G(x,y),c,f)$, of {\it \mwdp(${\cal F}$)} such that every optimal solution $(X_1,X_2)$ for $G(x,y)$ has $x \in X_1$ and $y \in X_2$.

If $m_{12} > m_{21}$ then $m_{12}$ is the unique largest value in $\{ m_{11}, m_{22}, m_{12}, m_{21} \}$. 
Let $G(x,y)$ have vertex set $\{x,y\}$ and contain a single arc $xy$ with $f(xy)=M$ and $c(xy)=1$. Then
$x \in X_1$ and $y \in X_2$ for all optimal solutions $(X_1,X_2)$ to $G(x,y)$. 

If $m_{21} > m_{12}$ then $m_{21}$ is the unique largest value in $\{ m_{11}, m_{22}, m_{12}, m_{21} \}$. 
Analogously to the above construction, let $G(x,y)$ have vertex set $\{x,y\}$ and contain a single arc $yx$ with $f(yx)=M$ and $c(xy)=1$. Then 
$x \in X_1$ and $y \in X_2$ for all optimal solutions $(X_1,X_2)$ to $G(x,y)$.

Finally consider the case when $m_{12} = m_{21}$. 
Let  $G(x,y)$ have vertex set $\{x,y,z\}$ and contain a directed 
$3$-cycle $xyzx$ with $f(xy)=f(yz)=f(zx)=M$ and $c(xy)=c(yz)=2$ and $c(zx)=1$. Then
$x \in X_1$ (and $z \in X_1$) and $y \in X_2$ for all optimal solutions $(X_1,X_2)$ to $G(x,y)$ (as $m_{12}>m_{11} > m_{22}$).

Let $m^* = \max\{m_{12},m_{21}\}$ and let  $\theta=\frac{m_{11}-m_{22}}{m^*-m_{11}}$. Note that $\theta > 0$ as $m_{11} > m_{22}$ and $m^* > m_{11}$.
We now add $G(x,y)$ to $D$ (defined beforw Case 1) and add arcs between $G(x,y)$ and $D$ as follows.

\begin{itemize}
 \item If $m_{12} > m_{21}$. In this case add all possible arcs, $a$, from $x$ to $V(D)$ and let $c(a)=0$ and $f(a)=M$ for each of these.
Now for each $e \in E(H)$ we add $\theta$ to the $c$-value of the three arcs from $x$ to $V(C_e)$. This completes the construction.

 \item If $m_{12} \leq m_{21}$. In this case add all possible arcs, $a$, from $V(D)$ to $x$ and let the $c(a)=0$ and $f(a)=M$ for each of these.
Now for each $e \in E(H)$ we add $\theta$ to the $c$-value of the three arcs from $V(C_e)$ to $x$. This completes the construction.
\end{itemize}

Let $D^*$ denote the resulting digraph.  Now multiply the $c$-values of all arcs in $G(x,y)$ by a large enough constant to force
$x \in X_1$ and $y \in X_2$ for all optimal solutions $(X_1,X_2)$ to $D^*$.
Let $P=(X_1,X_2)$ be any partition of $V(H)$.  The following now holds for each $e \in E(H)$.

\begin{description}
 \item[If $|V(e) \cap X_1|=3$:] Then the arcs of $C_e$ contribute $3m_{11}$ to $w^P(D^*)$ and the weights we added to the arcs between $x$ and $V(e)$ when 
considering $e$ contribute $3 \theta m_{11}$ to $w^P(D^*)$. So all-in-all $w^P(D^*)$ has increased by the following amount due to $e$,
\[
 s_3 = 3m_{11} + 3 \theta m_{11} 
\]
 \item[If $|V(e) \cap X_1|=2$:] Then the arcs of $C_e$ contribute $m_{11}+m_{12}+m_{21}$ to $w^P(D^*)$ and the weights we added to the arcs between $x$ and $V(e)$ when 
considering $e$ contribute $2 \theta m_{11} + \theta m^*$ to $w^P(D^*)$. So all-in-all $w^P(D^*)$ has increased by the following amount due to $e$,
\[
 s_2 = m_{11}+m_{12}+m_{21} + \theta(2 m_{11} + m^*)
\]
 \item[If $|V(e) \cap X_1|=1$:] Then  the arcs of $C_e$ contribute $m_{22}+m_{12}+m_{21}$ to $w^P(D^*)$ and the weights we added to the arcs between $x$ and $V(e)$ when
considering $e$ contribute $\theta m_{11} + 2 \theta m^*$ to $w^P(D^*)$. So all-in-all $w^P(D^*)$ has increased by the following amount due to $e$,
 \[
 s_1 = m_{22}+m_{12}+m_{21} + \theta(m_{11} + 2 m^*)
\]
\item[If $|V(e) \cap X_1|=0$:] Then  the arcs of $C_e$ contribute $3m_{22}$ to $w^P(D^*)$ and the weights we added to the arcs between $x$ and $V(e)$ when
considering $e$ contribute $3 \theta m^*$ to $w^P(D^*)$. So all-in-all $w^P(D^*)$ has increased by the following amount due to $e$,
 \[
 s_0 = 3m_{22} + 3 \theta m^*
\]
\end{description}

We will now show that $s_0=s_3 < s_2=s_1$, which implies that $H$ is bipartite if and only if the optimal solution, $P=(X_1,X_2)$, for $D^*$ has 
 $|V(e) \cap X_1| \in \{1,2\}$ for all $e \in E(H)$. So we have reduced  
 the problem of determining if a linear $3$-uniform hypergraph is 2-colorable (which is NP-hard by Theorem~\ref{thmNPhard}) to {\it \mwdp(${\cal F}$)}
as desired. So the following claims complete the proof.

\2

{\bf Claim 2.A:} $s_0=s_3$.

{\bf Proof of Claim 2.A:} Note that the following holds as $\theta=\frac{m_{11}-m_{22}}{m^*-m_{11}}$, which proves the claim.
\[
\begin{array}{rcl}
s_3-s_0 & = & (3m_{11} + 3 \theta m_{11})-(3m_{22} + 3 \theta m^*) \\
        & = & 3(m_{11}-m_{22}) + 3 \frac{m_{11}-m_{22}}{m^*-m_{11}} (m_{11} - m^*) \\
        & = & 0 \\
\end{array}
\]
\2

{\bf Claim 2.B:} $s_2=s_1$.

{\bf Proof of Claim 2.B:}  Note that the following holds as $\theta=\frac{m_{11}-m_{22}}{m^*-m_{11}}$, which proves the claim.
\[
s_2-s_1  =  m_{11} - m_{22} + \theta (m_{11} - m^*) = (m_{11} - m_{22})-(m_{11} - m_{22}) = 0
\]
\2

{\bf Claim 2.C:} $s_3 < s_2$.

{\bf Proof of Claim 2.C:} Note that the following holds as $\theta=\frac{m_{11}-m_{22}}{m^*-m_{11}}$, which proves the claim.
\[
\begin{array}{rcl}
s_2-s_3 & = & (m_{12}+m_{21}-2m_{11}) +  \theta (m^* - m_{11}) \\
        & = & (m_{12}+m_{21}-2m_{11}) +  (m_{11}-m_{22}) \\
        & = & (m_{12}+m_{21}) - (m_{11}+m_{22}) \\
        & > & 0 \\
\end{array}
\]
\mbox{ } \hfill \qed

Note that Lemma~\ref{lem:case-1}, Proposition~\ref{thm2}, Proposition~\ref{thm3} and Theorem~\ref{thm4} imply a dichotomy for {\it \mwdp(${\cal F}$)} when 
${\cal F}$ only contains a single matrix. In order to obtain a complete dichotomy for all ${\cal F}$ we need the results in the following section.

\section{Proving Theorem \ref{thm:dichotomy}}

Let $M$ be a $2 \times 2$ matrix. We now define the following three properties of $M$ which may, or may not, hold.

\begin{description}
\item[Property (a):] $m_{11} + m_{22} \geq m_{12} + m_{21}$.
\item[Property (b):] $m_{11} = \max\{ m_{11}, m_{22}, m_{12}, m_{21} \}$.
\item[Property (c):] $m_{22} = \max\{ m_{11}, m_{22}, m_{12}, m_{21} \}$.
\end{description}

Note that Lemma~\ref{lem:case-1}, Proposition~\ref{thm2}, Proposition~\ref{thm3} and Theorem~\ref{thm4} can be reformulates as follows.
\begin{corollary}\label{cor1}
{\it \mwdp(${\cal F}$)} can be solved in polynomial time if all matrices in ${\cal F}$ satisfy Property (a) or all matrices in ${\cal F}$ satisfy Property (b)
or all matrices in ${\cal F}$ satisfy Property (c).  {\it \mwdp(${\cal F}$)} is NP-hard if some matrix in ${\cal F}$ does not satisfy any of the properties (a), (b) or (c).
\end{corollary}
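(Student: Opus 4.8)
The plan is to derive this corollary directly from the four results already established --- Lemma~\ref{lem:case-1}, Proposition~\ref{thm2}, Proposition~\ref{thm3}, and Theorem~\ref{thm4} --- since the corollary is essentially their reformulation. It splits into a tractability half and a hardness half, and each half reduces to one of these prior results after a short verification.

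For the tractability direction I would argue case by case. If every matrix in ${\cal F}$ satisfies Property (a), then Lemma~\ref{lem:case-1} provides a polynomial-time algorithm. If every matrix satisfies Property (b), then Proposition~\ref{thm2} shows that the partition placing all vertices in $X_1$ is optimal; symmetrically, Proposition~\ref{thm3} handles the case where every matrix satisfies Property (c) via the partition placing all vertices in $X_2$. No additional work is required here.

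For the hardness direction, suppose some matrix $M \in {\cal F}$ satisfies none of (a), (b), (c). It suffices to show that $M$ meets both hypotheses of Theorem~\ref{thm4}, since that theorem then yields NP-hardness directly (its statement already quantifies over ``some matrix $M \in {\cal F}$''). Failure of Property (a) is precisely $m_{11}+m_{22} < m_{12}+m_{21}$, the first hypothesis. For the second hypothesis, $\max\{m_{11},m_{22}\} < \max\{m_{12},m_{21}\}$, I would observe that failure of Property (b) forces $m_{11}$ strictly below the overall maximum of the four entries and failure of Property (c) forces $m_{22}$ strictly below it; since neither diagonal entry attains the maximum, it must be attained within $\{m_{12},m_{21}\}$, so $\max\{m_{12},m_{21}\}$ equals the overall maximum and strictly exceeds both $m_{11}$ and $m_{22}$.

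The only substantive point --- and the step I expect to require the most care --- is this last deduction, namely that the joint failure of Properties (b) and (c) pushes the largest entry into the off-diagonal pair $\{m_{12},m_{21}\}$. Once that is in hand the two hypotheses of Theorem~\ref{thm4} hold for $M$, the theorem applies verbatim, and the corollary is complete.
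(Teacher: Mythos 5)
Your proposal is correct and matches the paper's own treatment: the paper presents this corollary as a direct reformulation of Lemma~\ref{lem:case-1}, Propositions~\ref{thm2}--\ref{thm3}, and Theorem~\ref{thm4}, exactly as you do. The one verification you flag as requiring care --- that the joint failure of Properties (b) and (c) forces the maximum entry into $\{m_{12},m_{21}\}$, yielding the second hypothesis of Theorem~\ref{thm4} --- is carried out correctly and is the only detail the paper leaves implicit.
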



\begin{theorem} \label{thm5}
We consider {\it \mwdp(${\cal F}$)}, where $M$ and $R$ are distinct matrices in ${\cal F}$.
If $M$ satisfies Property (b), but not Property (a) or Property (c) and $R$ satisfy Property (c), but not Property (b) (it may, or may not, 
satisfies Property (a)) then {\it \mwdp(${\cal F}$)} is NP-hard to solve.

By symmetry, if $M$ satisfies Property (c), but not Property (a) or Property (b) and $R$ satisfies Property (b), but not Property (c) (it may, or may not,
satisfy Property (a)) then the problem is also NP-hard to solve.

\end{theorem}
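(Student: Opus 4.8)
The plan is to reduce from deciding $2$-colorability of a linear $3$-uniform hypergraph, which is \NP-hard by Theorem~\ref{thmNPhard}, reusing the directed-$3$-cycle idea behind Theorem~\ref{thm4}. First I would record the sign information forced by the hypotheses. Since $M$ satisfies Property~(b) but neither (a) nor (c), we have $m_{11}=\max\{m_{11},m_{12},m_{21},m_{22}\}$, $m_{22}<m_{11}$, and $m_{12}+m_{21}>m_{11}+m_{22}$; the last two inequalities together give $m_{12}>m_{22}$ and $m_{21}>m_{22}$. Since $R$ satisfies Property~(c) but not (b), we have $r_{22}=\max\{r_{11},r_{12},r_{21},r_{22}\}$ and $r_{11}<r_{22}$. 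The key observation is that an $M$-arc on its own already ``prefers a split'' (because $m_{12}+m_{21}>m_{11}+m_{22}$), exactly like the hard matrix of Theorem~\ref{thm4}, but it fails the second condition of that theorem only because $m_{11}$ is globally largest. I would cancel this defect by adding, to each vertex, an external field that rewards membership in $X_2$, and this field is precisely what $R$ is able to supply.

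Concretely, I first fix two anchor vertices, $s_1$ forced into $X_1$ and $s_2$ forced into $X_2$, and then create fields from them. An $R$-arc $s_1v$ contributes $c\,r_{11}$ or $c\,r_{12}$ according to whether $v\in X_1$ or $v\in X_2$, i.e.\ a reward of $c(r_{12}-r_{11})$ for placing $v$ in $X_2$; an $R$-arc $vs_2$ similarly rewards $X_2$-membership of $v$ by $c(r_{22}-r_{12})$. Since $(r_{12}-r_{11})+(r_{22}-r_{12})=r_{22}-r_{11}>0$, at least one of these is strictly positive, so by scaling $c$ I can give any vertex an $X_2$-field of arbitrary positive magnitude. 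Now, given a linear $3$-uniform hypergraph $H$, I build $D$ on $V(H)$ together with the anchor gadgets: for every hyperedge $e=\{x,y,z\}$ I add a directed $3$-cycle $xyzx$ of $M$-arcs of weight $1$, and to each of $x,y,z$ I add an $X_2$-field of magnitude $\phi:=m_{11}-m_{22}>0$ (one field per incidence, the fields at a common vertex being merged into a single arc so that $D$ stays oriented, which uses linearity to keep the cycles arc-disjoint). Writing $s_k$ for the contribution of one hyperedge whose cycle has $k$ vertices in $X_1$, a direct computation gives $s_3=3m_{11}$, $s_2=(m_{11}+m_{12}+m_{21})+\phi$, $s_1=(m_{22}+m_{12}+m_{21})+2\phi$, and $s_0=3m_{22}+3\phi$; the choice $\phi=m_{11}-m_{22}$ makes $s_0=s_3$ and $s_1=s_2$, while $s_2-s_3=(m_{12}+m_{21})-(m_{11}+m_{22})>0$. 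Hence the objective splits as a constant plus one term $T_e$ per hyperedge, each maximized over the configurations of $e$ exactly at the bichromatic ones, and all the maxima are attainable simultaneously iff $H$ is $2$-colorable; so computing the optimum of $\mwdp(\fcal)$ decides $2$-colorability.

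The main obstacle is building the anchor gadgets robustly. When $m_{11}$ is the unique maximum of $M$, a single heavily-weighted $M$-arc pins $s_1\in X_1$, and symmetrically $R$ pins $s_2\in X_2$ (mirroring the forcing step of Theorem~\ref{thm4}); care is needed precisely when the maxima are attained with ties, e.g.\ $m_{11}=m_{12}=m_{21}$ or $r_{22}=r_{12}=r_{21}$, since then a lone arc no longer pins the anchor. I would treat these degenerate subcases by combining $M$- and $R$-arcs into small forcing gadgets (once one anchor is pinned, the strictly-positive field above pins the other) and by choosing arc orientations to avoid $2$-cycles, with the doubly-degenerate case requiring the most delicate gadget; as in Theorem~\ref{thm4}, all gadget weights are then scaled by a large constant so that no optimal solution sacrifices an anchor. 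Finally, the second, symmetric statement follows with no new work: globally swapping $X_1\leftrightarrow X_2$ and replacing every matrix $N$ by $\left[\begin{smallmatrix} n_{22}&n_{21}\\ n_{12}&n_{11}\end{smallmatrix}\right]$ is a weight-preserving bijection on instances that interchanges Properties (b) and (c) while fixing (a), carrying the first case to the second.
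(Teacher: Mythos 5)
Your proposal is correct in its core mechanism and reaches the right conclusion, but it travels a genuinely different route from the paper. You reduce from $2$-colorability of linear $3$-uniform hypergraphs, reusing the directed-$3$-cycle template of Theorem~\ref{thm4} and adding an external ``$X_2$-field'' of magnitude $m_{11}-m_{22}$ per incidence, supplied by $R$-arcs to anchors; the paper instead reduces from unweighted \textsc{MaxCut}, placing a single $M$-arc per graph edge together with a $4$-cycle of $R$-arcs of weight $\epsilon=\frac{m_{11}-m_{22}}{r_{22}-r_{11}}$ threaded through the anchor gadget (plus auxiliary $M$-arcs of weight $1/2$). The essential algebra is identical in both: $R$ is used to cancel the diagonal imbalance $m_{11}-m_{22}$ so that the two monochromatic configurations tie, the two split configurations tie, and the split ones win by exactly $(m_{12}+m_{21})-(m_{11}+m_{22})>0$; your computation $s_0=s_3<s_1=s_2$ checks out, as does your observation that $r_{22}-r_{11}>0$ guarantees at least one of the two field directions is strictly rewarding. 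Your symmetry argument for the second statement is also fine. What your route buys is a slightly leaner per-clause gadget and uniformity with Theorem~\ref{thm4}; what the paper's route buys is that the \textsc{MaxCut} objective is symmetric in $u,v$, so it never needs to worry about which orientation of an edge was chosen.

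The one soft spot is the anchor construction in the tie cases, which you correctly flag but do not resolve. In particular, your remark that ``once one anchor is pinned, the strictly-positive field above pins the other'' is circular in the doubly-degenerate case $m_{11}=m_{12}=m_{21}$ and $r_{22}=r_{12}=r_{21}$: there no single $M$-arc pins any vertex into $X_1$ and no single $R$-arc pins any vertex into $X_2$, so neither anchor gets pinned first. A genuinely mixed gadget is required -- the paper's Case~2 of the gadget $D(x,x',y,y')$ uses four $M$-arcs from $\{x,x'\}$ into an auxiliary pair $\{x_2,x_3\}$ together with an $R$-arc $x_2x_3$ (and the mirror image for $\{y,y'\}$), and verifies by exchange that every optimal solution places $x,x'\in X_1$ and $y,y'\in X_2$. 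Such a gadget exists and would complete your argument, so this is a fillable omission rather than a fatal flaw, but as written the degenerate subcase is asserted rather than proved.
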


\pf
Let $M$ and $R$ be defined as in the statement of the theorem.  That is, the following holds.

\begin{description}
\item[(a):] $m_{11} = \max\{ m_{11}, m_{22}, m_{12}, m_{21} \}$ (as Property (b) holds for $M$).
\item[(b):] $m_{11} + m_{22} < m_{12} + m_{21}$ (as Property (a) does not hold for $M$)
\item[(c):] $m_{22} < \max\{ m_{11}, m_{22}, m_{12}, m_{21} \}$ (as Property (c) does not hold for $M$)
\item[(d):] $r_{22} = \max\{ r_{11}, r_{22}, r_{12}, r_{21} \}$ (as Property (c) holds for $R$). 
\item[(e):] $r_{11} < \max\{ r_{11}, r_{22}, r_{12}, r_{21} \}$ (as Property (b) does not hold for $R$)
\item[(f):] $m_{11} > m_{22}$, as by (a) and (c) we have $m_{11} = \max\{ m_{11}, m_{22}, m_{12}, m_{21} \} > m_{22}$. 
\item[(g):] $r_{22} > r_{11}$, as by (d) and (e) we have $r_{22} = \max\{ r_{11}, r_{22}, r_{12}, r_{21} \} > r_{11}$.
\end{description}

We will first construct an instance  $(D(x,x',y,y'),c,f)$ of {\it \mwdp(${\cal F}$)} such that for all optimal solutions
$P=(X_1,X_2)$ of the instance we have $x,x' \in X_1$ and $y,y' \in X_2$. We consider the following cases.

\2

{\bf Case 1: $m_{11}=m_{12}=m_{21}$ is not true and $r_{22}=r_{12}=r_{21}$ is also not true.} In this case $V(D(x,x',y,y'))=\{x,x',x'',y,y',y''\}$ and
$A(D(x,x',y,y'))=\{xx',x'x'',x''x,yy',y'y'',y''y\}$. Furthermore $c(a)=1$ for all arcs $a \in A(D(x,x',y,y'))$ and $f(xx')=f(x'x'')=f(x''x)=M$ and
$f(yy')=f(y'y'')=f(y''y)=R$. The only optimal solution $P=(X_1,X_2)$ is now $X_1=\{x,x',x''\}$ and $X_2=\{y,y',y''\}$ with value $w^P(D(x,x',y,y'))=3m_{11}+3r_{22}$.

\2

{\bf Case 2: $m_{11}=m_{12}=m_{21}$ and $r_{22}=r_{12}=r_{21}$.}   In this case $V(D(x,x',y,y'))=\{x,x',x_2,x_3,y,y',y_2,y_3\}$ and
$A(D(x,x',y,y'))=\{xx_2,xx_3,x'x_2,x'x_3,x_2x_3,yy_2,yy_3,y'y_2,y'y_3, y_2y_3\}$. Furthermore $c(a)=1$ for all arcs $a \in A(D(x,x',y,y'))$ and 
$f(xx_2)=f(xx_3)=f(x'x_2)=f(x'x_3)=f(y_2y_3)=M$ and
$f(yy_2)=f(yy_3)=f(y'y_2)=f(y'y_3)=f(x_2x_3)=R$. 
We note that $X_1=\{x,x',y_2,y_3\}$ and $X_2=\{y,y',x_2,x_3\}$ is an optimal solution with value $w^P(D(x,x',y,y'))=m_{11}+4m_{12}+r_{22}+4r_{21}
=5m_{11}+5r_{22}$.

For the sake of contradiction assume that some other optimal solution $P=(X_1,X_2)$ has $x \in X_2$. If both $x_2$ and $x_3$ are in $X_1$ then 
$w^P(x_2x_3)=r_{11} < r_{22}$, contradicting the fact that $P$ was optimal.  So either $x_2 \in X_2$ or $x_3 \in X_2$. but in either
case we have an arc $a \in \{xx_2,xx_3\}$ with $w^P(a)=m_{22} < m_{11}$, contradicting the fact that $P$ was optimal. So we must have $x \in X_1$ for all
optimal solutions $P=(X_1,X_2)$.
Analogously we can show that we also must have $x' \in X_1$, $y \in X_2$ and $y' \in X_2$, which completes the proof of Case 2.

\2

{\bf Case 3: $m_{11}=m_{12}=m_{21}$ is not true but $r_{22}=r_{12}=r_{21}$ is true.}  In this case $V(D(x,x',y,y'))=\{x,x',y,y',z\}$ and
$A(D(x,x',y,y'))=\{xx',x'z,zx,yy',y'z,zy\}$. Furthermore $c(a)=1$ for all arcs $a \in A(D(x,x',y,y'))$ and $f(xx')=f(x'z)=f(zx)=M$ and
$f(yy')=f(y'z)=f(zy)=R$.  We note that $X_1=\{x,x',z\}$ and $X_2=\{y,y'\}$ is an optimal solution with value 
$w^P(D(x,x',y,y'))=3m_{11}+r_{22}+r_{12}+r_{21}=3m_{11}+3r_{22}$. It is furthermore not difficult to check that this is the only optimal solution.

\2

{\bf Case 4: $m_{11}=m_{12}=m_{21}$ but $r_{22}=r_{12}=r_{21}$ is not true.}  This can be proved analogously to Case~3.

\2

Now that we have constructed $D(x,x',y,y')$ we can give the NP-hardness proof. We will reduce from the (unweighted) {\sc MaxCut} problem.
Let $G$ be any graph where we wish to find a cut with the maximum number of edges.
 We initially let $V(D^*)=V(G)$. Let $\epsilon=\frac{m_{11}-m_{22}}{r_{22}-r_{11}}$ and note that $\epsilon>0$ by properties~(f) and (g).
For each edge $e \in E(G)$, we now perform the following modifications to $D^*$.

\begin{description}
\item[1.] If $e=uv$ in $G$ then add the arc $uv$ to $D^*$ (any orientation of $e$ can be added to $D^*$) and let $c(uv)=1$ and $f(uv)=M$.
\item[2.] Add a copy, $D_e(x_e,x_e',y_e,y_e')$, of $D(x,x',y,y')$ to $D^*$ and add the following arcs.
\begin{description}
 \item[2a.] Add the directed $4$-cycle $C_e = ux_evy_eu$ to $D^*$ and let $c(a)=\epsilon$ and $f(a)=R$ for all $a \in A(C_e)$.
 \item[2b.] Add the arcs $A_e=\{vx'_e,x'_eu,vy'_e,y'_eu\}$ to $D^*$ and let $c(a)=1/2$ and $f(a)=M$ for all $a \in A(A_e)$.
\end{description}
\item[3.] Multiply $c(a)$ by a large constant, $K$, for all arcs $a$ which are part of a gadget $D_e(x_e,x_e',y_e,y_e')$ such that 
$x_e,x_e' \in X_1$ and $y_e,y_e' \in X_2$ for all optimal solutions $P=(X_1,X_2)$.
\end{description}

Let $P=(X_1,X_2)$ be an optimal solution for the instance  $(D^*,c,f)$. Let $ e=uv \in E(G)$ be arbitrary, such that $uv$ was added 
to $D^*$ in Step~1 above. Now consider the following four cases.
 
\2

{\bf Case A: $u,v \in X_1$.} The arcs added to $D^*$ in Step~1 and Steps~2a and 2b above contribute the following to $w^P(D^*)$.
\[
 s_A = m_{11}+\frac{2m_{11}+m_{12}+m_{21}}{2}+\epsilon ( 2r_{11}+r_{12}+r_{21} ) 
\]
\2
{\bf Case B: $u,v \in X_2$.} The arcs added to $D^*$ in Step~1 and Steps~2a and 2b above contribute the following to $w^P(D^*)$.
\[
 s_B = m_{22}+\frac{2m_{22}+m_{12}+m_{21}}{2}+\epsilon ( 2r_{22}+r_{12}+r_{21} ) 
\]
\2
{\bf Case C: $u \in X_1$ and $v \in X_2$.} The arcs added to $D^*$ in Step~1 and Steps~2a and 2b above contribute the following to $w^P(D^*)$.
\[
 s_C = m_{12}+\frac{2m_{21}+m_{11}+m_{22}}{2}+\epsilon ( r_{11}+r_{22}+r_{12}+r_{21} ) 
\]
\2
{\bf Case D: $u \in X_1$ and $v \in X_2$.} The arcs added to $D^*$ in Step~1 and Steps~2a and 2b above contribute the following to $w^P(D^*)$.
\[
 s_D = m_{21}+\frac{2m_{12}+m_{11}+m_{22}}{2}+\epsilon ( r_{11}+r_{22}+r_{12}+r_{21} )
\]
\2

We immediately see that $s_C=s_D$ as they are both equal to $m_{21}+m_{12} + (m_{11}+m_{22})/2+\epsilon ( r_{11}+r_{22}+r_{12}+r_{21} )$.
We will now show that $s_A=s_B$, which follows from the following, as $\epsilon=\frac{m_{11}-m_{22}}{r_{22}-r_{11}}$.

\[
\begin{array}{crcl}
& m_{11}-m_{22}  
& = &   \epsilon ( r_{22} - r_{11}) \\
\Updownarrow & & & \\
& 2m_{11}  + 2 \epsilon r_{11} 
& = &  2m_{22}+2 \epsilon r_{22} \\
\Updownarrow & & & \\
& m_{11}+\frac{2m_{11}+m_{12}+m_{21}}{2}+\epsilon ( 2r_{11}+r_{12}+r_{21} ) 
& = &  m_{22}+\frac{2m_{22}+m_{12}+m_{21}}{2}+\epsilon ( 2r_{22}+r_{12}+r_{21} ) \\
\Updownarrow & & & \\
& s_A & = & s_B \\ 
\end{array}
\]

This implies that $s_A=s_B$. We also note that the following holds.

\[
\begin{array}{rclcl}
(s_C+s_D)-(s_A+s_B) & = & (m_{12}+m_{21})-(m_{11}+m_{22}) & > & 0 \\
\end{array}
\]

So for every edge $e \in E(G)$ belonging to the cut $(X_1,X_2)$ we add $s_C$ ($=s_D$) to the value $w^P(D^*)$ and 
for every edge  $e \in E(G)$ not belonging to the cut $(X_1,X_2)$ we add $s_A$ ($=s_B<s_C=s_D$). 
So an optimal solution to max-cut for $G$ is equivalent to an optimal solution $(X_1,X_2)$ of $(D^*,c,f)$.
This completes the proof of the first part of the theorem.

The second part follows immediately by symmetry.~\qed

\begin{theorem} \label{thm6}
We consider {\it $\mwdp({\fcal}$)}, where $M$ and $R$ are distinct matrices in ${\cal F}$.
If $M$ satisfies Property (b), but not Property (a) or Property (c) and $R$ satisfies Property (a), but not Property (b) or Property (c)
 then {\it $\mwdp({\cal F}$)} is NP-hard to solve.

By symmetry, if $M$ satisfies Property (c), but not Property (a) or Property (b) and $R$ satisfies Property (a), but not Property (b) or Property (c)
 then the problem is also NP-hard to solve.
\end{theorem}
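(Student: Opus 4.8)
The plan is to mirror the reduction from (unweighted) \textsc{MaxCut} used in the proof of Theorem~\ref{thm5}, because the ``cut reward'' will again come entirely from $M$ violating Property~(a), while $R$ is used only to neutralise the $X_1/X_2$ asymmetry that $M$ introduces. First I would record the sign facts forced by the hypotheses: from $M$ satisfying (b) but neither (a) nor (c) we get $m_{11}=\max\{m_{11},m_{22},m_{12},m_{21}\}$, $m_{11}>m_{22}$, and $m_{12}+m_{21}>m_{11}+m_{22}$; from $R$ satisfying neither (b) nor (c) an off-diagonal entry is the strict maximum, so writing $r^{\ast}:=\max\{r_{12},r_{21}\}$ we have $r^{\ast}>\max\{r_{11},r_{22}\}\ge r_{11}$, in particular $r^{\ast}>r_{11}$. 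Property~(a) for $R$ is what distinguishes this case from the one already covered by Theorem~\ref{thm4}/Corollary~\ref{cor1}; the reduction itself needs only $r^{\ast}>r_{11}$.

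Second, I would build the anchoring gadgets. The $X_1$-anchors are produced exactly as in Theorem~\ref{thm5}: a heavily weighted $M$-gadget forces a distinguished vertex into $X_1$ because $m_{11}$ is a maximum entry, with the usual tie sub-cases (in particular $m_{11}=m_{12}=m_{21}$) handled by the same small directed-cycle constructions. The $X_2$-anchors are where the argument departs from Theorem~\ref{thm5}: rather than exploiting $r_{22}$ being maximal (which now fails), I attach to a target vertex $y$ a heavily weighted $R$-arc joining it to an $X_1$-anchor, oriented so that $y\in X_2$ realises the entry $r^{\ast}$; since $r^{\ast}>r_{11}$ strictly, every optimal solution places $y\in X_2$. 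Thus an $X_1$-anchor together with a single $R$-arc cleanly forces $X_2$.

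Third, for each edge $uv$ of the \textsc{MaxCut} instance $G$ I would install the per-edge gadget: the direct arc $uv$ with matrix $M$; the four symmetrising $M$-arcs through fresh anchors $x_e'\in X_1,\ y_e'\in X_2$ (the set $A_e$ of Theorem~\ref{thm5}, each of weight $1/2$), whose role is to make the two cut-orientations contribute equally; and, crucially, two ``field'' $R$-arcs of weight $\delta$ joining each of $u,v$ to an $X_1$-anchor, oriented to realise $r^{\ast}$ when the endpoint lies in $X_2$. Writing $s_A,s_B,s_C,s_D$ for the gadget's contribution when $(u,v)$ lies in $X_1{\times}X_1$, $X_2{\times}X_2$, $X_1{\times}X_2$, $X_2{\times}X_1$, the $M$-part reproduces the values of Theorem~\ref{thm5} and the $R$-field adds $2\delta r_{11}$, $2\delta r^{\ast}$, $\delta(r_{11}+r^{\ast})$, $\delta(r_{11}+r^{\ast})$. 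Choosing
\[
\delta=\frac{m_{11}-m_{22}}{r^{\ast}-r_{11}}>0
\]
makes $s_A=s_B$, while $s_C=s_D$ holds automatically because both the symmetrised $M$-part and the $R$-field are orientation-symmetric. As the $R$-field contributes $2\delta(r_{11}+r^{\ast})$ to both $s_A+s_B$ and $s_C+s_D$, it cancels in the difference, leaving $(s_C+s_D)-(s_A+s_B)=(m_{12}+m_{21})-(m_{11}+m_{22})>0$. Hence $s_C=s_D>s_A=s_B$, so every edge strictly prefers to be cut by the same amount, and an optimal partition of $D$ maximises the number of cut edges of $G$. The second statement follows by the $X_1\leftrightarrow X_2$ symmetry (transposing all matrices).

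The hard part, as in Theorem~\ref{thm5}, is the per-edge calibration: the oriented-graph restriction forbids a direct $2$-cycle on $uv$, so the orientation-symmetry of the cut must be manufactured through the auxiliary anchored vertices, and the $X_1$-bias of $M$ (coming from $m_{11}$ being maximal) must be cancelled without disturbing that symmetry. The device that makes this work here is the $R$-field from $X_1$-anchors: it supplies a positive ``$X_2$-pull'' of size $r^{\ast}-r_{11}$ per unit weight --- exactly the quantity guaranteed positive by $R$ failing (b) and (c) --- in place of the $r_{22}>r_{11}$ asymmetry used in Theorem~\ref{thm5}. Verifying that this field cancels in the cut-reward sum while equalising $s_A$ and $s_B$, together with dispatching the $M$-tie sub-cases in the $X_1$-forcing gadget, is the only genuinely delicate bookkeeping.
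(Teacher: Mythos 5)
Your main reduction is, step for step, the paper's reduction: the per-edge gadget (the direct $M$-arc, the four weight-$\tfrac12$ symmetrising $M$-arcs through anchored vertices $x_e',y_e'$, and the $R$-arcs of weight $\epsilon=\frac{m_{11}-m_{22}}{r^*-r_{11}}$ emanating from an $X_1$-anchor so that an $X_2$-endpoint realises $r^*$) and the resulting identities $s_A=s_B$, $s_C=s_D$ and $(s_C+s_D)-(s_A+s_B)=(m_{12}+m_{21})-(m_{11}+m_{22})>0$ all match the paper's proof exactly. The one place you diverge is the anchoring gadget, and that is where there is a genuine gap. You force $X_1$ with a heavily weighted $M$-gadget and dismiss the tie case $m_{11}=m_{12}=m_{21}$ by appeal to ``the same small directed-cycle constructions'' as in Theorem~\ref{thm5}. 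Those constructions do not transfer: in Theorem~\ref{thm5} the tie cases are resolved by gadgets whose $R$-arcs sit optimally precisely because $r_{22}=\max\{r_{11},r_{22},r_{12},r_{21}\}$ (e.g.\ the step ``if $x_2,x_3\in X_1$ then $w^P(x_2x_3)=r_{11}<r_{22}$''), whereas under the hypotheses of Theorem~\ref{thm6} the matrix $R$ fails Property~(c), so $r_{22}$ is not maximal and that argument collapses. Your $X_2$-anchor presupposes an already-forced $X_1$-anchor, so the tie case cannot simply be deferred to it.

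The repair is exactly the observation the paper makes and that you stop just short of: since $R$ satisfies Property~(a) but neither (b) nor (c), one of $r_{12},r_{21}$ is the \emph{unique} maximum entry of $R$ (if $r_{12}=r_{21}$ both would be maximal, forcing $r_{12}+r_{21}>r_{11}+r_{22}$ and contradicting Property~(a)). A single heavily weighted $R$-arc oriented to realise that entry therefore pins its tail to $X_1$ and its head to $X_2$ simultaneously, with no $M$-gadget and no case analysis on $M$ at all. This also shows that Property~(a) of $R$ is genuinely used --- to rule out $r_{12}=r_{21}$ --- contrary to your remark that the reduction needs only $r^*>r_{11}$. With the anchors built this way, the rest of your argument goes through and coincides with the paper's proof.
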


\pf
Let $M$ and $R$ be defined as in the statement of the theorem.  That is, the following holds.

\begin{description}
\item[(a):] $m_{11} = \max\{ m_{11}, m_{22}, m_{12}, m_{21} \}$ (as Property (b) holds for $M$).
\item[(b):] $m_{11} + m_{22} < m_{12} + m_{21}$ (as Property (a) does not hold for $M$)
\item[(c):] $m_{22} < \max\{ m_{11}, m_{22}, m_{12}, m_{21} \}$ (as Property (c) does not hold for $M$)
\item[(d):] $r_{11} + r_{22} \geq r_{12} + r_{21}$ (as Property (a) holds for $R$).
\item[(e):] $r_{22} < \max\{ r_{11}, r_{22}, r_{12}, r_{21} \}$ (as Property (c) does not hold for $R$).
\item[(f):] $r_{11} < \max\{ r_{11}, r_{22}, r_{12}, r_{21} \}$ (as Property (b) does not hold for $R$)
\item[(g):] $m_{11} > m_{22}$, as by (a) and (c) we have $m_{11} = \max\{ m_{11}, m_{22}, m_{12}, m_{21} \} > m_{22}$.
\item[(h):] $r_{12} \not= r_{21}$, as if $r_{12} = r_{21}$, then $r_{12} = r_{21} = \max\{ r_{11}, r_{22}, r_{12}, r_{21} \}$, by (e) and (f),
which is impossible by (d) (and (e) and (f)).
\end{description}

We will first construct an instance $(D(x,x',y,y'),c,f)$ of {\it \mwdp(${\cal F}$)} such that for all optimal solutions
$P=(X_1,X_2)$ of the instance we have $x,x' \in X_1$ and $y,y' \in X_2$. We consider the following cases, which exhaust all
possiblities by statement~(h) above.

\2

{\bf Case 1: $r_{12}>r_{21}$.} By (e) and (f) we note that $r_{12}$ is the unique largest value in $\{r_{11}, r_{22}, r_{12}, r_{21}\}$. 
Let $V(D(x,x',y,y'))=\{x,x',y,y'\}$ and $A(D(x,x',y,y'))=\{xy,x'y'\}$. 
Furthermore $c(xy)=c(x'y')=1$ and $f(xy)=f(x'y')=R$.
The only optimal solution $P=(X_1,X_2)$ is now $X_1=\{x,x'\}$ and $X_2=\{y,y'\}$ with value $w^P(D(x,x',y,y'))=2r_{12}$.

\2

{\bf Case 2: $r_{21}>r_{12}$.} By (e) and (f) we note that $r_{21}$ is the unique largest value in $\{r_{11}, r_{22}, r_{12}, r_{21}\}$. 
Let $V(D(x,x',y,y'))=\{x,x',y,y'\}$ and $A(D(x,x',y,y'))=\{yx,y'x'\}$. 
Furthermore $c(yx)=c(y'x')=1$ and $f(yx)=f(y'x')=R$.
The only optimal solution $P=(X_1,X_2)$ is now $X_1=\{x,x'\}$ and $X_2=\{y,y'\}$ with value $w^P(D(x,x',y,y'))=2r_{21}$.

\2

Now that we have constructed $D(x,x',y,y')$ we can give the NP-hardness proof. We will reduce from the (unweighted) {\sc MaxCut} problem.
So let $G$ be any graph where we are interested in finding a cut with the maximum number of edges.
 We initially let $V(D^*)=V(G)$.  Let $r^* = \max \{r_{12},r_{21}\}$ and let $\epsilon=\frac{m_{11}-m_{22}}{r^*-r_{11}}$ and note that $\epsilon>0$ by 
properties~(g), (e) and (f).
For each edge $e \in E(G)$ we now perform the following modifications to $D^*$.

\begin{description}
\item[1.] If $e=uv$ in $G$ then add the arc $uv$ to $D^*$ (any orientation of $e$ can be added to $D^*$) and let $c(uv)=1$ and $f(uv)=M$.
\item[2.] Add a copy, $D_e(x_e,x_e',y_e,y_e')$, of $D(x,x',y,y')$ to $D^*$ and add the following arcs.
\begin{description}
 \item[2a.] If $r_{12}>r_{21}$ then add the arcs $A_e'=\{x_eu,x_ev\}$ to $D^*$ and let $c(x_eu)=c(x_ev)=\epsilon$ and $f(x_eu)=f(x_ev)=R$.

And if $r_{12}<r_{21}$ then add the arcs $A_e'=\{ux_e,vx_e\}$ to $D^*$ and let $c(u_ex)=c(vx_e)=\epsilon$ and $f(ux_e)=f(vx_e)=R$.

 \item[2b.] Add the arcs $A_e=\{vx'_e,x'_eu,vy'_e,y'_eu\}$ to $D^*$ and let $c(a)=1/2$ and $f(a)=M$ for all $a \in A(A_e)$.
\end{description}
\item[3.] Multiply $c(a)$ by a large constant, $K$, for all arcs $a$ which are part of a gadget $D_e(x_e,x_e',y_e,y_e')$ such that 
$x_e,x_e' \in X_1$ and $y_e,y_e' \in X_2$ for all optimal solutions $P=(X_1,X_2)$.
\end{description}

Let $P=(X_1,X_2)$ be an optimal solution for the instance  $(D^*,c,f)$. Let $e=uv \in E(G)$ be arbitrary, such that $uv$ was added 
to $D^*$ in Step~1 above. Now consider the following four cases.
 
\2

{\bf Case A: $u,v \in X_1$.} The arcs added to $D^*$ in Step~1 and Steps~2a and 2b above contribute the following to $w^P(D^*)$.
\[
 s_A = m_{11}+\frac{2m_{11}+m_{12}+m_{21}}{2}+\epsilon (2r_{11}) 
\]
\2
{\bf Case B: $u,v \in X_2$.} The arcs added to $D^*$ in Step~1 and Steps~2a and 2b above contribute the following to $w^P(D^*)$.
\[
 s_B = m_{22}+\frac{2m_{22}+m_{12}+m_{21}}{2}+\epsilon ( 2r^* ) 
\]
\2
{\bf Case C: $u \in X_1$ and $v \in X_2$.} The arcs added to $D^*$ in Step~1 and Steps~2a and 2b above contribute the following to $w^P(D^*)$.
\[
 s_C = m_{12}+\frac{2m_{21}+m_{11}+m_{22}}{2}+\epsilon ( r^*+r_{11} )
\]
\2
{\bf Case D: $u \in X_2$ and $v \in X_1$.} The arcs added to $D^*$ in Step~1 and Steps~2a and 2b above contribute the following to $w^P(D^*)$.
\[
 s_D = m_{21}+\frac{2m_{12}+m_{11}+m_{22}}{2}+\epsilon ( r^*+r_{11} )
\]
\2

We immediately see that $s_C=s_D$ as they are both equal to $m_{21}+m_{12} + (m_{11}+m_{22})/2+ \epsilon ( r^*+r_{11} )$.
We will now show that $s_A=s_B$, which follows from the following, as $\epsilon=\frac{m_{11}-m_{22}}{r^*-r_{11}}$.

\[
\begin{array}{crcl}
& m_{11}-m_{22}  
& = &   \epsilon ( r^* - r_{11}) \\
\Updownarrow & & & \\
& 2m_{11}  + 2 \epsilon r_{11} 
& = &  2m_{22}+2 \epsilon r^* \\
\Updownarrow & & & \\
& m_{11}+\frac{2m_{11}+m_{12}+m_{21}}{2}+ 2\epsilon r_{11}  
& = &  m_{22}+\frac{2m_{22}+m_{12}+m_{21}}{2}+2\epsilon  r^* \\
\Updownarrow & & & \\
& s_A & = & s_B \\ 
\end{array}
\]

This implies that $s_A=s_B$. We also note that the following holds.

\[
\begin{array}{rclcl}
(s_C+s_D)-(s_A+s_B) & = & (m_{12}+m_{21})-(m_{11}+m_{22}) & > & 0 \\
\end{array}
\]

So for every edge $e \in E(G)$ belonging to the cut $(X_1,X_2)$ we add $s_C$ ($=s_D$) to the value $w^P(D^*)$ and 
for every edge  $e \in E(G)$ not belonging to the cut $(X_1,X_2)$ we add $s_A$ ($=s_B<s_C=s_D$). 
So an optimal solution to max-cut for $G$ is equivalent to an optimal solution $(X_1,X_2)$ of $(D^*,c,f)$.
This completes the proof of the first part of the theorem.

The second part follows immediately by symmetry.~\qed

\2

Now we are ready to prove our dichotomy. 

\2

{\bf Theorem} {\bf \ref{thm:dichotomy}}
{\em {\it \mwdp(${\cal F}$)} can be solved in polynomial time if all matrices in ${\cal F}$ satisfy Property (a) or all matrices in ${\cal F}$ satisfy Property (b) or all matrices in ${\cal F}$ satisfy Property (c).
Otherwise, {\it \mwdp(${\cal F}$)}  is NP-hard.} 


\2

\pf
Parts (1), (2) and (3) follow from Corollary~\ref{cor1} (and also from Lemma~\ref{lem:case-1}, Proposition~\ref{thm2} and Proposition~\ref{thm3}).
So now assume that  some matrix in ${\cal F}$ does not satisfy Property (a), some matrix (possibly the same)
does not satisfy Property (b) and some matrix (possibly the same)
does not satisfy Property (c).
Let $M$ be a matrix in ${\cal F}$ that does not satisfy Property (a). If $M$ does not satisfy Property (b) and does not satisfy Property (c) then 
we are done by Theorem~\ref{thm4}, so, without loss of generality, we may assume that $M$ satisfies Property (b). If $M$ satisfies
Property (c), then $m_{11} = m_{22} = \max\{ m_{11}, m_{22}, m_{12}, m_{21} \}$, which implies that Property (a) holds, a contradiction.
So assume that $M$ satisfies Property (b), but not Property (a) or Property (c).

Let $R$ be a matrix in ${\cal F}$ that does not satisfy Property (b). 
If $R$ satisfies Property (c), then we are done by Theorem~\ref{thm5}, so we may assume that this is not the case.
If $R$ satisfies Property (a), then we are done by Theorem~\ref{thm6}, so we may assume that this is not the case.
But now $R$ does not satisfy Property (a), Property (b) or Property (c) and we are done by Theorem~\ref{thm4}.
\qed

\section{Proving Case 3 of Theorem \ref{thm:max_NE-hard}}
Let $\gamma_A=1$ and let $\gamma_B=0$ and $G$ be a graph and let $(A,B)$ be a partition of $V(G)$, so that if $u \in Y \Leftrightarrow \gamma_u = \gamma_Y$, where $Y \in \{ A,B\}$. 
We want to find a Nash equilibrium,  $(X_{\one},X_{\two})$, of maximum welfare, where $X_i$ denotes the set of players that choose action $i \in \{\one, \two\}$. 

Let  $(X_{\one},X_{\two})$ be a Nash equilibrium of $G$. This implies that all vertices in $X_{\two} \cap A$ only have edges to vertices in $X_{\two}$ (as
otherwise it is not a Nash equilibrium). Note that this means that any connected component $C$ in $G[A]$ is either subset of $X_{\one}$ or a subset of $X_{\two}$, else one can find an edge from a vertex in $X_{\two} \cap A$ to a vertex in $X_{\one}$. 
Analogously,
every connected component of $G[B]$ is either a subset of $X_{\one}$ or a subset of $X_{\two}$. Finally, given a connected component $C_A$ of $G[A]$ and a connected component $C_B$ of $G[B]$, if there is an edge in $G$ between a vertex $u\in C_A$ and a vertex $v\in C_B$, then it is not possible that $C_B\subseteq X_{\one}$ and $C_A\subseteq X_{\two}$.


We now build a digraph $D$ as follows. Let $C_1^A, C_2^A, \ldots, C_l^A$ be the components of $G[A]$ and 
$C_1^B, C_2^B, \ldots, C_m^B$ the components of $G[B]$. Let $V(D)=\{s,t\} \cup \{a_1,a_2,\ldots,a_l\} \cup \{b_1,b_2,\ldots,b_m\}$.
For each $C_i^A$ add an arc from $s$ to $a_i$ of weight $2|E(C_i^A)|$ and for each $C_i^B$ add an arc from $b_i$ to $t$ of weight $2|E(C_i^B)|$.
Now for each $i \in \{1,2,\ldots,l\}$ and $j \in \{1,2, \ldots, m\}$ we add an arc from $a_i$ to $b_j$ of weight $|E(C_i^A,C_j^B)|$ and
if $|E(C_i^A,C_j^B)|>0$ then add an arc from $b_j$ to $a_i$ of infinite weight. This completes the construction of $D$.

Let $(S,T)$ be a $(s,t)$-cut (i.e., $s \in S$ and $t \in T$) in $D$ containing no arc of infinite weight. Let the weight of $(S,T)$ be $w(S,T)$.
We will now construct a Nash equilibrium $(X_{\one},X_{\two})$ of $G$.
For every $a_i \in S$ add all vertices in $C_i^A$ to $X_{\one}$ and for every $a_i \in T$ add all vertices in $C_i^A$ to $X_{\two}$.
Analogously, for every $b_j \in S$ add all vertices in $C_j^B$ to $X_{\one}$ and for every $b_j \in T$ add all vertices in $C_j^B$ to $X_{\two}$.
This defines a partition $(X_{\one},X_{\two})$.

As no arc of infinite weight  exists in the cut, we note that that $(X_{\one},X_{\two})$ is a Nash equilibrium.
We will now show that the welfare of $(X_{\one},X_{\two})$ is $w(X_{\one},X_{\two})=2|E(G)|-|E(A,B)| - w(S,T)$.
If $sa_i$ belongs to the cut then all arcs in $C_i^A$ contribute zero to the welfare and zero to the formula for $w(X_{\one},X_{\two})$.
If $b_jt$ belongs to the cut then all arcs in $C_j^B$ contribute zero to the welfare and zero to the formula for $w(X_{\one},X_{\two})$.
If $a_ib_j$ belongs to the cut then all arcs in $E(C_i^A,C_j^B)$ contribute zero to the welfare and zero to the formula for $w(X_{\one},X_{\two})$.
All other arcs contribute two to the welfare if both end-points lie in $A$ or both end-points lie in $B$ and if exactly one
end-point lies in $A$ and the other in $B$ then the edge contributes one to the welfare.
So the welfare of $(X_{\one},X_{\two})$ is indeed $w(X_{\one},X_{\two})=2|E(G)|-|E(A,B)| - w(S,T)$.

Conversely let $(Y_{\one},Y_{\two})$ be a Nash equilibrium of maximum weight. We saw above that all vertices in $C_i^A$ belong to $Y_{\one}$ or to $Y_{\two}$.
If they belong to $Y_{\one}$, then let $a_i$ belong to $S$ and otherwise let $a_i$ belong to $T$. Analogously if all
vertices of $C_j^B$ belong to $Y_{\one}$ then let $b_j$ belong to $S$ and otherwise let $b_j$ belong to $T$.
Furthermore let $s$ belong to $S$ and $t$ belong to $T$, which gives us a $(s,t)$-cut $(S,T)$.
We will show that the weight of the cut, $w(S,T)$, satisfies $w(Y_{\one},Y_{\two})=2|E(G)|-|E(A,B)| - w(S,T)$, where $w(Y_{\one},Y_{\two})$ is the welfare of
$(Y_{\one},Y_{\two})$. As $(Y_{\one},Y_{\two})$ is exactly the Nash equilibrium we obtained above if we had started with $(S,T)$, we note that
$w(Y_{\one},Y_{\two})=2|E(G)|-|E(A,B)| - w(S,T)$ indeed holds.

Therefore, if the maximum welfare Nash equilibrium has welfare ${\rm wel}^{opt}$ and the minimum weight $(s,t)$-cut has weight
${\rm cut}^{opt}$ then ${\rm wel}^{opt}=2|E(G)|-|E(A,B)| - {\rm cut}^{opt}$.
As we can find a minimum $(s,t)$-cut in polynomial time \cite{korte2011combinatorial}, this gives us a polynomial time algorithm for finding a maximum welfare Nash equilibrium
when  $\gamma_A=1$ and $\gamma_B=0$. So we are done.

\section{Proving Case 4 of Theorem \ref{thm:max_NE-hard}}

Recall that it suffices to prove the theorem for $0 < \gamma_B <1/2 < \gamma_A \le 1$.
Thus it is enough to prove Theorem \ref{mainNPwelfareNash} below.

We will start with two useful lemmas.

\begin{lemma}\label{lemmaAclique}
Let  $\gamma_A$ be any real such that $ 1/2 < \gamma_A \leq 1$.
Let $G$ be the complete graph with $a\ge 2$ vertices of type A (and no vertices of type B).
The maximum welfare is obtained when all players choose action $\one$. Furthermore, if $(X_{\one},X_{\two}) \not= (V(G),\emptyset)$ then the welfare is at least 
$\min\{2\gamma_A(a-1), a(a-1)(2 \gamma_A-1)\}$ less than the optimum.
\end{lemma}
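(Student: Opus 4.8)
The plan is to exploit the symmetry of the complete graph and collapse the welfare to a single-variable function counting deviators. Since $G$ is complete and every vertex is of type $A$, the welfare of a profile depends only on $j$, the number of players choosing action \two (equivalently, $a-j$ choose \one). On each edge both endpoints earn $\gamma_A$ when they both play \one, both earn $1-\gamma_A$ when they both play \two, and both earn $0$ when they disagree. First I would count the edges by type: there are $\binom{a-j}{2}$ edges of the form \one--\one, $\binom{j}{2}$ of the form \two--\two, and $j(a-j)$ mixed edges. Writing $W(j)$ for the resulting welfare gives
\[
W(j) = 2\gamma_A\binom{a-j}{2} + 2(1-\gamma_A)\binom{j}{2} = \gamma_A (a-j)(a-j-1) + (1-\gamma_A)\,j(j-1).
\]

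Next I would measure the loss relative to the all-\one profile, whose welfare is $W(0)=\gamma_A a(a-1)$. A short algebraic simplification (using $a(a-1)-(a-j)(a-j-1)=j(2a-j-1)$) yields the clean closed form
\[
W(0) - W(j) = j\bigl[\,2\gamma_A(a-1) + 1 - j\,\bigr] =: g(j),
\]
valid for every $j\in\{0,1,\dots,a\}$. The whole lemma then follows from elementary properties of $g$. Note that $g$ is a concave quadratic in $j$ with roots at $0$ and at $2\gamma_A(a-1)+1$. Because $\gamma_A>1/2$ and $a\ge 2$, the larger root satisfies $2\gamma_A(a-1)+1>a$, so every integer $j\in\{1,\dots,a\}$ lies strictly between the two roots and hence $g(j)>0$ there. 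This establishes the first assertion: the all-\one profile $(V(G),\emptyset)$ is the unique welfare maximizer.

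For the quantitative bound, the key observation is that the two expressions inside the stated minimum are exactly the endpoint values of $g$ on the feasible range: $g(1)=2\gamma_A(a-1)$ and $g(a)=a(a-1)(2\gamma_A-1)$. Since $g$ is concave, its minimum over the real interval $[1,a]$, and \emph{a fortiori} over the integers $\{1,\dots,a\}$, is attained at one of the two endpoints; concretely, for $j\in[1,a]$ concavity gives $g(j)\ge\min\{g(1),g(a)\}$. Therefore any profile with $(X_\one,X_\two)\neq(V(G),\emptyset)$, i.e.\ $j\ge 1$, incurs a loss $W(0)-W(j)=g(j)\ge\min\{2\gamma_A(a-1),\,a(a-1)(2\gamma_A-1)\}$, as required. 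I expect the main (though still routine) obstacle to be the algebraic simplification producing the closed form $g(j)$; once that identity is secured, recognizing the two target quantities as the endpoint values $g(1)$ and $g(a)$ and invoking concavity makes the remainder immediate.
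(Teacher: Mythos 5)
Your proof is correct and follows essentially the same route as the paper's: both derive the identity $W(0)-W(j)=j\bigl[2\gamma_A(a-1)+1-j\bigr]$ and then observe that this concave quadratic attains its minimum over $\{1,\dots,a\}$ at an endpoint, with $g(1)=2\gamma_A(a-1)$ and $g(a)=a(a-1)(2\gamma_A-1)$. The only cosmetic difference is that the paper argues the endpoint property via the derivative while you invoke concavity directly, and you add the (correct) root-location remark to make positivity of the loss explicit.
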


\pf
  Define $G$ as in the statement.  Assume that $(X_{\one},X_{\two})$ is a partition of $V(G)$. 
The welfare of this partition is denoted by $w(X_{\one},X_{\two})$. First assume that  $(X_{\one},X_{\two}) \not= (V(G),\emptyset)$.
Then the following holds, where $q=|X_{\two}|$,

\[
\begin{array}{rcl}
w(V(G),\emptyset) - w(X_{\one},X_{\two}) & = & 
{|X_{\two}| \choose 2}(2 \gamma_A) + |X_{\one}|\cdot |X_{\two}|(2\gamma_A) - {|X_{\two}| \choose 2}(2 (1-\gamma_A)) \\
& = & q(q-1)(\gamma_A-(1-\gamma_A)) + 2(a-q)q\gamma_A \\
& = & q\left( (q-1)(2\gamma_A-1) + 2a\gamma_A - 2q\gamma_A  \right)   \\
& = & q\left( 2q\gamma_A -q  -2 \gamma_A + 1 + 2a\gamma_A - 2q\gamma_A  \right)   \\
& = & q\left( 2\gamma_A(a-1) -q + 1  \right)   \\
\end{array}
\]

Consider the function $f(q)=q\left( 2\gamma_A(a-1) -q + 1  \right)$ and note that $f'(q)=-2q + 2\gamma_A(a-1)+1$, which implies that 
the minimum value of $f(q)$ in any interval is found in one of its endpoints. As $1 \leq q \leq a$ the minimum value of $f(q)$ is therefore
$f(1)$ or $f(a)$, which implies the following.

\[
w(V(G),\emptyset) - w(X_{\one},X_{\two})  \geq  \min\{f(1),f(a)\} = \min\{2\gamma_A(a-1), a(a-1)(2 \gamma_A-1)\}
\]

As the above minimum is positive, we note that if $(X_{\one},X_{\two}) \not= (V(G),\emptyset)$ then $(X_{\one},X_{\two})$ does not obtain the maximum welfare, so 
the maximum welfare is obtained when all players choose action $\one$. Furthermore, if  $(X_{\one},X_{\two}) \not= (V(G),\emptyset)$ then the welfare is at least
$\min\{2\gamma_A(a-1), a(a-1)(2 \gamma_A-1)\}$ less than the optimum as seen above.~\qed

\begin{corollary}\label{corBclique}
Let  $\gamma_B$ be any real such that $ 0 \leq \gamma_B < 1/2$.
Let $G$ be the complete graph with $b\ge 2$ vertices of type B (and no vertices of type A).
The maximum welfare is obtained when all players choose action $\two$ and if $(X_{\one},X_{\two}) \not= (\emptyset,V(G))$ then the welfare is at least
$\min\{2(1-\gamma_B)(b-1), b(b-1)(1-2\gamma_B)\}$ less than the optimum.
\end{corollary}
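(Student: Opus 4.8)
The plan is to deduce Corollary~\ref{corBclique} directly from Lemma~\ref{lemmaAclique} by exploiting the action-relabeling symmetry of threshold games, rather than redoing the quadratic estimate from scratch. The key observation is that interchanging the roles of the two actions \one and \two converts a type-$B$ clique with threshold $\gamma_B$ into a type-$A$ clique with threshold $1-\gamma_B$, and since $0 \le \gamma_B < 1/2$ forces $1/2 < 1-\gamma_B \le 1$, the resulting instance is exactly the one governed by Lemma~\ref{lemmaAclique}.

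First I would make the relabeling precise. For a strategy profile $\mathbf{s}$ of $G$, let $\overline{\mathbf{s}}$ be the profile in which every player's action is flipped. Swapping both the row and column index of the threshold payoff-matrix $\left[\begin{smallmatrix}\gamma_u & 0 \\ 0 & 1-\gamma_u\end{smallmatrix}\right]$ yields $\left[\begin{smallmatrix}1-\gamma_u & 0 \\ 0 & \gamma_u\end{smallmatrix}\right]$, which is precisely the threshold matrix for threshold $1-\gamma_u$. Let $G'$ be the complete graph on the same $b$ vertices but with every threshold changed from $\gamma_B$ to $1-\gamma_B$. Then for each edge the contribution of $\overline{\mathbf{s}}$ in $G'$ equals that of $\mathbf{s}$ in $G$: two adjacent players both on \one in $G$ earn $\gamma_u$ each (the $m_{11}$ entry), and in $G'$ they are both on \two and earn the $m_{22}'=\gamma_u$ entry; both on \two in $G$ earn $1-\gamma_u$, matching $m_{11}'=1-\gamma_u$ in $G'$; and differing players earn $0$ in both, since flipping both preserves disagreement. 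Summing over edges gives the welfare identity $\wcal(\mathbf{s}) = \wcal(\overline{\mathbf{s}})$ (the former computed in $G$, the latter in $G'$), so $\mathbf{s}\mapsto\overline{\mathbf{s}}$ is a welfare-preserving bijection between profiles of $G$ and of $G'$.

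Next I would transfer the distinguished profiles across this bijection. A $G$-partition $(X_\one,X_\two)$ corresponds to the $G'$-partition $(X_\two,X_\one)$; in particular the all-\two profile $(\emptyset,V(G))$ of $G$ corresponds to the all-\one profile $(V(G),\emptyset)$ of $G'$, and $(X_\one,X_\two)\neq(\emptyset,V(G))$ holds in $G$ if and only if its image differs from $(V(G),\emptyset)$ in $G'$. Since $1/2 < 1-\gamma_B \le 1$, Lemma~\ref{lemmaAclique} applies to $G'$ with $\gamma_A = 1-\gamma_B$ and $a=b$.

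Finally I would read off the conclusion. Lemma~\ref{lemmaAclique} states that the optimum of $G'$ is attained at $(V(G),\emptyset)$, so by the bijection the optimum of $G$ is attained at $(\emptyset,V(G))$, i.e.\ all players choosing \two. Moreover, every non-optimal profile of $G'$ falls short of the optimum by at least $\min\{2\gamma_A(a-1),\,a(a-1)(2\gamma_A-1)\}$; substituting $\gamma_A=1-\gamma_B$ and $a=b$ turns this into $\min\{2(1-\gamma_B)(b-1),\,b(b-1)(1-2\gamma_B)\}$, which is exactly the claimed bound, and the welfare gap is the same in $G$ by the identity above. The only point I would state carefully rather than wave at is the edge-by-edge welfare invariance under relabeling; everything else is a substitution. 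As an alternative, one could mirror the proof of Lemma~\ref{lemmaAclique} verbatim, set $q=|X_\one|$, derive $w(\emptyset,V(G))-w(X_\one,X_\two)=q\bigl(2(1-\gamma_B)(b-1)-q+1\bigr)$, and minimize this quadratic over $1\le q\le b$ at its endpoints; the symmetry argument is shorter and less error-prone.
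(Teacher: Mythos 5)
Your proof is correct and is essentially identical to the paper's: the paper also swaps the two actions, sets the new threshold to $1-\gamma_B$, and invokes Lemma~\ref{lemmaAclique} with the substitution $\gamma_A = 1-\gamma_B$, $a=b$ to obtain the stated bound. Your write-up merely makes the edge-by-edge welfare invariance under relabeling explicit, which the paper leaves implicit.
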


\pf
By replacing action $\two$ with $\one$ and vice versa and letting $\gamma_B^{new}=1-\gamma_B$ then we obtain an equivalent problem.
The following now completes the proof, by Lemma~\ref{lemmaAclique},
\[
 \min\{2\gamma_B^{new}(b-1), b(b-1)(2 \gamma_B^{new}-1)\} =  \min\{2(1-\gamma_B)(b-1), b(b-1)(1-2\gamma_B)\}
\]
\qed

\begin{lemma} \label{NashEq}
Let  $\gamma_A$ and $\gamma_B$ be given rationals, such that $0<\gamma_B \leq 1/2 \leq \gamma_A \leq 1$.
Let $G$ be any graph and let $(A,B)$ be a partition of $V(G)$.
The partition $(X_{\one},X_{\two})$ is a Nash equilibrium if and only if the following holds for all $u \in V(G)$.

\begin{itemize}
\item For every vertex $u \in X_{\one} \cap A$ we have $N(u) \cap X_{\two}=\emptyset$ or $\frac{|N(u) \cap X_{\one}|}{|N(u) \cap X_{\two}|} \geq \frac{1-\gamma_A}{\gamma_A}$.
\item For every vertex $u \in X_{\two} \cap A$ we have $N(u) \cap X_{\one}=\emptyset$ or $\frac{|N(u) \cap X_{\two}|}{|N(u) \cap X_{\one}|} \geq \frac{\gamma_A}{1-\gamma_A}$.  
\item For every vertex $u \in X_{\one} \cap B$ we have $N(u) \cap X_{\two}=\emptyset$ or $\frac{|N(u) \cap X_{\one}|}{|N(u) \cap X_{\two}|} \geq \frac{1-\gamma_B}{\gamma_B}$.
\item For every vertex $u \in X_{\two} \cap B$ we have $N(u) \cap X_{\one}=\emptyset$ or $\frac{|N(u) \cap X_{\two}|}{|N(u) \cap X_{\one}|} \geq \frac{\gamma_B}{1-\gamma_B}$.
\end{itemize}
\end{lemma}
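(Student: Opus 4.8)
The plan is to compute each player's payoff explicitly under each of their two available actions, and then read off the Nash condition as the requirement that every player weakly prefers the action they are currently using. Since the game is anonymous with threshold payoff-matrices, the whole argument reduces to one payoff calculation followed by a purely algebraic rearrangement, and every step is an equivalence, so both directions of the ``if and only if'' are proved at once.

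First I would determine the contribution to player $u$'s payoff from a single neighbour $v$. Evaluating $s_u^T \Pi^u s_v$ for the threshold matrix $\Pi^u$ over the four action-pairs gives: $u$ earns $\gamma_u$ when both play \one, earns $1-\gamma_u$ when both play \two, and earns $0$ whenever $u$ and $v$ choose different actions. Summing over $N(u)$, the total payoff of $u$ is $\gamma_u\cdot|N(u)\cap X_{\one}|$ if $u$ plays \one, and $(1-\gamma_u)\cdot|N(u)\cap X_{\two}|$ if $u$ plays \two. By the definition of Nash equilibrium, $(X_{\one},X_{\two})$ is an equilibrium exactly when no player gains by switching: for $u\in X_{\one}$ this is $\gamma_u|N(u)\cap X_{\one}|\ge (1-\gamma_u)|N(u)\cap X_{\two}|$, and for $u\in X_{\two}$ it is $(1-\gamma_u)|N(u)\cap X_{\two}|\ge \gamma_u|N(u)\cap X_{\one}|$. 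Substituting $\gamma_u=\gamma_A$ or $\gamma_u=\gamma_B$ according to the type of $u$ yields the four raw inequalities corresponding to the four bullets of the lemma.

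The only point requiring care is rearranging each raw inequality into the stated ratio form, and this is where the hypotheses do their work. Dividing the first inequality by $\gamma_A$ and by $|N(u)\cap X_{\two}|$ (and symmetrically for the others) preserves the inequality only when these quantities are positive; the assumption $0<\gamma_B\le 1/2\le\gamma_A\le 1$ guarantees $\gamma_A>0$ and $1-\gamma_B>0$, so the denominators appearing in the first and fourth bullets never vanish. The degenerate possibility $|N(u)\cap X_{\two}|=0$ (respectively $|N(u)\cap X_{\one}|=0$) is precisely absorbed by the leading disjunct ``$N(u)\cap X_{\two}=\emptyset$'' (respectively ``$N(u)\cap X_{\one}=\emptyset$'') in each bullet, where the raw inequality holds trivially because its right-hand side is $0$. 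The boundary value $\gamma_A=1$, for which $1-\gamma_A=0$, is consistent as well: the second bullet's ratio condition is then satisfiable only when $N(u)\cap X_{\one}=\emptyset$, matching the raw inequality $0\ge\gamma_A|N(u)\cap X_{\one}|$. I do not expect any genuine obstacle here beyond this bookkeeping of positivity and the empty-neighbourhood edge cases; the substance of the lemma is the single payoff computation in the first step.
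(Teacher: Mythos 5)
Your proposal is correct and follows essentially the same route as the paper: compute each player's payoff under their two actions from the threshold matrix, state the no-profitable-deviation condition as a comparison of these payoffs, and rewrite it in ratio form with the empty-neighbourhood disjunct absorbing the zero-denominator case. Your explicit treatment of the $\gamma_A=1$ boundary is slightly more careful than the paper's, but the substance is identical.
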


\pf 
  Let $(X_{\one},X_{\two})$ be a partition of $V(G)$.
  Let $u \in X_{\one} \cap A$ be arbitrary. If we move $u$ from $X_{\one}$ to $X_{\two}$ then $u$'s gain is $|N(u) \cap X_{\two}|(1-\gamma_A) - |N(u) \cap X_{\one}|\gamma_A$.
This gain is greater than zero if $\frac{|N(u) \cap X_{\one}|}{|N(u) \cap X_{\two}|} < \frac{1-\gamma_A}{\gamma_A}$, so $u$ satisfies the Nash equilibrium property 
if and only if $N(u) \cap X_{\two}=\emptyset$ or $\frac{|N(u) \cap X_{\one}|}{|N(u) \cap X_{\two}|} \geq \frac{1-\gamma_A}{\gamma_A}$.

  Now let $u \in X_{\two} \cap A$ be arbitrary.  If we move $u$ from $X_{\two}$ to $X_{\one}$ then $u$'s gain is $|N(u) \cap X_{\one}|\gamma_A - |N(u) \cap X_{\two}|(1-\gamma_A)$.
This gain is greater than zero if $\frac{|N(u) \cap X_{\two}|}{|N(u) \cap X_{\one}|} < \frac{\gamma_A}{1-\gamma_A}$, so $u$ satisfies the Nash equilibrium property 
if and only if $N(u) \cap X_{\one} = \emptyset$ or $\frac{|N(u) \cap X_{\two}|}{|N(u) \cap X_{\one}|} \geq \frac{\gamma_A}{1-\gamma_A}$.
 
  Now let $u \in X_{\one} \cap B$ be arbitrary. If we move $u$ from $X_{\one}$ to $X_{\two}$ then $u$'s gain is $|N(u) \cap X_{\two}|(1-\gamma_B) - |N(u) \cap X_{\one}|\gamma_B$.
This gain is greater than zero if $\frac{|N(u) \cap X_{\one}|}{|N(u) \cap X_{\two}|} < \frac{1-\gamma_B}{\gamma_B}$, so $u$ satisfies the Nash equilibrium property
if and only if $N(u) \cap X_{\two}=\emptyset$ or $\frac{|N(u) \cap X_{\one}|}{|N(u) \cap X_{\two}|} \geq \frac{1-\gamma_B}{\gamma_B}$.

  Now let $u \in X_{\two} \cap B$ be arbitrary.  If we move $u$ from $X_{\two}$ to $X_{\one}$ then $u$'s gain is $|N(u) \cap X_{\one}|\gamma_B - |N(u) \cap X_{\two}|(1-\gamma_B)$.
This gain is greater than zero if $\frac{|N(u) \cap X_{\two}|}{|N(u) \cap X_{\one}|} < \frac{\gamma_B}{1-\gamma_B}$, so $u$ satisfies the Nash equilibrium property
if and only if $N(u) \cap X_{\one} = \emptyset$ or $\frac{|N(u) \cap X_{\two}|}{|N(u) \cap X_{\one}|} \geq \frac{\gamma_B}{1-\gamma_B}$.~\qed

\begin{theorem}\label{mainNPwelfareNash} Let  $\gamma_A$ and $\gamma_B$ be reals such that $0<\gamma_B < 1/2 < \gamma_A \leq 1$.
Then it is \NP-hard to find a maximum welfare Nash equilibrium.
\end{theorem}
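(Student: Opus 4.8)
The plan is to reduce from \textsc{Minimum Traversal} (equivalently \textsc{$3$-Hitting Set}) on a $3$-uniform hypergraph $H$, which is \NP-hard, mirroring the construction sketched for Case~4 and illustrated in Figure~\ref{fig:G}. Given $H$ with vertex set $V(H)$ and edge set $E(H)$, I would build the graph $G$ (with its $A/B$ partition) from: two cliques $C_A$ (type $A$) and $C_B$ (type $B$); for every $u\in V(H)$ a ``choice'' vertex $u'\in B$ together with a pendant set $Z_u$ of $z$ type-$B$ vertices each joined only to $u'$; and for every $e\in E(H)$ a ``test'' vertex $r_e\in B$, joined to $u'$ exactly when $u\in e$, to $c_A$ vertices of $C_A$, and to $c_B$ vertices of $C_B$. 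The five size parameters $|C_A|,|C_B|,c_A,c_B,z$ are to be fixed as polynomial functions of $|V(H)|,|E(H)|$ and the (constant) bit-lengths of $\gamma_A,\gamma_B$; choosing them consistently is the crux of the argument.

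The correctness proof proceeds in three layers, each pinning down more of the structure of a welfare-maximizing Nash equilibrium $(X_{\one},X_{\two})$. First I would invoke Lemma~\ref{lemmaAclique} and Corollary~\ref{corBclique}: taking $|C_A|,|C_B|$ large enough that the internal welfare penalty they guarantee for a non-monochromatic clique strictly exceeds the total welfare carried by all edges incident to the cliques forces $V(C_A)\subseteq X_{\one}$ and $V(C_B)\subseteq X_{\two}$ in every welfare-optimal profile, hence in every optimal equilibrium. Second, with the cliques thus fixed, I would use Lemma~\ref{NashEq} to calibrate $c_A,c_B$ so that a test vertex $r_e$ (seeing $c_A$ neighbours in $X_{\one}$ from $C_A$, $c_B$ in $X_{\two}$ from $C_B$, and its three $V'$-neighbours) can sit in $X_{\one}$ as an equilibrium \emph{iff} at least one of those three $V'$-neighbours lies in $X_{\one}$; concretely this asks for integers with $\gamma_B(c_A+1)\ge(1-\gamma_B)(c_B+2)$ and $\gamma_B c_A<(1-\gamma_B)(c_B+3)$, a width-one window realizable because $\gamma_B$ is rational. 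Taking $c_A$ far larger than every other quantity makes the welfare lost on the $C_A$-incident edges when a single $r_e$ defects to $X_{\two}$ outweigh every compensating gain, so an optimal equilibrium forces every $r_e\in X_{\one}$; the equilibrium condition for $r_e$ then says precisely that $V'\cap X_{\one}$ is a \emph{traversal} of $H$.

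The third layer handles the $u'/Z_u$ gadget and the optimization. Taking $z$ large (but $\ll c_A$) guarantees, again via Lemma~\ref{NashEq}, that each $Z_u$-vertex copies $u'$ and that $u'$ is free to sit in either part as an equilibrium, while making a single choice vertex's welfare swing dominated by its $Z_u$ edges: moving $u'$ from $X_{\one}$ to $X_{\two}$ changes welfare by $2z(1-2\gamma_B)-2\gamma_B\deg_H(u)>0$, so each vertex kept in $X_{\one}$ costs $\approx 2z(1-2\gamma_B)$. Since $z$ dominates the $O(|E(H)|)$-sized secondary terms, among all equilibria with every $r_e\in X_{\one}$ the welfare is strictly decreasing in $|V'\cap X_{\one}|$; thus the welfare-maximizing equilibrium selects a \emph{minimum} traversal, and reading off $\{u : u'\in X_{\one}\}$ solves \textsc{Minimum Traversal} on $H$, establishing \NP-hardness.

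I expect the main obstacle to be exactly the simultaneous choice of the five parameters together with verifying that all desired inequalities hold at once: the hierarchy $|C_A|,|C_B|\gg c_A\sim\tfrac{1-\gamma_B}{\gamma_B}c_B\gg z\gg |V(H)|,|E(H)|$ must be tuned so that (i) the clique penalties beat all boundary welfare, (ii) the $c_A$-defection penalty beats the total $z$-scale savings over the at most $|V(H)|$ choice vertices, and (iii) the per-vertex $Z_u$ welfare term beats the $r_e$--$u'$ edge terms, all while keeping $c_A,c_B,z$ integral and the threshold window of the second layer satisfied. Carrying the bookkeeping that these magnitude separations are mutually compatible, and that every profile claimed along the way is genuinely a Nash equilibrium under Lemma~\ref{NashEq}, is where the real work lies; the remaining edge-counting is routine.
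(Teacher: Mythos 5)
Your proposal follows essentially the same route as the paper's proof: a reduction from \textsc{Minimum Traversal} on $3$-uniform hypergraphs using the identical gadget (cliques $C_A,C_B$, test vertices $r_e$ attached to both cliques, choice vertices $u'$ with pendant sets $Z_u$), with the same three-layer argument that pins the cliques via Lemma~\ref{lemmaAclique} and Corollary~\ref{corBclique}, forces $R\subseteq X_{\one}$ so that the equilibrium condition for each $r_e$ makes $V'\cap X_{\one}$ a traversal, and uses the $Z_u$-scale welfare penalty to select a minimum traversal. The parameter bookkeeping you defer is precisely what the paper's conditions (a)--(f) and Claims A--E carry out, so your plan is sound and matches the published argument.
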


\pf We will show the \NP-hardness result via a reduction from \textsc{Minimum Traversal} problem in $3$-uniform hypergraphs, which is known to be \NP-hard~\cite{garey1979computers}. 
 A $3$-uniform hypergraph $H$ has a set of vertices $V(H)$ and a set of hyperedges $E(H)$, where every edge $e\in E(H)$ is a subset of vertices of size exactly $3$. 
 A traversal of $H$ is a set of vertices $X$ such that $X\cap e\neq \emptyset$ for every edge $e\in E(H)$. \textsc{Minimum Traversal} problem asks to find a traversal of the minimum size.

Let  $H$ be a $3$-uniform hypergraph where we want to find a minimum transversal.
Let  $\gamma_A$ and $\gamma_B$ be given reals, such that $0<\gamma_B < 1/2 < \gamma_A \leq 1$.
We first define the following constants.

\begin{description}
\item[(a):] Let $\theta=6|E(H)| + 2|V(H)|\lceil 3|E(H)|\frac{1-\gamma_B}{\gamma_B(1-2\gamma_B)}\rceil$. 
\item[(b):] Let $x_B$ be the smallest possible positive integer such that $x_B(1-\gamma_B)(\gamma_A-\gamma_B)/\gamma_B > \theta$.
\item[(c):] Let $x_A$ be any integer such that the following holds (which is possible as $1/\gamma_B > 1$ (in fact, $1/\gamma_B > 2$).

\[
(x_B+3)\frac{1-\gamma_B}{\gamma_B} - \frac{1}{\gamma_B} < x_A < (x_B+3)\frac{1-\gamma_B}{\gamma_B}
\]

\item[(d):] Let $c_A$ be the smallest possible integer such that $c_A \geq x_A$ and $(c_A-1)(2\gamma_A-1) > \theta + 2|E(H)|\cdot(x_A+x_B)$.
\item[(e):] Let $c_B$ be the smallest possible integer such that $c_B \geq x_B$ and $(c_B-1)(1-2\gamma_B) > \theta + 2|E(H)|\cdot(x_A+x_B)$ and
$\frac{c_B-1}{|E(H)|} \geq  \frac{\gamma_B}{1-\gamma_B}$.  
\item[(f):] Let $z= \lceil 3|E(H)| \frac{1-\gamma_B}{\gamma_B(1-2\gamma_B)} \rceil$. Note that $\theta=6|E(H)| + 2z|V(H)|$.

\end{description}

We will now construct a graph $G$ and a partition $(A,B)$ of $V(G)$ such that a solution to our problem for $G$ will give us a minimum transversal in $H$. For an illustration, see Fig. \ref{fig:G}.

Let $C_A$ and $C_B$ be cliques, such that $|C_A|=c_a$ and $|C_B|=c_b$ (see (d) and (e)). 
For each edge $e \in E(H)$, let $r_e$ be a vertex and let $R=\cup_{e \in E(H)} \{r_e\}$.
For each edge $u \in E(H)$, let $u'$ be a vertex and let $V'=\cup_{u \in V(H)} \{u'\}$.
Furthermore let $Z_u$ denote a set of $z$ vertices and let $Z=\cup_{u \in V(H)} Z_u$.
We will now let $V(G)=V(C_A) \cup V(C_B) \cup R \cup V' \cup Z$.

Let $E_1$ denote the edges in $C_A \cup C_B$. For every vertex $r_e \in R$ add $x_A$ edges to $C_A$ and add $x_B$ vertices to $C_B$.
Let $E_2$ denote these edges.
For every $e \in E(H)$ add an edge from $r_e$ to $u'$ if and only if $u \in V(e)$ in $H$. Let $E_3$ denote these edges.
For every $u \in V(H)$ add all edges between $u'$ and $Z_u$ to $G$. Let $E_4$ denote these edges.
Note that $|E_4|=|V(H)|\cdot z = |V(H)|\lceil 3|E(H)|\frac{1-\gamma_B}{\gamma_B(1-2\gamma_B)}\rceil$ (by (f)).
This completes the description of $G$ and note that $E(G)=E_1 \cup E_2 \cup E_3 \cup E_4$.

Note that by (a)-(f) above all values of $\theta$, $x_B$, $x_A$, $c_A$, $c_B$ and $z$ are polynomial in $|V(H)|+|E(H)|$ (as $\gamma_A$ and $\gamma_B$ are considered 
constant). Therefore, $|V(G)|$ is also a polynomial in $|V(H)|+|E(H)|$. 
Let all vertices in $C_A$ belong to $A$ and all other vertices belong to $B$. That is $B = V(C_B) \cup R \cup V' \cup Z$.
This completes the definition of the partition $(A,B)$ of $V(G)$.

\begin{figure}[th]
  \begin{center}
\tikzstyle{vertexB}=[circle,draw, minimum size=10pt, scale=0.7, inner sep=0.9pt]
\tikzstyle{vertexA}=[rectangle,draw, minimum size=8pt, scale=0.9, inner sep=0.9pt]
\tikzstyle{vertexBs}=[circle,draw, minimum size=6pt, scale=0.5, inner sep=0.9pt]
\begin{tikzpicture}[scale=0.5]


  \draw (0,12) rectangle (8,14); \node at (4,13) {$C_A$};
  \node (a1) at (1,13.2) [vertexA]{};
  \node (a2) at (2,12.8) [vertexA]{};
  \node (a3) at (6,13.2) [vertexA]{};
  \node (a4) at (7,12.8) [vertexA]{};

  \draw (10,12) rectangle (18,14); \node at (14,13) {$C_B$};
  \node (a1) at (11,13.2) [vertexB]{};
  \node (a2) at (12,12.8) [vertexB]{};
  \node (a3) at (16,13.2) [vertexB]{};
  \node (a4) at (17,12.8) [vertexB]{};

  \node (r1) at (1,9) [vertexB]{$r_{e_1}$};
  \node (r2) at (5,9) [vertexB]{$r_{e_2}$};
  \node (r3) at (9,9) [vertexB]{$r_{e_3}$};
  \node at (13,9) {$\cdots$};
  \node (r4) at (17,9) [vertexB]{$r_{e_m}$};

  \draw[line width=0.03cm] (r1) to (2,12);
  \draw[line width=0.03cm] (r1) to (10.5,12);

  \draw[line width=0.03cm] (r2) to (3,12);
  \draw[line width=0.03cm] (r2) to (11.5,12);

  \draw[line width=0.03cm] (r3) to (4,12);
  \draw[line width=0.03cm] (r3) to (12.5,12);

  \draw[line width=0.03cm] (r4) to (6,12);
  \draw[line width=0.03cm] (r4) to (14.5,12);

  \node (u1) at (1,4) [vertexB]{$u_1'$};
  \node (u2) at (4,4) [vertexB]{$u_2'$};
  \node (u3) at (7,4) [vertexB]{$u_3'$};
  \node (u4) at (10,4) [vertexB]{$u_4'$};
  \node at (13.5,4) {$\cdots$};
  \node (u5) at (17,4) [vertexB]{$u_n'$};

  \draw[line width=0.03cm] (r1) to (u1);
  \draw[line width=0.03cm] (r1) to (u2);
  \draw[line width=0.03cm] (r1) to (u3);

  \draw[line width=0.03cm] (r2) to (u1);
  \draw[line width=0.03cm] (r2) to (u3);
  \draw[line width=0.03cm] (r2) to (u4);

  \draw[line width=0.03cm] (r3) to (8.6,8);
  \draw[line width=0.03cm] (r3) to (9,8);
  \draw[line width=0.03cm] (r3) to (9.4,8);

  \draw[line width=0.03cm] (r4) to (16.6,8);
  \draw[line width=0.03cm] (r4) to (17,8);
  \draw[line width=0.03cm] (r4) to (17.4,8);

  \draw (0,-1.5) rectangle (2,1); \node at (1,-0.7) {$Z_{u_1'}$};
  \node (z11) at (0.4,0.5) [vertexBs]{};
  \node (z12) at (1,0.5) [vertexBs]{};
  \node (z13) at (1.6,0.5) [vertexBs]{};
  \draw[line width=0.08cm] (u1) to (1,1);

  \draw (3,-1.5) rectangle (5,1); \node at (4,-0.7) {$Z_{u_2'}$};
  \node (z11) at (3.4,0.5) [vertexBs]{};
  \node (z12) at (4,0.5) [vertexBs]{};
  \node (z13) at (4.6,0.5) [vertexBs]{};
  \draw[line width=0.08cm] (u2) to (4,1);

  \draw (6,-1.5) rectangle (8,1); \node at (7,-0.7) {$Z_{u_3'}$};
  \node (z11) at (6.4,0.5) [vertexBs]{};
  \node (z12) at (7,0.5) [vertexBs]{};
  \node (z13) at (7.6,0.5) [vertexBs]{};
  \draw[line width=0.08cm] (u3) to (7,1);

  \draw (9,-1.5) rectangle (11,1); \node at (10,-0.7) {$Z_{u_4'}$};
  \node (z11) at (9.4,0.5) [vertexBs]{};
  \node (z12) at (10,0.5) [vertexBs]{};
  \node (z13) at (10.6,0.5) [vertexBs]{};
  \draw[line width=0.08cm] (u4) to (10,1);

  \draw (16,-1.5) rectangle (18,1); \node at (17,-0.7) {$Z_{u_n'}$};
  \node (z11) at (16.4,0.5) [vertexBs]{};
  \node (z12) at (17,0.5) [vertexBs]{};
  \node (z13) at (17.6,0.5) [vertexBs]{};
  \draw[line width=0.08cm] (u5) to (17,1);



 \end{tikzpicture}
\caption{The graph $G$ when $H$ is a $3$-uniform hypergraph with $m$ edges, including the edges $e_1=\{u_1,u_2,u_3\}$ and $e_2=\{u_1,u_3,u_4\}$.
The square vertices in $C_A$ denotes A-vertices and round vertices (everywhere else) denote B-vertices. Furthermore, the thick edges between
$u_i$ and $Z_{u_i}$ denotes that all edges are present.} \label{fig:G}
\end{center} \end{figure}
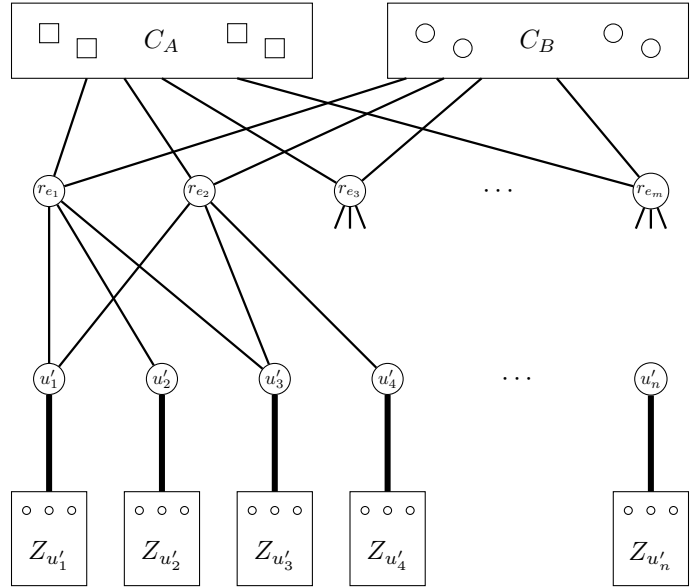

We will show that a welfare optimizing Nash equilibrium for $G$ gives us a minimum transversal for $H$, thereby proving the desired NP-hardness result. 
Assume we have a partition $(X_{\one},X_{\two})$ of $V(G)$ (where
all players in $X_{\one}$ choose action $\one$ and all players in $X_{\two}$ choose action $\two$).

For every edge $uv \in G$ let the weight $w(uv)$ of $uv$ be the increase in welfare achieved because of that edge. That is, 
the following holds.
\[
w(uv) = \left\{ 
\begin{array}{rclcl}
2\gamma_A & & \mbox{if $u,v \in X_{\one}$ and $u,v \in A$} \\
2\gamma_B & & \mbox{if $u,v \in X_{\one}$ and $u,v \in B$} \\
\gamma_A + \gamma_B  & & \mbox{if $u,v \in X_{\one}$ and $|\{u,v\} \cap A|=1$} \\
2(1-\gamma_A) & & \mbox{if $u,v \in X_{\two}$ and $u,v \in A$} \\
2(1-\gamma_B) & & \mbox{if $u,v \in X_{\two}$ and $u,v \in B$} \\
2-\gamma_A - \gamma_B  & & \mbox{if $u,v \in X_{\two}$ and $|\{u,v\} \cap A|=1$} \\
0 & \hspace{0.3cm} & otherwise \\
\end{array}
\right.
\]
Define $W^*$ be the the value of the optimal welfare of $G-E_3-E_4$.
We will now prove the following claims.

{\bf Claim A:} The optimal welfare of $G-E_3-E_4$ (with value $W^*$) is obtained for the partition $(X_{\one},X_{\two})$ if, and only if, the
following holds.

\begin{description}
 \item[(A)] $V(C_A) \cup R \subseteq X_{\one}$ and $V(C_B) \subseteq X_{\two}$.
\end{description}

Furthermore any partition that does not satisfy (A) above will have welfare at most $W^* - \theta$ (see (a)).

\2

{\bf Proof of Claim A:} Let $C^+$ be the component of $G-E_3-E_4$ containing the vertices $V(C_A) \cup V(C_B) \cup R$.
Let $W^+$ be the value of the welfare for $C^+$ for the partition $(V(C_A) \cup R,V(C_B))$ and let $(X_{\one},X_{\two})$
be any partition  $G-E_3-E_4$.
Let $W_A^+$ be the optimal welfare for $C_A$ and let  $W_B^+$ be the optimal welfare for $C_B$. 

Note that $\min\{2\gamma_A(c_A-1), c_A(c_A-1)(2 \gamma_A-1)\} \geq (c_A-1)(2 \gamma_A-1)$ as $2\gamma_A>1 \geq 2 \gamma_A-1$.
By Lemma~\ref{lemmaAclique} we note that if $V(C_A) \not\subseteq X_{\one}$ then the welfare of $C_A$ is at most
the following (by (d)):

\[
\begin{array}{rcl}
W_A^+ - \min\{2\gamma_A(c_A-1), c_A(c_A-1)(2 \gamma_A-1)\} & \leq &  W_A^+ - (c_A-1)(2 \gamma_A-1) \\
& < & W_A^+ - (\theta + 2|E(H)|\cdot(x_A+x_B)) \\
\end{array}
\]

Analogously, by Corollary~\ref{corBclique} and (e) we note that if  $V(C_B) \not\subseteq X_{\two}$ then the welfare of $C_B$ is at most
the following:

\[
\begin{array}{rcl}
W_B^+ - \min\{2(1-\gamma_B)(c_B-1), c_B(c_B-1)(1-2\gamma_B)\} & \leq &  W_B^+ - (c_B-1)(1-2 \gamma_B) \\
& < &  W_B^+ - (\theta + 2|E(H)|\cdot(x_A+x_B)) \\
\end{array}
\]

As the maximum welfare we can obtain from the edges in $E_2$ is not more than $2|E(H)|\cdot(x_A+x_B)$ we note that if $V(C_A) \not\subseteq X_{\one}$
or $V(C_B) \not\subseteq X_{\two}$ then the welfare of $(X_{\one},X_{\two})$ is at least $\theta$ less than $W_A^+ + W_B^+$ and therefore also 
at least $\theta$ less than $W^*$. So we may assume that $V(C_A) \subseteq X_{\one}$ and $V(C_B) \subseteq X_{\two}$.

Let $r_e \in R$ be arbitrary and assume that $r_e \in X_{\two}$. By moving $r_e$ to $X_{\one}$ we would increase the welfare of $G-E_3-E_4$ by 
$x_A (\gamma_A + \gamma_B) - x_B (2(1-\gamma_B))$. By (c) we note that $(x_B+3)\frac{1-\gamma_B}{\gamma_B} - \frac{1}{\gamma_B} < x_A$,
which implies the following (by (b)):

\[
\begin{array}{rcl}
x_A (\gamma_A + \gamma_B) - x_B (2(1-\gamma_B)) & \geq & \left( (x_B+3)\frac{1-\gamma_B}{\gamma_B} - \frac{1}{\gamma_B} \right) (\gamma_A + \gamma_B) - 2x_B (1-\gamma_B)) \\
& = & x_B \left( \frac{1-\gamma_B}{\gamma_B}(\gamma_A + \gamma_B) - 2(1-\gamma_B)) \right)    + \left( 3 \frac{1-\gamma_B}{\gamma_B}  - \frac{1}{\gamma_B} \right) (\gamma_A + \gamma_B) \\
& = & x_B (1- \gamma_B)\left( \frac{\gamma_A+\gamma_B}{\gamma_B} - 2 \right)    + \left( \frac{2-3\gamma_B}{\gamma_B} \right) (\gamma_A + \gamma_B) \\
& \geq & x_B (1- \gamma_B)\left( \frac{\gamma_A-\gamma_B}{\gamma_B} \right)  \\
& > & \theta \\
\end{array}
\]

This completes the proof of Claim~A.

\2

{\bf Claim B:} Let $(X_{\one},X_{\two})$ be any Nash equilibrium of $G$ that satisfies condition (A) in Claim~A, then condition (B) below holds.

\begin{description}
 \item[(B)] For every $e \in E(H)$ the vertex $r_e$ has a neighbour in $V' \cap X_{\one}$.
\end{description}

\2

{\bf Proof of Claim B:} Recall that by (A) in Claim~A we have  $V(C_A) \cup R \subseteq X_{\one}$ and $V(C_B) \subseteq X_{\two}$.
For the sake of contradiction assume that there exists a $r_e \in R$ such that all three neighbours of $r_e$ in $V'$ belong to $X_{\two}$. 
We will now show that $r_e$ does not satisfy the Nash equilibrium property.

The vertex $r_e$ currently has $x_A$ neighbours in $X_{\one}$ and $x_B+3$ neighbours in $X_{\two}$. So its current welfare is
$x_A \gamma_B$, but if we move $r_e$ to $X_{\two}$ then its welfare will become $(x_B+3)(1-\gamma_B)$. By (c), the following holds.

\[
(x_B+3)(1-\gamma_B) - x_A \gamma_B  >  (x_B+3)(1-\gamma_B) - \left((x_B+3)\frac{1-\gamma_B}{\gamma_B}\right) \gamma_B = 0 
\]

So, $r_e$ does not satisfy the Nash equilibrium property, a contradiction. This completes the proof of Claim~B.

\2

{\bf Claim C:} Let $(X_{\one},X_{\two})$ be any partition of $V(G)$ that satisfies condition (A) in Claim~A, condition (B) in Claim~B and condition (C) below.
Then $(X_{\one},X_{\two})$ is a Nash equilibrium.

\begin{description}
 \item[(C)] For every $u \in V(H)$, either $\{u'\} \cup Z_u \subseteq X_{\one}$ or $\{u'\} \cup Z_u \subseteq X_{\two}$.
\end{description}

\2

{\bf Proof of Claim C:} As vertices in $Z$ are only adjacent to vertices in the same partite set in $(X_{\one},X_{\two})$ as themselves they
do satisfy the Nash equilibrium condition. By Lemma~\ref{NashEq} we therefore need to show that the following holds for all $v_A \in V(C_A)$ and all $v_B \in V(C_B)$ and
all $r_e \in R$ and all $u' \in V'$ and all $z \in Z$.

\begin{itemize}
\item $\frac{|N(v_A) \cap X_{\one}|}{|N(v_A) \cap X_{\two}|} \geq \frac{1-\gamma_A}{\gamma_A}$ or $N(v_A) \cap X_{\two}=\emptyset$ (as $v_A \in A \cap X_{\one}$).
\item $\frac{|N(v_B) \cap X_{\two}|}{|N(v_B) \cap X_{\one}|} \geq \frac{\gamma_B}{1-\gamma_B}$ or $N(v_B) \cap X_{\one}=\emptyset$ (as $v_B \in B \cap X_{\two}$).
\item $\frac{|N(r_e) \cap X_{\one}|}{|N(r_e) \cap X_{\two}|} \geq \frac{1-\gamma_B}{\gamma_B}$ or $N(r_e) \cap X_{\two}=\emptyset$ (as $r_e \in B \cap X_{\one}$).
\item If $u' \in X_{\one}$, then $\frac{|N(u') \cap X_{\one}|}{|N(u') \cap X_{\two}|} \geq \frac{1-\gamma_B}{\gamma_B}$ or $N(u') \cap X_{\two}=\emptyset$ (as $u' \in B \cap X_{\one}$).
\item If $u' \in X_{\two}$, then $\frac{|N(u') \cap X_{\two}|}{|N(u') \cap X_{\one}|} \geq \frac{\gamma_B}{1-\gamma_B}$ or $N(u') \cap X_{\one}=\emptyset$ (as $u' \in B \cap X_{\two}$).
\end{itemize}

The first equation holds as $N(v_A) \cap X_{\two}=\emptyset$. The second equation holds, due to the following (see (e)):
\[
\frac{|N(v_B) \cap X_{\two}|}{|N(v_B) \cap X_{\one}|} \geq  \frac{|V(C_B)|-1}{|E(H)|} \geq  \frac{\gamma_B}{1-\gamma_B} 
\]
The third equation holds because of the following (see (c)):
\[
\begin{array}{rcl}
\frac{|N(r_e) \cap X_{\one}|}{|N(r_e) \cap X_{\two}|} 
& \geq & \frac{x_A+1}{x_B+2} \\
& > & \frac{(x_B+3)\frac{1-\gamma_B}{\gamma_B} - \frac{1}{\gamma_B} +1}{x_B+2} \\
& = & \frac{(x_B+3)(1-\gamma_B) - 1 + \gamma_B}{\gamma_B(x_B+2)} \\
& = & \frac{1-\gamma_B}{\gamma_B} \\
\end{array}
\]

If $u' \in V' \cap X_{\one}$ then all neighbours of $u'$ belong to $X_{\one}$, so $N(u') \cap X_{\two}=\emptyset$, which satisfies the fourth equation.
Let $u' \in V' \cap X_{\two}$ be arbitrary.
By (f) we note that the following holds (as $\gamma_B<1/2$).
\[
z = \lceil 3|E(H)| \frac{1-\gamma_B}{\gamma_B(1-2\gamma_B)} \rceil > |E(H)| \frac{\gamma_B}{1-\gamma_B}+1 
\]
 The fifth equation now holds because of the following:
\[
\frac{|N(u') \cap X_{\two}|}{|N(u') \cap X_{\one}|} \geq \frac{z-1}{|E(H)|} > \frac{\gamma_B}{1-\gamma_B}
\]
This completes the proof of Claim~C.

\2

Let $T$ be any transversal of $H$.
Define a partition $(X_{\one},X_{\two})$ of $V(G)$ as follows. 
For every $u \in T$ let $V(Z_u) \cup \{u'\}$ belong to $X_{\one}$ and for 
every $u \in V(H) \setminus T$ let $V(Z_u) \cup \{u'\}$ belong to $X_{\two}$.
Then add $V(C_A) \cup R$ to $X_{\one}$ and add $V(C_B)$ to $X_{\two}$. 
We call this the {\em $G$-extension} of $T$.

\2

{\bf Claim D:} The $G$-extension of $T$ is a Nash equilibrium with the following welfare, where $\epsilon_T$ is some value satisfying $0 \leq \epsilon_T  < 2z (1 - 2\gamma_B)$.
\[
W^* + \epsilon_T +|V(H)|\cdot z (2(1-\gamma_B)) - |T| \cdot  2z (1 - 2\gamma_B)
\]

\2

{\bf Proof of Claim D:} Let $(X_{\one},X_{\two})$ be the $G$-extension of a transversal $T$ in $H$.
As conditions (A), (B) and (C) are all satisfied in Claims~A, B and C, respectively we note that by Claim~C 
$(X_{\one},X_{\two})$ is a Nash equilibrium.

By Claim~A we note that $E_1 \cup E_2$ contribute $W^*$ towards the welfare of $(X_{\one},X_{\two})$.
Let $\epsilon_T$ be the welfare that the edges $E_3$ contribute and note that $0 \leq \epsilon_T \leq 2|E_3| = 6|E(H)|$. 
By (f), $z= \lceil 3|E(H)| \frac{1-\gamma_B}{\gamma_B(1-2\gamma_B)} \rceil > \frac{3|E(H)|}{1-2\gamma_B} $, which implies that
$0 \leq \epsilon_T \leq 6|E(H)| < 2z (1 - 2\gamma_B)$. 

We will now compute the contribution to the welfare from the edges in $E_4$. Let $u' \in V'$ be arbitrary.
If $u' \in X_{\one}$ then the edge between $u'$ and $Z_u$ contribute $z(2\gamma_B)$ to the welfare.
If $u' \in X_{\two}$ then the edge between $u'$ and $Z_u$ contribute $z(2(1-\gamma_B))$ to the welfare.
As there are $|T|$ vertices in $X_{\one} \cap V'$ we therefore get the following welfare of $(X_{\one},X_{\two})$.
\[
W^* + \epsilon_T +(|V(H)|-|T|)\cdot z (2(1-\gamma_B)) + |T| \cdot  2z\gamma_B
= W^* + \epsilon_T +|V(H)|\cdot z (2(1-\gamma_B)) - |T| \cdot  2z(1-2\gamma_B)
\]
This completes the proof of Claim~D.


\2

{\bf Claim E:} If we can find the welfare optimum Nash equilibrium of $G$ then we can find the minimum transversal in $H$. This implies that 
finding the welfare optimum Nash equilibrium for $\gamma_A$ and $\gamma_B$ is \NP-hard.

\2

{\bf Proof of Claim E:} Assume that $(X_{\one}^{opt},X_{\two}^{opt})$ is a welfare optimal Nash equilibrium of $G$ and let
$w(X_{\one}^{opt},X_{\two}^{opt})$ be the welfare of $(X_{\one}^{opt},X_{\two}^{opt})$. Let $T^{opt}$ be a minimum transversal of $H$.
We will show that the following holds, which completes the proof of Claim~E.
\[
|T^{opt}| = \left\lceil \frac{ W^* + |V(H)|\cdot z (2-2\gamma_B)) - w(X_{\one}^{opt},X_{\two}^{opt}) }{2z(1-2\gamma_B)} \right\rceil
\]
Let $(X_{\one},X_{\two})$ be the $G$-extension of $T^{opt}$, and note that, by Claim~D, $(X_{\one},X_{\two})$ is a Nash equilibrium satisfying the following,
where $0 \leq \epsilon_{T^{opt}} < 2z(1-2\gamma_B)$.
\[
W(X_{\one},X_{\two}) = W^* + \epsilon_{T^{opt}} +|V(H)|\cdot z (2-2\gamma_B)) - |T^{opt}| \cdot  2z(1-2\gamma_B)
\]
This implies the following as $\epsilon_{T^{opt}} < 2z(1-2\gamma_B)$ and $w(X_{\one}^{opt},X_{\two}^{opt}) \geq w(X_{\one},X_{\two})$.
\[
\begin{array}{rcl} \vspace{0.2cm}
|T^{opt}| & = & \frac{ W^* + \epsilon_{T^{opt}} + |V(H)|\cdot z (2-2\gamma_B)) - w(X_{\one},X_{\two}) }{2z(1-2\gamma_B)} \\ \vspace{0.2cm}
& = &  \left\lceil \frac{ W^* + |V(H)|\cdot z (2-2\gamma_B)) - w(X_{\one},X_{\two}) }{2z(1-2\gamma_B)} \right\rceil \hspace{2.5cm} (*) \\
& \geq  & \left\lceil \frac{ W^* + |V(H)|\cdot z (2-2\gamma_B)) - w(X_{\one}^{opt},X_{\two}^{opt}) }{2z(1-2\gamma_B)} \right\rceil \\
\end{array}
\]
By Claim~D $w(X_{\one}^{opt},X_{\two}^{opt}) \geq w(X_{\one},X_{\two}) > W^*$. 
If conditions (A) in Claim~A is not satisfied for $(X_{\one}^{opt},X_{\two}^{opt})$ then by Claim~A the welfare of the edges $E_1 \cup E_2$ is at most $W^* - \theta$. 
This implies the following (by (f)).
\[
\begin{array}{rcl}
w(X_{\one}^{opt},X_{\two}^{opt}) & \leq & W^* - \theta + 2(|E_3|+|E_4|) \\
 &  =  & W^* - (6|E(H)| + 2z|V(H)|) +  2(3|E(H)| + z \cdot |V(H)|) \\
 &  = &  W^* \\
\end{array}
\]
This contradiction implies that condition (A) in Claim~A must be satisfied for $(X_{\one}^{opt},X_{\two}^{opt})$.
As $(X_{\one}^{opt},X_{\two}^{opt})$ is a Nash equilibrium, Claim~B implies that condition (B) in Claim~B holds.

Let $Y$ contain all vertices $u \in V(H)$ that satisfy $u' \in V' \cap X_{\one}$. By condition~(B) in Claim~B 
we note that $Y$ is a transversal in $H$. 
Let $(Y_{\one},Y_{\two})$ be the $G$-extension of $Y$, and note that, by Claim~D, $(Y_{\one},Y_{\two})$ is a Nash equilibrium satisfying the following,
where $0 \leq \epsilon_{Y} < 2z(1-2\gamma_B)$:
\[
w(X_{\one}^{opt},X_{\two}^{opt}) \geq W(Y_{\one},Y_{\two}) = W^* + \epsilon_{Y} +|V(H)|\cdot z (2-2\gamma_B)) - |Y| \cdot  2z(1-2\gamma_B)
\]
This implies the following (as $\epsilon_{Y} < 2z(1-2\gamma_B)$),
\[
\begin{array}{rcl} \vspace{0.2cm}
|T^{opt}| \; \leq \; |Y| & = & \frac{ W^* + \epsilon_{Y} + |V(H)|\cdot z (2-2\gamma_B)) - w(X_{\one}^{opt},X_{\two}^{opt}) }{2z(1-2\gamma_B)}  \\
& = & \left\lceil \frac{ W^* + |V(H)|\cdot z (2-2\gamma_B)) - w(X_{\one}^{opt},X_{\two}^{opt}) }{2z(1-2\gamma_B)} \right\rceil \\
\end{array}
\]

By $(*)$ this implies that the following holds, which completes the proof of Claim~E and this theorem.
\[
|T^{opt}| = \left\lceil \frac{ W^* + |V(H)|\cdot z (2-2\gamma_B)) - w(X_{\one}^{opt},X_{\two}^{opt}) }{2z(1-2\gamma_B)} \right\rceil 
\]
\mbox{}\qed{}

\section{Applications of MWDP in Graph Theory and Related Areas}

\newcommand{\EquivPolyMM}[1]{{ {\bf Equivalence to MWDP:} #1}}

\begin{description}
\item[Weighted Partition:] Given a digraph, $D$, find a partition $(X_1,X_2)$ of $V(D)$ such that the following is maximized where $A(X,Y)$ denotes the set of arcs with tail in $X$ and head in $Y$ (where $a,b,c,d$ are given constants).

\[
 a \cdot |A(X_1,X_1)| +  b \cdot |A(X_1,X_2)| +  c \cdot |A(X_2,X_1)| +  d \cdot |A(X_2,X_2)|
\]

\EquivPolyMM{ Let $M = {\small \left[\begin{array}{cc}
a & b \\
c & d \\
\end{array}\right]}$ for every arc. This equivalence goes both ways so we have a dichotomy for this problem.
}

\item[Max Cut:] Given a graph, $G$, find a spanning bipartite subgraph of $G$ with the maximum number of edges.

\EquivPolyMM{ Let $M = {\small \left[\begin{array}{cc}
0 & 1 \\
1 & 0 \\
\end{array}\right]}$ for any orientation of each edge. This equivalence goes both ways so our dichotomy for this problem implies this problem is \NP-hard.
}

\item[Directed Max Cut:] Given a digraph, $D$, find a partition $(X_1,X_2)$ of $V(D)$ with the maximum number of 
arcs from $X_1$ to $X_2$.

\EquivPolyMM{ Let $M = {\small \left[\begin{array}{cc}
0 & 1 \\
0 & 0 \\
\end{array}\right]}$ for every arc. This equivalence goes both ways so our dichotomy for this problem implies this problem is \NP-hard.
}

\item[Closeness to Eulerian:] Given a digraph, $D$, find a partition $(X_1,X_2)$ of $V(D)$ where the difference between the number of arcs from $X_1$ to $X_2$ and  the number of arcs from $X_2$ to $X_1$ is maximized. Note that this value is zero for Eulerian digraphs.

This value can also be shown to be equal to the minimum number of paths that need to be added to $D$ in order to make it Eulerian (so is the number of paths the Chiness Postman Problem Algorithm adds).

\EquivPolyMM{ Let $M = {\small \left[\begin{array}{cc}
0 & 1 \\
-1 & 0 \\
\end{array}\right]}$ for every arc or if we do not allow negative values we can add 1 to all values and get 
 $M' = {\small \left[\begin{array}{cc}
1 & 2 \\
0 & 1 \\
\end{array}\right]}$.  This equvalence goes both ways so our dichotomy for this problem implies this problem is polynomial.

}

\item[Directed Min $(s,t)$-cut:] Given a digraph, $D$, with $s,t \in V(D)$, find a $(s,t)$-partition $(X_1,X_2)$ (i.e., $s \in X_1$ and $t \in X_2$) with the fewest number of arcs from $X_1$ to $X_2$.

This is equivalent to finding the largest number of arc-disjoint paths from $s$ to $t$ (by Menger's Theorem).

\EquivPolyMM{ Let $M = {\small \left[\begin{array}{cc}
1 & 0 \\
1 & 1 \\
\end{array}\right]}$ and 
 $S = {\small \left[\begin{array}{cc}
|E(G)| & 0 \\
0 & 0 \\
\end{array}\right]}$ and 
 $T = {\small \left[\begin{array}{cc}
0 & 0 \\
0 & |E(G)| \\
\end{array}\right]}$. All arcs of $D$ get associated with matrix $M$. We then add a new vertex $s'$ and the arc $s's$ which we associate with matrix $S$. We also add a new vertex $t'$ and the arc $tt'$ which we associate with matrix $T$.
Now the maximum value we can obtain is $3|A(D)|$ minus the size of a minimum $(s,t)$-cut. So by our dichotomy result this is polynomial.
}

\item[Min $(s,t)$-cut:] Given a graph, $G$, with $s,t \in V(G)$, find a $(s,t)$-partition $(X_1,X_2)$ (ie $s \in X_1$ and $t \in X_2$) with the fewest number of edges between $X_1$ and $X_2$.

This is equvalent to finding the fewest number of edge-disjoint paths between $s$ and $t$ (by Menger's Theorem).

\EquivPolyMM{This follows from the directed min $(s,t)$-cut problem (by replacing each edge by a directed $2$-cycle). But it can also be shown to be polynomial directly using our dichotomy result by letting
$M = {\small \left[\begin{array}{cc}
1 & 0 \\
0 & 1 \\
\end{array}\right]}$ and 
 $S = {\small \left[\begin{array}{cc}
|E(G)| & 0 \\
0 & 0 \\
\end{array}\right]}$ and 
 $T = {\small \left[\begin{array}{cc}
0 & 0 \\
0 & |E(G)| \\
\end{array}\right]}$.}

\item[$2$-color-partition:] Given a $2$-edge-colored graph, $G$, find a partition $(X_1,X_2)$ which maximizes the 
sum of the number of edges in $X_1$ of color one and the number of edges in $X_2$ of color two.

\EquivPolyMM{ Let $M_1 = {\small \left[\begin{array}{cc}
1 & 0 \\
0 & 0 \\
\end{array}\right]}$ and let
 $M_2 = {\small \left[\begin{array}{cc}
0 & 0 \\
0 & 1 \\
\end{array}\right]}$.  By associating $M_1$ to any orientation of each edge of color one and 
associating $M_2$ to any orientation of each edge of color two we note that 
our dichotomy implies that this problem is polynomial.
}

\item[Max Average Degree:] Given a graph, $G$, and an integer $k$,  find a vertex set $X \subseteq V(G)$ such that 
the induced subgraph $G[X]$ has average degree strictly greater than $k$. 

\EquivPolyMM{ Let $M_1 = {\small \left[\begin{array}{cc}
k & 0 \\
0 & 0 \\
\end{array}\right]}$ and let  
$M_2 = {\small \left[\begin{array}{cc}
0 & 0 \\
0 & 2 \\
\end{array}\right]}$.

We will now create a digraph $D$ as follows.
Initially let $D$ be any orientation of $G$. Then,
for each vertex $u \in V(G)$ add a new vertex, $v_u$, and the arc $u v_u$ to $D$. 
This defines $D$. Note that $|V(D)|=2|V(G)|$.

Associate $M_1$ to the arc $u v_u$ for all $u \in V(G)$.
Associate $M_2$ to all other arcs of $D$.

Assume that $(X_1,X_2)$ is an arbitrary solution for $D$.
We may assume that $u$ and $v_u$ belong to the same set in $(X_1,X_2)$.
Let $W=V(G) \cap X_2$. Let $e(W,W)$ denote the number of edges in $G[W]$.
Then the welfare of $(X_1,X_2)$ is the following.

\[
w(X_1,X_2) = k |V(G)| - k |W| + 2e(W,W) 
\]

So $w(X_1,X_2)>k |V(G)|$ if and only if $2e(W,W)> k |W|$, which is equivalent to $k < \frac{2e(W,W)}{|W|}$. 
As $2e(W,W)/|W| = ( \sum_{u \in W} d_{G[W]}(u) )/|W|$ we note that the average degree in $G[W]$ is greater than 
$k$ if and only if $w(X_1,X_2)>k |V(G)|$.

So, by our dichotomy, the Max average degree problem is polynomial.
}

\item[Max Density:] Given a graph, $G$, find a vertex set $X \subseteq V(G)$ such that the number of edges divided by the number of vertices in
the induced subgraph $G[X]$ is maximum
possible.

\EquivPolyMM{ 
This is polynomial by the above result on the Max average degree problem as $e(X,X)/|X|$ is maximum if and only if $2e(X,X)/|X|$ is maximum. 
}


\item[$2$-color-difference:] We are given a edge-weighted $2$-edge-colored graph, $G$, where $w_i(X)$ denotes the weight of all edges in  a vertex set $X$ of color $i$.
Find a set $X \subseteq V(G)$, which maximizes $w_2(X)-w_1(X)$.

\EquivPolyMM{ Let $M_1 = {\small \left[\begin{array}{cc}
1 & 0 \\
0 & 0 \\
\end{array}\right]}$ and let
 $M_2 = {\small \left[\begin{array}{cc}
-1 & 0 \\
0 & 0 \\
\end{array}\right]}$. 
 
 Note that if we do not want negative weights in $M_2$ we can add 1 to all entries without changing the problem (the value of the result just increases by 1 for every 
arc associated with $M_2$).

 Given a digraph $D$ where every arc, $uv$, is associated with a cost $c_{uv}$ and a matrix $M_{uv} \in \{M_1,M_2\}$ we can reduce to an instance, $G$,
of $2$-color-difference as follows. Costs in $D$ become weights in $G$.

  Let $G$ be the underlying graph of $D$ and for every arc $uv \in A(D)$ we color $uv$ with color 2 in $G$ if $M_{uv}=M_1$ and we color
$uv$ with color 1, otherwise. Let $(X_1,X_2)$ be a solution for $D$.
Then the weight of $(X_1,X_2)$ in $D$ is the sum of all the costs  of arcs in $X_1$ associated with $M_1$ minus the sum of the costs of arcs in $X_1$ associated with $M_2$.
However this is exactly $w_2(X_1)-w_1(X_1)$ in $G$. Thus, an optimal solution for the 2-color-difference problem corresponds to an optimal solution for {\bf MWDP}({$M_1$,$M_2$}).  Therefore, as our reduction is   polynomial, our dichotomy result shows that the 2-color-difference problem is \NP-hard.
}
\end{description}
\end{document}

{\color{red}
PS: Note that $2$-color-difference generalizes {\em independent set}, due to the following.
Let $G$ be any graph and create a new graph, $G^*$, by coloring all edges of $G$ with color 1.
Then for each vertex, $u$, in $G$, we add a new vertex $v_u$ to $G$ and the edge $u v_u$ of color 2.
Now let all edges of color 2 have weight 1 and all edges of color 1 have large (greater than $|V(G)|$) weight. 
Then the optimal value of $w_2(X)-w_1(X)$, is exactly the size of a maximum independent set.  
}
}

\end{description}

\end{document}